\documentclass[final,3p,times]{elsarticle}

\usepackage{amsmath,hyperref}
\usepackage{lineno}

\usepackage{epstopdf}
\journal{Journal}

\usepackage{lineno,hyperref}
\usepackage{graphicx}
\usepackage{dcolumn}
\usepackage{bm}
\usepackage{epsfig}
\usepackage{booktabs}
\usepackage{subfigure}
\usepackage{graphics}
\usepackage{amssymb,enumitem}
\usepackage{amsmath}
\usepackage{array}
\usepackage{color}
\usepackage{booktabs}
\usepackage{multirow}
\usepackage{caption}
\usepackage{chngpage}
\usepackage{subfigure}
\usepackage{mathrsfs,gensymb,float}

\usepackage{verbatim}

\newtheorem{theorem}{Theorem}

\biboptions{numbers,sort&compress}
\modulolinenumbers[1]
\begin{document}

	\captionsetup[figure]{labelfont={bf},name={Fig.},labelsep=period}        
	
	\begin{frontmatter}
		
		
		\title{A thermodynamically consistent and conservative phase-field model for surfactant contact line dynamics with solid adsorption}
		
		\author[1]{Jiangxu Huang}
		\author[1,3,4,5]{Zhenhua Chai\corref{mycorrespondingauthor}}	
		\ead{hustczh@hust.edu.cn}
		\cortext[mycorrespondingauthor]{Corresponding author}
		\author[1]{Xi Liu}
		\author[1]{Changsheng Huang}

		\address[1]{ School of Mathematics and Statistics, Huazhong University of Science and Technology, Wuhan 430074, China}
		\address[3]{Institute of Interdisciplinary Research for Mathematics and Applied Science,Huazhong University of Science and Technology, Wuhan 430074, China}	
		\address[4]{Hubei Key Laboratory of Engineering Modeling and Scientific Computing, Huazhong University of Science and Technology, Wuhan 430074, China}	
		
		\address[5]{The State Key Laboratory of Intelligent Manufacturing Equipment and Technology, Huazhong University of Science and Technology, Wuhan 430074, China}


		\begin{abstract}
			Surfactant adsorption at solid–fluid interfaces has an important influence on the contact line dynamics, yet most surfactant-laden phase-field models neglect solid adsorption and are still based on Cahn--Hilliard-type equations. In this work, we develop a thermodynamically consistent and conservative Allen--Cahn--Navier--Stokes model for surfactant-laden contact line dynamics with solid adsorption. In this model, a unified free-energy formulation accounts for surfactant transport, adsorption, and wetting dynamics, from which consistent wetting and adsorption boundary conditions are derived. To overcome the numerical complexity of fourth-order formulations in the Cahn--Hilliard-type equations, the second-order Allen--Cahn--type equations are employed for both the phase field and the surfactant concentration. Within the Allen--Cahn framework, a classical linear constraint ensures exact conservation of the total surfactant mass, whereas a generalized nonlinear conserved quantity is introduced for the phase field to improve geometric volume conservation. The resulting weighted Lagrange-multiplier correction vanishes in the bulk phases and is confined to the diffuse interface, thereby suppressing the leading-order curvature-induced bulk shift associated with the classical linear constraint while preserving the energy-dissipation structure of the model. Matched asymptotic analysis is further performed to show that this nonlinear formulation can achieve formal second-order geometric-volume accuracy. Finally, a lattice Boltzmann method is developed to solve the coupled system, and the numerical results confirm the predicted second-order convergence, demonstrate significantly enhanced volume preservation for high-curvature droplets, and reproduce characteristic wetting responses induced by preferential solid adsorption, including hydrophilization and autophobing. Under shear flow, the results also illustrate that solid adsorption can significantly influence contact line recession and droplet detachment.
		\end{abstract}

		\begin{keyword}
			Surfactant \sep  solid adsorption \sep phase-field model \sep lattice Boltzmann method		
		\end{keyword}
		
	\end{frontmatter}
	
	\section{Introduction}
	Surfactants, owing to their amphiphilic nature, can adsorb at fluid--fluid and solid--liquid interfaces, reducing interfacial tension \cite{TakagiARFM2011}. The nonlinear coupling between their bulk transport and interfacial adsorption/desorption kinetics drives spatial gradients in interfacial tension, thereby generating Marangoni stresses. These effects have some significant influences on the interface deformation, breakup, coalescence, and contact line dynamics, with broad applications in emulsification, microfluidics, enhanced oil recovery, and biomedicine \cite{MyersJWS2020,SeethepalliSPEJ2004,BaretLC2012}. However, the strong nonlinear coupling among surfactant transport, interfacial adsorption/desorption, interfacial-tension variation, and bulk flow poses a great challenge for numerical simulation and has attracted considerable research interest.

	Existing numerical models for surfactant-laden multiphase flows mainly describe surfactant transport on fluid–fluid interfaces and the associated coupling with interfacial tension and Marangoni stresses. Based on the interface representation, these models can be broadly classified into sharp-interface and diffuse-interface approaches. Sharp-interface methods, including front-tracking \cite{ZhangJCP2006,MuradogluJCP2008}, immersed-boundary \cite{LaiJCP2008}, level-set \cite{XuJCP2006,XuJCP2018}, and volume-of-fluid methods \cite{JamesJCP2004,XueJCP2026}, consider the surfactant transport equation and stress jump conditions directly on a zero-thickness interface. Although these methods can accurately capture interfacial dynamics, maintaining mass conservation while retaining a simple implementation remains challenging during severe deformation and topological changes. Diffuse-interface methods, by contrast, spread the interface over a finite thickness, which makes them better suited for handling the droplet breakup and coalescence \cite{TeigenCMS2009,TeigenJCP2011}. The diffuse-interface surfactant models can be generally classified into two categories. The first one is to introduce a regularized delta function to reformulate the sharp-interface surfactant transport equation over the diffuse-interface region, including the diffuse-interface transport models \cite{TeigenCMS2009,TeigenJCP2011,LiuJFM2018,HuAML2021,YamashitaJCP2024}, the interface-confined scalar model \cite{JainJCP2024}, and recent profile-preserving or lattice-Boltzmann extensions \cite{HaoJCP2025,HaoJCP2025a,ZhanPRE2025}. The other one is to derive the governing equations from a free-energy functional, with chemical potentials obtained as variational derivatives and the phase field and the surfactant concentration evolving through Cahn–Hilliard-type or Allen--Cahn-type dynamics \cite{VanRA2006,LiuJCP2010,YunAMC2014,ZongPOF2020,YangCMAME2021,LiangJCP2026,YangJCP2022}. In contrast to the first kind of transport-based models \cite{TeigenCMS2009,TeigenJCP2011,LiuJFM2018,HuAML2021,YamashitaJCP2024,JainJCP2024,HaoJCP2025,HaoJCP2025a,ZhanPRE2025}, models of the second type feature a much clearer thermodynamic structure and an explicit energy-dissipation law.

	When surfactant-laden multiphase flows come into contact with solid substrates, the surfactants would modulate solid wettability and introduce complex contact line dynamics. In standard phase-field models for surfactant-free contact-line dynamics, the wall wettability is typically incorporated via the solid–fluid interfacial free energy, from which wetting boundary conditions can be derived \cite{CahnJCP1977,YueJCP2010}. In surfactant–substrate systems, Zhu et al. \cite{ZhuJFM2019,ZhuJCP2020} developed thermodynamically consistent phase-field models in which soluble surfactants adsorb at the fluid–fluid interface and reduce the interfacial tension. Similar assumptions have been adopted in subsequent studies \cite{WangJCP2022,WangJCP2024,WuAML2024,ZhangICHMT2025,YaoIJMF2026,WuCMAME2026}. However, when surfactants are allowed to modify only the fluid–fluid interfacial tension, Young's equation predicts an amplification of the intrinsic wettability: the hydrophilic surfaces become more hydrophilic, while the hydrophobic surfaces become more hydrophobic, a trend consistent with some experimental observations \cite{StoebeLangmuir1996,KarapetsasJFM2011}. In addition to these common phenomena, the surfactants can also produce counterintuitive behaviors: they can either induce an overall hydrophilic shift with a reduced contact angle or cause autophobic droplet retraction with an increased contact angle \cite{BeraLangmuir2016,TadmorLangmuir2019}.
	Existing models \cite{ZhuJFM2019,ZhuJCP2020,WangJCP2022,WangJCP2024,WuAML2024,ZhangICHMT2025,YaoIJMF2026,WuCMAME2026} fail to explain these distinct behaviors, including an overall hydrophilic shift and autophobicity, as such behaviors require an asymmetric modification of the solid–liquid and solid–gas interfacial tensions, indicating that a key physical mechanism remains unaccounted for. Recently, Kannan et al. \cite{KannanJFM2026} pointed out that this discrepancy stems from the omission of solid-surface adsorption: the surfactants not only alter the fluid–fluid interfacial tension but also adsorb at the solid–liquid and solid–gas interfaces, changing the solid surface energies and reshaping Young's force balance. Therefore, the incorporation of solid adsorption is essential for accurately describing contact line dynamics in surfactant–substrate systems \cite{KannanJFM2026}.

	Furthermore, most existing models for surfactant-laden contact lines are based on Cahn-Hilliard-type fourth-order equations \cite{ZhuJFM2019,ZhuJCP2020,WangJCP2022,WangJCP2024,WuAML2024,ZhangICHMT2025,YaoIJMF2026,WuCMAME2026,KannanJFM2026}. Although this type of model naturally satisfies mass conservation as an $H^{-1}$ gradient flow, its fourth-order spatial derivatives significantly increase the complexity of numerical implementation \cite{CahnJCP1958}. In contrast, the second-order Allen--Cahn equation, formulated as an $L^2$ gradient flow, only involves the second-order spatial derivatives and is therefore attractive for its computational simplicity and easier treatment of boundary conditions \cite{AllenAM1976}. However, the classical Allen--Cahn equation lacks intrinsic mass conservation, which poses substantial challenges when it is directly applied to study multiphase flow problems requiring accurate conservation of both droplet volume and total surfactant mass. To overcome this limitation, the existing strategies fall primarily into two categories: (i) incorporating local penalty terms or localized fluxes to devise conservative Allen--Cahn variants \cite{ChaiIJHMT2018,ChiuJCP2011}, and (ii) introducing a global Lagrange multiplier to construct a nonlocal Allen--Cahn model that enforces mass conservation through a spatial integral constraint \cite{RubinsteinIMAJAM1992,BrasselMMAS2011,KimIJES2014}. The former can improve local conservation and profile preservation, but many such corrections are not derived from a unified variational structure and therefore do not generally retain a strict energy-dissipation law. The latter preserves the spatial integral of the phase variable, but this constraint does not necessarily ensure conservation of the true geometric volume represented by the diffuse-interface profile. For the high-curvature droplets or large-contact-angle configurations, the mismatch between integral conservation and geometric volume conservation may lead to artificial droplet shrinkage or loss of fine interfacial features, as reported in the previous work \cite{BrasselMMAS2011}. Although subsequent refinements, such as curvature-dependent multipliers \cite{KwakAML2022}, local multipliers \cite{ChoiEABE2023}, and maximum-principle-preserving schemes \cite{YangCNSNS2024}, have improved the performance of conservative Allen--Cahn models, these modified methods generally break the underlying variational structure and cannot preserve the strict energy-dissipation law of the original nonlocal Allen--Cahn system \cite{HwangEABE2026}. This motivates the development of a second-order Allen--Cahn formulation that can simultaneously ensure geometric volume conservation, total surfactant mass conservation, and thermodynamic consistency for the surfactant-laden moving contact lines with solid adsorption.

	In this paper, we develop a thermodynamically consistent and conservative Allen--Cahn--Navier--Stokes framework for surfactant-laden moving contact line problems with solid adsorption. Within the conservative Allen–Cahn formulation, distinct nonlocal constraints are applied to the phase field and the surfactant concentration. A standard Lagrange multiplier is adopted for the surfactant concentration to guarantee mass conservation, whereas a more general conserved quantity is introduced for the phase field, giving rise to a new weighted Lagrange-multiplier correction. Unlike the classical Lagrange multiplier, which shifts the chemical potential uniformly throughout the domain, the present weighting-function derivative vanishes in the pure bulk phases, thereby localizing the volume-conservation correction strictly within the diffuse interface. Consequently, this construction removes the leading-order curvature-induced bulk shift associated with standard Allen–Cahn models and achieves accurate geometric volume conservation while preserving the strict energy-dissipation law. In addition, the asymptotic analysis and numerical results demonstrate that compared to the classical linear constraint, the proposed nonlinear constraint substantially improves geometric-volume preservation, achieving second-order accuracy in zero-level-set geometric-volume conservation under the stated assumptions.
	
	
	The remainder of this paper is organized as follows. In Section \ref{sec2}, we present the physical problem of surfactant-laden contact-line dynamics with solid adsorption and the phase-field model, together with theoretical analyses of its conservative properties and thermodynamic consistency. In Section \ref{sec3}, the lattice Boltzmann method is developed to solve the phase-field model. In Section \ref{sec4}, some numerical experiments are conducted to validate the phase-field model, and finally, the main conclusions are summarized in Section \ref{sec5}.

	\section{Physical problem and mathematical model}
	\label{sec2}
	In this section, an Allen–Cahn–Navier–Stokes model is developed for surfactant-laden contact-line dynamics with solid adsorption, and the physical configuration is shown in Fig. \ref{fig0}. In the following, the equilibrium distributions of the phase-field variable and the surfactant concentration, as well as the energy-dissipation law of the system, are derived and discussed in detail.
	\begin{figure}[H]
		\centering
		\includegraphics[scale=0.45]{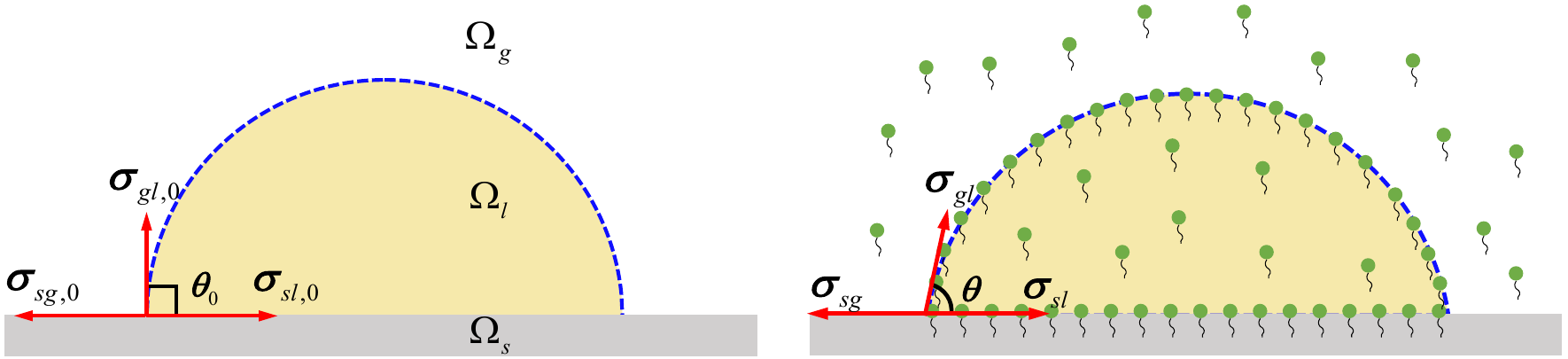} 
		\put(-385,80){(\textit{a})}
		\put(-185,80){(\textit{b})}
		\caption{ Schematic illustration of droplet spreading on a solid surface (a) without surfactant and (b) with surfactant. Surfactant molecules adsorb at both the gas–liquid and solid–liquid interfaces, thereby influencing the motion of the three-phase contact line and altering the spreading rate and the equilibrium contact angle. $\theta$ denotes the equilibrium contact angle, $\sigma_{sg}$, $\sigma_{sl}$ and $\sigma_{gl}$ are the solid–gas, solid–liquid and gas–liquid interfacial tensions, respectively. The subscript $0$ denotes the corresponding clean-system values.}
		\label{fig0}
	\end{figure}

	\subsection{Phase-field method for interface capturing and surfactant}
	We introduce two scalars $\phi$ and $\psi$ to represent the local phase state and the surfactant concentration, respectively. The phase-field variable $\phi$ changes smoothly from -1 in phase 1 (the gas phase) to 1 in phase 2 (the liquid phase), while $\psi$ denotes the normalized surfactant concentration and is constrained in the range [0, 1]. The total free-energy functional of the system, excluding the kinetic energy, is composed of four parts,
	\begin{equation}
		\mathcal{E}_f(\phi, \psi) = \mathcal{E}_{GL}(\phi) + \mathcal{E}_{sur}(\psi) + \mathcal{E}_{ad}(\phi, \psi) + \mathcal{E}_{w}(\phi,\psi).
		\label{eq1}
	\end{equation}
	The Ginzburg–Landau free energy $\mathcal{E}_{GL}(\phi)$ accounts for the structure of the gas–liquid interface and takes the standard double-well form \cite{LiangJCP2026}
	\begin{equation}
		\mathcal{E}_{GL}(\phi) = \int_{\Omega} \left[ \beta(\phi^2 - 1)^2 + \frac{\kappa}{2} |\nabla \phi|^2 \right] d\Omega,
	\end{equation}	
	where $\kappa=3 \varepsilon \sigma_{gl,0}/8$ and $\beta=3\sigma_{gl,0}/(4\varepsilon)$ are constants, and determined by the interface thickness parameter $\varepsilon$ and the surfactant-free gas–liquid interfacial tension $\sigma_{gl,0}$. The surfactant mixing free energy $\mathcal{E}_{sur}(\psi)$ describes the entropic contribution of the binary surfactant–fluid system, supplemented by a gradient penalty term to capture interfacial effects associated with solid-phase adsorption \cite{KannanJFM2026}
	\begin{equation}
		\mathcal{E}_{sur}(\psi) = \int_{\Omega} \left\{ k_B T [\psi \ln \psi + (1 - \psi) \ln(1 - \psi)] {+ \frac{\lambda_s}{2} |\nabla \psi|^2} \right\} d\Omega,
	\end{equation}
	where $k_B$ is the Boltzmann constant, $T$ is the absolute temperature and $\lambda_s = \kappa/2$ is a gradient penalty coefficient (or regularization parameter) that controls the physical thickness of the diffuse adsorption layer and preserves thermodynamic consistency with the Langmuir isotherm \cite{KannanJFM2026}. The adsorption coupling free energy $\mathcal{E}_{ad}(\phi, \psi)$ characterizes the surfactant adsorption behavior at the gas–liquid interface \cite{LiangJCP2026}
	\begin{equation}
		\mathcal{E}_{ad}(\phi, \psi) = \int_{\Omega} \left[ -\beta \psi (1 - \phi^2)^2 + \frac{W}{2} \psi \phi^2 \right] d\Omega.
	\end{equation}
	The first term accounts for the adsorption and desorption of surfactant at the interface, while the second term serves as a penalty potential that suppresses free surfactant in the bulk phases, ensuring a low local surfactant concentration away from the interface. The parameter $W>0$ controls the strength of this penalty. Finally, the wetting free energy $\mathcal{E}_{w}(\phi, \psi)$ accounts for the short-range interactions between the fluid and the solid substrate \cite{KannanJFM2026}
	\begin{equation}
		\mathcal{E}_{w}(\phi, \psi) = \int_{\partial \Omega} f_w(\phi, \psi) dS,
	\end{equation}
	where $f_w(\phi, \psi)$ is the surface energy per unit area on the wall. For a clean two-phase system, Young’s law, $\sigma_{gl,0} \cos \theta_0 = \sigma_{sg,0} - \sigma_{sl,0}$, gives the following standard form for the wall free energy per unit area \cite{CahnJCP1977,YueJCP2010},
	\begin{equation}
		f_{w,0}(\phi) = \left[ \frac{1}{2} - g(\phi) \right] \sigma_{sg,0}+ \left[ \frac{1}{2} + g(\phi) \right]\sigma_{sl,0} = -\sigma_{gl,0}\cos\theta_{0}g(\phi) + \frac{\sigma_{sg,0} + \sigma_{sl,0}}{2},
		\label{fw0}
	\end{equation}
	where $\sigma_{sg,0}$ and $\sigma_{sl,0}$ denote the solid–gas and solid–liquid interfacial tensions of the clean system, $\theta_0$ represents the surfactant-free equilibrium contact angle, $g(\phi)$ is a smooth switching function such that $g(-1)=-0.5$ for the pure gas phase and $g(1)=0.5$ for the pure liquid phase. When a soluble surfactant is introduced into the system, it adsorbs not only at the fluid–fluid interface but also at the solid–liquid and solid–gas interfaces, thereby reducing the corresponding interfacial tensions. Here we adopt the Langmuir–Szyszkowski equation of state to describe the effect of surfactant adsorption on the interfacial tensions \cite{LiuJCP2010, KannanJFM2026},
	\begin{equation}
		\sigma_{gl}(\psi) = \sigma_{gl,0} \left[ 1 + \beta_{gl} \ln(1-\psi) \right],\quad \sigma_{sg}(\psi) = \sigma_{sg,0} \left[ 1 + \beta_{sg} \ln(1-\psi) \right],\quad  \sigma_{sl}(\psi) = \sigma_{sl,0} \left[ 1 + \beta_{sl} \ln(1-\psi) \right],
		\label{langmuir}
	\end{equation}
	where $\beta_{gl}$, $\beta_{sg}$, and $\beta_{sl}$ represent the interfacial elasticity coefficients associated with surfactant adsorption at the gas-liquid, solid-gas and solid-liquid interfaces. This formulation ensures that the system naturally recovers the clean-system reference values when $\psi=0$. Substituting Eq. (\ref{langmuir}) into Eq. (\ref{fw0}) and replacing $\sigma_{sg,0} \rightarrow \sigma_{sg}(\psi)$ and $\sigma_{sl,0} \rightarrow \sigma_{sl}(\psi)$, while retaining the explicit contribution of the gas–liquid interfacial tension to the contact angle, yields the surfactant-dependent wall free energy density
	\begin{equation}
		\begin{aligned}
			f_{w}(\phi, \psi) &= \left[ \frac{1}{2} - g(\phi) \right] \left\{ \sigma_{sg,0} \left[ 1 + \beta_{sg} \ln(1-\psi) \right] \right\} + \left[ \frac{1}{2} + g(\phi) \right] \left\{ \sigma_{sl,0} \left[ 1 + \beta_{sl} \ln(1-\psi) \right] \right\} \\
			&= {-\left[ \sigma_{gl,0} \cos \theta_{0} + \ln(1-\psi)(\sigma_{sg,0} \beta_{sg} - \sigma_{sl,0} \beta_{sl}) \right] g(\phi)}+ {\ln(1-\psi) \frac{\sigma_{sg,0} \beta_{sg} + \sigma_{sl,0} \beta_{sl}}{2}} + {\frac{\sigma_{sg,0} + \sigma_{sl,0}}{2}},
		\end{aligned}
		\label{eq8}
	\end{equation}
	where Young's equation for the clean system is used to simplify the first term. Following the previous studies \cite{YueJCP2010,ZhuJFM2019}, we derive $g(\phi) = 3\phi_b^{-1}\phi / 4 - \phi_b^{-3}\phi^3 / 4$. It can be found that the expression (\ref{eq8}) clearly delineates three different contributions: the first term governs the surfactant-modified contact angle behavior, the second one captures the overall shift in solid–fluid interfacial tension induced by surfactant adsorption, and the third part represents the clean-system baseline. 
	
	
	The variations of the total free-energy functional $\mathcal{E}_f(\phi,\psi)$ with respect to $\phi$ and $\psi$, respectively, are given by
	\begin{subequations}
		\begin{equation}
			\delta_\phi \mathcal{E}_f(\phi,\psi) = \int_{\Omega} \left[ 4\beta(1-\psi)(\phi^3 - \phi) - \kappa \nabla^2\phi + W\psi\phi \right]\delta \phi d\Omega + \int_{\partial\Omega} \left( \kappa \mathbf{n} \cdot \nabla\phi + \frac{\partial f_w}{\partial \phi} \right)\delta \phi dS,
		\end{equation}
		\begin{equation}
			\delta_\psi \mathcal{E}_f(\phi,\psi) = \int_{\Omega} \left[ k_B T \ln\left(\frac{\psi}{1-\psi}\right) - \lambda_s\nabla^2\psi - \beta(1-\phi^2)^2 + \frac{W}{2}\phi^2 \right] \delta \psi d\Omega + \int_{\partial\Omega} \left( \lambda_s\mathbf{n} \cdot \nabla\psi + \frac{\partial f_w}{\partial \psi} \right) \delta \psi dS.
		\end{equation}	
		\label{free_energy}		
	\end{subequations}	
	Here, the integration by parts has been used, and $\mathbf n$ denotes the unit outward normal vector on the solid boundary $\partial\Omega$ (pointing from the fluid domain towards the solid substrate). Then, the chemical potentials of the phase-field variable and the surfactant concentration can be determined as 
	\begin{subequations}
		\begin{equation}
			\mu_\phi = 4\beta(1-\psi)(\phi^3 - \phi) - \kappa \nabla^2\phi + W\psi\phi,
			\label{eq_muphi}
		\end{equation}
		\begin{equation}
			\mu_\psi = k_B T \ln\left(\frac{\psi}{1-\psi}\right) - \lambda_s\nabla^2\psi 	- \beta(1-\phi^2)^2 + \frac{W}{2}\phi^2.
			\label{eq_mupsi}
		\end{equation}			
	\end{subequations}

	To model the moving interface and surfactant transport, we employ a unified Allen–Cahn framework augmented with nonlocal Lagrange multipliers. To this end, we introduce $c_i(\mathbf{x}, t)$ to denote a generic scalar field, representing either the phase field $\phi$ or the surfactant concentration $\psi$, which is governed by the unified conservative Allen-Cahn equation
	\begin{equation}
		\frac{\partial c_i}{\partial t} + \nabla \cdot (c_i \mathbf{u}) = -M_i \left[ \mu_i - \lambda_i(t) h_i'(c_i) \right],
		\label{ac1}
	\end{equation}	
	where $\mathbf{u}$ is the fluid velocity, $M_i$, $\mu_i$, and $\lambda_i(t)$ are the mobility, chemical potential, and time-dependent Lagrange multiplier. We note that in some previous conservative Allen–Cahn models \cite{YangCMAME2021,LiangJCP2026,RubinsteinIMAJAM1992,BrasselMMAS2011,KimIJES2014,KwakAML2022,ChoiEABE2023,YangCNSNS2024,HwangEABE2026}, the conservation constraints are imposed by preserving the spatial integral of the order parameter, i.e., $\int_{\Omega} c_i \, d\Omega=constant$. However, this standard linear formulation suffers from an unphysical shrinkage artifact: curvature-induced bulk plateau shifts drive an artificial loss of geometric volume for high-curvature droplets \cite{BrasselMMAS2011}. To improve geometric volume conservation, we introduce a more general conserved quantity defined through a tailored monotonic mapping  $h_i(c_i)$ \cite{ZhouPRSA2025,MusilArxiv2026}, subject to the following constraint,
	\begin{equation}
		\frac{\mathrm{d}}{\mathrm{d}t}\int_\Omega h_i(c_i) \, \mathrm{d}\Omega = \int_\Omega h_i'(c_i) \frac{\partial c_i}{\partial t} \, \mathrm{d}\Omega = 0.
		\label{conservation}
	\end{equation}
	By combining Eqs. (\ref{ac1}) and (\ref{conservation}), we obtain
	\begin{equation}
		\int_\Omega h_i'(c_i) \frac{\partial c_i}{\partial t} \, \mathrm{d}\Omega = -\int_\Omega h_i'(c_i) \nabla \cdot (c_i \mathbf{u}) \, \mathrm{d}\Omega - M_i \int_\Omega h_i'(c_i) \mu_i \, \mathrm{d}\Omega + M_i \lambda_i(t) \int_\Omega [h_i'(c_i)]^2 \, \mathrm{d}\Omega = 0.
		\label{eq13}
	\end{equation}
	With the help of the chain rule, the divergence theorem, and the no-penetration boundary condition $\mathbf u\cdot\mathbf n=0$ on $\partial\Omega$, we have
	\begin{equation}
		-\int_{\Omega} h_i'(c_i) \nabla \cdot (c_i \mathbf{u}) \, d\Omega
		= -\int_{\Omega} h_i'(c_i) \mathbf{u} \cdot \nabla c_i \, d\Omega
		= -\int_{\Omega} \mathbf{u} \cdot \nabla h_i(c_i) \, d\Omega
		= -\int_{\Omega} \nabla \cdot \left[ h_i(c_i) \mathbf{u} \right] \, d\Omega
		= -\int_{\partial \Omega} h_i(c_i) (\mathbf{u} \cdot \mathbf{n}) \, dS
		= 0,
	\end{equation}
	where the incompressible condition ($\nabla \cdot \mathbf{u} = 0$) has been used. Then from Eq. (\ref{eq13}) one can derive the generalized Lagrange multiplier,
	\begin{equation}
		\lambda_i(t) = \frac{\int_\Omega h_i'(c_i) \mu_i \, \mathrm{d}\Omega}{\int_\Omega [h_i'(c_i)]^2 \, \mathrm{d}\Omega}.
		\label{eq15}
	\end{equation}
	
	Although both variables evolve within the same constrained Allen–Cahn framework, they are subject to different conservation requirements: the phase-field constraint preserves a diffuse approximation of the geometric phase volume, whereas the surfactant constraint ensures conservation of the total surfactant mass. For this reason, different constraint functions are adopted for $\phi$ and $\psi$. Actually, for the surfactant concentration $\psi$, the total surfactant mass over the entire domain, $\int_\Omega \psi \, \mathrm{d}\Omega$, must be strictly conserved, and hence a linear function $h_\psi(\psi)=\psi$ (i.e., $h_\psi'(\psi)=1$) is the natural choice. This constraint based on the linear function $h_\psi$ applies the mass-conservation correction uniformly across the entire computational domain, acting as a spatially uniform shift in the chemical potential. Substituting $h_\psi'(\psi)=1$ into the multiplier formula (\ref{eq15}), $\lambda_\psi$ would reduce to the spatial average of the chemical potential, $\bar{\mu}_\psi = \frac{1}{\vert{}\Omega\vert{}} \int_\Omega \mu_\psi \, \mathrm{d}\Omega$. Consequently, the governing equation for the surfactant concentration simplifies to
	\begin{equation}
		\frac{\partial \psi}{\partial t} + \nabla \cdot (\psi \mathbf{u}) = -M_\psi (\mu_\psi - \bar{\mu}_\psi).
		\label{AC_psi}
	\end{equation}
	In contrast, the geometric volume conservation is required for the phase field $\phi$. If a global constraint based on a linear function $h_\phi$ ($h_\phi(\phi) = \phi$, $h_\phi'(\phi) = 1$) is applied, a spatially uniform correction would be uniformly distributed throughout the entire domain. This artificially shifts the bulk values away from their equilibrium minima ($\pm \phi_b$), thereby forcing an unphysical displacement of the interface to maintain the global integral \cite{BrasselMMAS2011}. To prevent this unphysical bulk shift, the volume correction is designed to be localized within the diffuse-interface region rather than distributed uniformly across the entire domain. Therefore, we consider a nonlinear function $h_\phi (\phi)=0.5+g(\phi)=0.5+3\phi_b^{-1}\phi / 4 - \phi_b^{-3}\phi^3 / 4$ with $h_\phi(-\phi_b) = 0$, $h_\phi(\phi_b) = 1$, and $h_\phi'(\pm \phi_b) = 0$, ensuring the Lagrange-multiplier correction to vanish in the pure bulk phases. In this case, the phase-field equation with the geometric volume correction takes the following form
	\begin{equation}
		\frac{\partial \phi}{\partial t} + \nabla \cdot (\phi \mathbf{u}) = -M_\phi \left( \mu_\phi - \tilde{\mu}_\phi \right), \quad \tilde{\mu}_\phi=\lambda_\phi(t) h_\phi'(\phi).
		\label{AC_phi}
	\end{equation}	
	Owing to the weighting function $h_\phi'(\phi)$ vanishing at the bulk equilibrium states $\phi = \pm \phi_b$, the volume correction term $\tilde{\mu}_\phi = \lambda_\phi(t) h_\phi'(\phi)$ is predominantly localized within the diffuse interface and decays exponentially to zero in the bulk phases. Additionally, the boundary conditions for phase-field variable $\phi$ and surfactant concentration $\psi$ are derived from the variations of the total free energy, 
	\begin{subequations}
		\begin{equation}
			\mathcal{B}_\phi = \kappa \mathbf{n} \cdot \nabla\phi + \frac{\partial f_w}{\partial \phi}=0, \quad \quad \frac{\partial f_w}{\partial \phi} = -\left[ \sigma_{gl,0}\cos\theta_0 + \ln(1-\psi)(\sigma_{sg,0}\beta_{sg} - \sigma_{sl,0}\beta_{sl}) \right] g'(\phi),
		\end{equation}
		\begin{equation}
			\mathcal{B}_\psi = \lambda_s \mathbf{n} \cdot \nabla\psi  + \frac{\partial f_w}{\partial \psi}=0, \quad \quad \frac{\partial f_w}{\partial \psi} = \frac{1}{1-\psi} \left[ (\sigma_{sg,0}\beta_{sg} - \sigma_{sl,0}\beta_{sl})g(\phi) - \frac{\sigma_{sg,0}\beta_{sg} + \sigma_{sl,0}\beta_{sl}}{2} \right],
		\end{equation}	
		\label{boundary}		
	\end{subequations}	
	where the first equation represents the generalized Young boundary condition, which balances the diffuse-interface gradient contribution against the wall free energy $f_w$ and determines the equilibrium contact angle. The second equation, in turn, denotes the local thermodynamic equilibrium condition for the surfactant at the solid–fluid interface, balancing the diffuse-layer gradient contribution against the wall free-energy contribution and thereby determining the equilibrium surfactant adsorption on the substrate.

	Finally, the incompressible Navier–Stokes equations are applied to describe two-phase flows,
	\begin{subequations}
		\begin{equation}
			\nabla \cdot \mathbf{u} = 0,
		\end{equation}
		\begin{equation}
			\frac{\partial(\rho \mathbf{u})}{\partial t} + \nabla \cdot (\rho \mathbf{uu}) = -\nabla p + \nabla \cdot \left[ \mu \left( \nabla \mathbf{u} + (\nabla \mathbf{u})^\mathrm{T} \right) \right] + \mathbf{F}_s + \mathbf{G},
		\end{equation}	
		\label{eqns}		
	\end{subequations}	
	where $\rho$, $p$, and $\mu$ are the density, pressure, and dynamic viscosity, respectively. $\mathbf{G}$ is the body force. Based on the formulation of surface tension force for clean two-phase flows, the potential-form expression for surfactant-laden two-phase systems is adopted \cite{YangCMAME2021,LiangJCP2026},
	\begin{equation}
		\mathbf{F}_s = -\phi\nabla\mu_\phi - \psi\nabla\mu_\psi.
		\label{eq_Fs0}
	\end{equation}
	
	\noindent \textbf{Remark 1} In the surfactant–substrate systems, the solid–fluid interfacial tensions approach a saturation limit at high surfactant coverage rather than decreasing indefinitely \cite{ChangCSA1995,JuJML2017,SoligoJCP2019}. In addition, the logarithmic term in the standard Langmuir–Szyszkowski equation diverges as $\psi \to 1$. To account for this adsorption saturation and prevent numerical singularity at a high local surfactant concentration, the solid–fluid interfacial tensions are prevented from falling below half of their clean-surface values \cite{KannanJFM2026}, i.e., $\sigma_{s\alpha}(\psi) = \sigma_{s\alpha,0} \max\left[0.5, 1+\beta_{s\alpha}\ln(1-\psi)\right]$ for $\alpha \in \{g,l\}$. In the present study, to retain a smooth free-energy functional and thereby preserve thermodynamic consistency and the energy-dissipation law established in Theorem~1, the wall free-energy density $f_w(\phi,\psi)$ and its variational derivatives are formulated using the standard smooth logarithmic relation. In all simulations performed herein, the local wall surfactant concentration remains below the saturation threshold (i.e., $1+\beta_{s\alpha}\ln(1-\psi_w)>0.5$). Therefore, the system remains entirely within the smooth, unsaturated regime without reaching the nonsmooth cutoff, ensuring that the variational boundary conditions in Eq. (\ref{boundary}) and the energy-dissipation law remain rigorously valid throughout the computation.

	\subsection{The thermodynamic equilibrium}

	At thermodynamic equilibrium state, the surfactant chemical potential is spatially uniform, whereas the phase-field variable satisfies the constraint equilibrium condition. Based on these equilibrium conditions, the corresponding equilibrium profiles of $\phi$ and $\psi$ can be derived, together with the interfacial adsorption isotherm and the modified Young’s relation at the solid boundary. In the bulk phases, the surfactant chemical potential can be simplified by
	\begin{equation}
		\mu_{\psi, b} = k_B T \ln\left( \frac{\psi_b}{1-\psi_b} \right) - \beta(1-\phi_b^2)^2 + \frac{W}{2}\phi_b^2,
		\label{eq_psib}
	\end{equation}	
	where $\phi_b$ denotes the bulk value of the phase field and $\psi_b$ denotes the bulk surfactant concentration. Under the equilibrium condition, i.e., $\mu_{\phi, b}=4\beta(1-\psi_b)(\phi_b^3 - \phi_b) + W\psi_b\phi_b = 0$, one can obtain $\phi_b^2 = 1 - W\psi_b/[4\beta(1-\psi_b)]$. The thermodynamic equilibrium indicates that the surfactant chemical potential remains uniform throughout the domain, i.e., $\mu_\psi(x) = \mu_{\psi, b}$. Combining Eqs. (\ref{eq_mupsi}) and (\ref{eq_psib}) yields
	\begin{equation}
		k_B T \ln\left(\frac{\psi}{1-\psi}\right) - \lambda_s\nabla^2\psi - \beta(1-\phi^2)^2 + \frac{W}{2}\phi^2 = k_B T \ln\left(\frac{\psi_b}{1-\psi_b}\right) - \beta(1-\phi_b^2)^2 + \frac{W}{2}\phi_b^2,
	\end{equation}
	for which the local surfactant concentration $\psi(x)$ can be expressed as
	\begin{equation}
		\psi(x)=\frac{\psi_b}{\psi_b+(1-\psi_b)\exp\left[-R(\phi,\psi)/(k_B T)\right]},
	\end{equation}
	where $R(\phi,\psi) = (\phi_b^2 - \phi^2) \left[ \beta(2 - \phi_b^2 - \phi^2) + W/2 \right]+\lambda_s\nabla^2\psi$ encapsulates the interfacial coupling and square-gradient regularization effects. At the center of the fluid–fluid interface ($x=0$), we have $\phi(0)=0$, $\phi_b \approx 1$, and $\psi_b\ll1$. Substituting these results into $R(\phi,\psi)$ allows the exponential term to be written as $\exp\left[-R(0,\psi_0)/(k_B T)\right] = \psi_c \mathcal{R}_s$, yielding the equilibrium interfacial concentration,
	\begin{equation}
		\psi_0 = \frac{\psi_b}{\psi_b+(1-\psi_b)\psi_c\mathcal{R}_s} = \frac{\psi_b}{\psi_b+\psi_c\mathcal{R}_s} +\mathcal{O}(\psi_b), \quad \psi_c = \exp\left[-\frac{\beta+W/2}{k_B T}\right], \quad \mathcal{R}_s = \exp\left[-\frac{\lambda_s\nabla^2\psi_0}{k_B T}\right].
	\end{equation}
	Here, $\psi_c$ corresponds to the classical Langmuir adsorption constant, whereas $\mathcal{R}_s$ represents a nonlocal correction factor induced by the surfactant square-gradient penalty. When the gradient term is omitted ($\lambda_s=0, \mathcal{R}_s=1$), the expression reduces to the standard closed-form Langmuir adsorption isotherm, $\psi_0=\psi_b/(\psi_b+\psi_c)+\mathcal{O}(\psi_b)$. For a finite $\lambda_s$, however, $\mathcal{R}_s$ intrinsically depends on the local Laplacian of the surfactant concentration $\nabla^2\psi_0$, precluding an explicit analytical expression for the isotherm. Nevertheless, for the parameter range considered, our numerical results demonstrate that the fluid--fluid interfacial tension is well described by the Langmuir--Szyszkowski relation, without evident systematic deviation and in good agreement with the results reported by Kannan et al. \cite{KannanJFM2026} and Engblom et al. \cite{EngblomCICP2013}.

	We next derive the equilibrium profile of the phase-field variable. In the absence of surfactant, the chemical potential can be written as
	\begin{equation}
		\mu_{\phi, 0} = 4\beta(\phi^3 - \phi) - \kappa \nabla^2\phi = 4\beta \left( \phi^3 - \phi - \frac{\varepsilon^2}{8}\nabla^2\phi \right).
		\label{eq_T1}
	\end{equation}
	The corresponding equilibrium solution is 
	$\phi_0(x) = \tanh (2x/\varepsilon)$, with $\varepsilon = \sqrt{2 \kappa/\beta}$ being the characteristic interface thickness parameter. However, the presence of surfactant introduces strong nonlinear coupling into the phase-field chemical potential $\mu_\phi$. To avoid directly solving $\mu_\phi=0$, we approximate the surfactant concentration entering the phase-field equilibrium relation by its bulk value, $\psi\approx\psi_b$ \cite{LiangJCP2026}. Under this approximation, Eq. (\ref{eq_muphi}) can be simplified as
	\begin{equation}
		\mu_\phi = (4\beta - 4\beta\psi)\phi_b^3 \left[ (\phi_b^{-1}\phi)^3 - (\phi_b^{-1}\phi) - \frac{\varepsilon_\phi^2}{8}\nabla^2(\phi_b^{-1}\phi) \right],
		\label{eq_T2}
	\end{equation}
	where $\varepsilon_\phi =\varepsilon \phi_b^{-1} / \sqrt{1-\psi_b}$ is the effective interface-thickness parameter. Since Eq. (\ref{eq_T2}) has the same form as Eq. (\ref{eq_T1}), the classical hyperbolic tangent solution can be directly applied, giving $\phi_b^{-1}\phi = \tanh(2x/\varepsilon_\phi)$. Substituting $\varepsilon_\phi$ back into the solution yields an approximate equilibrium profile of the phase-field variable in the coupled surfactant system,
	\begin{equation}
		\phi(x) = \phi_b \tanh \left( \frac{2 \phi_b \sqrt{1-\psi_b} \, x}{\varepsilon} \right).
	\end{equation}
	
	\subsection{Formal second-order accuracy of geometric-volume conservation}

	We now examine how accurately the exactly conserved diffuse-volume measure approximates the geometric measure enclosed by the zero-level set of the phase field. Let $\Omega \subset \mathbb{R}^d$ ($d=2,3$) be a fixed, smooth, and bounded domain. We define the normalized phase-field variable $u(\mathbf{x},t)=\phi(\mathbf{x},t)/\phi_b \in [-1,1]$, where $\phi_b>0$ is a fixed reference bulk value throughout the thin-interface asymptotic analysis. The zero-level set $\Gamma_\varepsilon(t)$ and the phase regions with positive and negative values are defined as
	\begin{equation}
		\Gamma_\varepsilon(t) = \{\mathbf{x}\in\Omega : u(\mathbf{x},t) = 0\}, \quad \Omega_\varepsilon^+(t) = \{\mathbf{x}\in\Omega : u(\mathbf{x},t) > 0\}, \qquad \Omega_\varepsilon^-(t) = \{\mathbf{x}\in\Omega : u(\mathbf{x},t) < 0\},
	\end{equation}
	and the geometric volume (or area in two-dimensional space) of the phase with positive value is $ \vert{}\Omega_\varepsilon^+(t)\vert{}$. Additionally, we define the intrinsic diffuse-interface transition thickness $\ell_\phi$ and the interfacial-energy scale parameter $\gamma_\phi$ as 
	\begin{equation}
		\ell_\phi = \sqrt{\frac{\kappa}{2\beta_{\text{eff}}\phi_b^2}} = \frac{\varepsilon}{2\phi_b\sqrt{1-\psi_b}} = \mathcal{O}(\varepsilon), \quad \gamma_\phi = \frac{\kappa\phi_b}{\ell_\phi} = 2\beta_{\text{eff}}\phi_b^3\ell_\phi = \mathcal{O}(1),
	\end{equation}
	where $\beta_{\text{eff}} = \beta(1-\psi_b)$. Here, the scaling $\gamma_\phi=\mathcal O(1)$ corresponds to the standard thin-interface scaling in which the interfacial-energy scale remains finite as $\varepsilon\to 0$. The nonlinear constrained Allen–Cahn equation (\ref{AC_phi}) and the chemical potential (\ref{eq_muphi}) can be rewritten as
	\begin{equation}
		\phi_b\left( \frac{\partial u}{\partial t} + \mathbf{u}\cdot\nabla u \right) = -M_\phi \left[ \mu_\phi - \lambda_\phi(t) w(u) \right], \quad \mu_\phi = \frac{\gamma_\phi}{\ell_\phi} \left[ 2(u^3 - u) - \ell_\phi^2\nabla^2 u \right],
		\label{eq_AC}
	\end{equation}
	where $\mathbf{u}$ denotes the fluid velocity satisfying $\nabla \cdot \mathbf{u} = 0$ in $\Omega$ and $\mathbf{u}\cdot\mathbf{n} = 0$ on $\partial\Omega$. The weight function $w(u)$ and the nonlocal Lagrange multiplier $\lambda_\phi(t)$ are given by
	\begin{equation}
		w(u) = h_\phi'(\phi_b u) = \frac{3}{4\phi_b}(1-u^2), \quad \lambda_\phi(t) = \frac{\int_\Omega w(u)\mu_\phi\,\mathrm{d}\Omega}{\int_\Omega [w(u)]^2\,\mathrm{d}\Omega}.
		\label{eq_wu}
	\end{equation}
	We now introduce the normalized odd constraint function $Q(u) = 2h_\phi(\phi_b u) - 1 = \frac{3}{2}u - \frac{1}{2}u^3$, which satisfies $Q(-u) = -Q(u)$, $Q(\pm 1) = \pm 1$, $Q'(u) = 3(1-u^2)/2$, and $Q'(\pm 1) = 0$. 
	Consequently, the exact conservation law (\ref{conservation}) at the continuous level established for the model is expressed as
	\begin{equation}
		\int_\Omega Q(u(\mathbf{x},t))\,\mathrm{d}\Omega = \mathcal{C}_0,
		\label{conservation_Q}
	\end{equation}
	where $\mathcal{C}_0$ is a constant. The asymptotic analysis is conducted under the following assumptions: (i) The flow field is incompressible and satisfies the no-penetration boundary condition. (ii) All model parameters are independent of time. (iii) The interface $\Gamma_\varepsilon(t)$ remains a smooth, closed hypersurface separated from the solid boundaries, with uniformly bounded surface measure and principal curvatures. (iv) The interface $\Gamma_\varepsilon(t)$ admits a tubular neighborhood of uniform width independent of $\ell_\phi$. (v) The phase-field solution is well prepared and admits a formal matched asymptotic expansion. The geometric assumptions in (iii) and (iv) exclude contact-line boundary layers, topological pinch-off, and interface–boundary collisions. Under these conditions, the phase field variable admits the following second-order uniform asymptotic approximation,
	\begin{equation}
		u_\varepsilon(\mathbf x,t)
		=q_0\!\left(\frac{r(\mathbf x,t)}{\ell_\phi}\right)
		+\ell_\phi u_{\text{unif},1}(\mathbf x,t)
		+\mathcal{O}(\ell_\phi^2),
		\label{eq:gv_uniform_expansion}
	\end{equation}
	where $r(\mathbf{x},t)$ denotes the signed distance to $\Gamma_\varepsilon(t)$ and $q_0(z)=\tanh z$ is the leading-order interfacial profile. The first-order correction $u_{\text{unif},1}$ is uniformly bounded throughout $\Omega$, satisfying
	\begin{equation}
		\|u_{\text{unif},1}\|_{L^\infty(\Omega)}\le C, 
		\label{eq:gv_q0_u1_bound}
	\end{equation}
	where $C$ is a generic positive constant independent of $\ell_\phi$ and time. Moreover, $u_{\text{unif},1}$ decays exponentially away from the diffuse interface and vanishes asymptotically in the bulk phases. A detailed analysis of the matched asymptotic expansion is provided in \ref{APA} \cite{LiuMMS2022}. Substituting the uniform asymptotic expansion (\ref{eq:gv_uniform_expansion}) into the global conservation law (\ref{conservation_Q}) and expanding $Q(u_\varepsilon)$ in a Taylor series yields
	\begin{equation}
		\mathcal{C}_0 = \int_\Omega Q(u_\varepsilon)\,\mathrm{d}\Omega=\int_\Omega Q\left(q_0\left(\frac{r}{\ell_\phi}\right)\right)\mathrm{d}\Omega + \ell_\phi \int_\Omega Q'\left(q_0\left(\frac{r}{\ell_\phi}\right)\right) u_{\text{unif},1}\,\mathrm{d}\Omega+ \mathcal{O}(\ell_\phi^2) = I + II + \mathcal{O}(\ell_\phi^2),
	\end{equation}
	where $I$ and $II$ denote the leading contribution and that arising from the first-order correction in the phase-field expansion, respectively. Let $\mathcal{A}(\zeta,t) := \mathcal{H}^{d-1}(\{\mathbf{x} \in \Omega : r(\mathbf{x},t) = \zeta\})$ denote the $(d-1)$-dimensional Hausdorff measure of the parallel level sets, and extend $\mathcal A(\zeta,t)$ by zero outside the attainable range of the signed-distance function in $\Omega$. Applying the coarea formula, the integral $I$ is represented as 
	\begin{equation}
		I = \int_{-\infty}^{+\infty} Q\left(q_0\left(\frac{\zeta}{\ell_\phi}\right)\right) \mathcal{A}(\zeta,t)\,\mathrm{d}\zeta.
	\end{equation}
	With the sign convention $r>0$ in $\Omega_\varepsilon^+$ and $r<0$ in $\Omega_\varepsilon^-$, the exact geometric phase volumes satisfy $\vert{}\Omega_\varepsilon^+(t)\vert{} = \int_0^{+\infty} \mathcal{A}(\zeta, t)\,\mathrm{d}\zeta$ and $\vert{}\Omega_\varepsilon^-(t)\vert{} = \int_{-\infty}^0 \mathcal{A}(\zeta, t)\,\mathrm{d}\zeta$, then we can decompose $I$ as
	\begin{equation}
		\begin{aligned} I ={} \int_0^{+\infty} \mathcal{A}(\zeta,t)\,\mathrm{d}\zeta + \int_0^{+\infty} \left[ Q\left( q_0\left(\frac{\zeta}{\ell_\phi}\right) \right) - 1 \right] \mathcal{A}(\zeta,t)\,\mathrm{d}\zeta - \int_{-\infty}^0 \mathcal{A}(\zeta,t)\,\mathrm{d}\zeta + \int_{-\infty}^0 \left[ Q\left( q_0\left(\frac{\zeta}{\ell_\phi}\right) \right) + 1 \right] \mathcal{A}(\zeta,t)\,\mathrm{d}\zeta. \end{aligned}
	\end{equation}
	Because $q_0(z) = \tanh z$ and $Q(u) = \frac{3}{2}u - \frac{1}{2}u^3$ are both odd functions, we have $Q(q_0(-z)) = -Q(q_0(z))$. Making the change of variable $\zeta \mapsto -\zeta$ in the negative half-line integral yields
	\begin{equation}
		I = \vert{}\Omega_\varepsilon^+(t)\vert{} - \vert{}\Omega_\varepsilon^-(t)\vert{} + \int_0^{+\infty} \left[ Q\left( q_0\left(\frac{\zeta}{\ell_\phi}\right) \right) - 1 \right] \left[ \mathcal{A}(\zeta,t) - \mathcal{A}(-\zeta,t) \right] \mathrm{d}\zeta.
	\end{equation}
	Introducing the stretched variable $\eta = \zeta/\ell_\phi$ gives rise to the following relation,
	\begin{equation}
		I = \vert{}\Omega_\varepsilon^+(t)\vert{} - \vert{}\Omega_\varepsilon^-(t)\vert{} + \ell_\phi \int_0^{+\infty} [Q(q_0(\eta)) - 1] [\mathcal{A}(\ell_\phi\eta,t) - \mathcal{A}(-\ell_\phi\eta,t)]\,\mathrm{d}\eta.
	\end{equation}
	By the tubular neighborhood theorem, there exists a reach radius $\delta > 0$ such that for $\vert{}\zeta\vert{} < \delta$, the level set area obeys the first variation expansion $\mathcal{A}(\zeta,t) - \mathcal{A}(-\zeta,t) = 2\zeta \mathcal{A}'(0,t) + \mathcal{O}(\zeta^2)$. Thus, there exists a constant $C$ such that
	\begin{equation}
		\vert{}\mathcal{A}(\ell_\phi\eta,t) - \mathcal{A}(-\ell_\phi\eta,t)\vert{} \le C \ell_\phi \eta, \qquad 0 \le \eta \le \delta/\ell_\phi.
	\end{equation}
	Moreover, the algebraic identity $1 - Q(q_0(\eta)) = \frac{1}{2}(1 - \tanh\eta)^2(2 + \tanh\eta)$ implies that $\vert{}1 - Q(q_0(\eta))\vert{} \le C e^{-4\eta}$. Therefore, the near-interface contribution can be estimated by
	\begin{equation}
		\begin{aligned}
			\ell_\phi \left\vert{} \int_0^{\delta/\ell_\phi} [Q(q_0(\eta)) - 1] [\mathcal{A}(\ell_\phi\eta,t) - \mathcal{A}(-\ell_\phi\eta,t)]\,\mathrm{d}\eta \right\vert{} \le C \ell_\phi^2 \int_0^{\delta/\ell_\phi} \eta \vert{}1 - Q(q_0(\eta))\vert{}\,\mathrm{d}\eta \le C \ell_\phi^2 \int_0^\infty \eta e^{-4\eta}\,\mathrm{d}\eta = \mathcal{O}(\ell_\phi^2).
		\end{aligned}
	\end{equation}  
	For the far-field tail $\vert{}\zeta\vert{} \ge \delta$, the exponential profile ensures $\vert{}Q(q_0(\zeta/\ell_\phi)) - \operatorname{sgn}(\zeta)\vert{} \le C e^{-4\delta/\ell_\phi}$. Since $\delta = \mathcal{O}(1)$, the contribution is $\mathcal{O}(e^{-4\delta/\ell_\phi}) = \mathcal{O}(\ell_\phi^\infty)$, which is transcendentally small. Consequently,
	\begin{equation}
		I = \vert{}\Omega_\varepsilon^+(t)\vert{} - \vert{}\Omega_\varepsilon^-(t)\vert{} + \mathcal{O}(\ell_\phi^2).
		\label{eq_I}
	\end{equation}
	
	Applying the coarea formula, using the uniform bound $\Vert{}u_{\text{unif},1}\Vert{}_{L^\infty(\Omega)} \le C$, and introducing the stretched coordinate $\eta = \zeta/\ell_\phi$, the contribution \(II\) arising from the first-order phase-field correction can be bounded as
	\begin{equation}
		\begin{aligned} \vert{}II\vert{} \le \ell_\phi \Vert{}u_{\text{unif},1}\Vert{}_{L^\infty(\Omega)} \int_{-\infty}^{+\infty} \left\vert{} Q'\left(q_0\left(\frac{\zeta}{\ell_\phi}\right)\right) \right\vert{} \mathcal{A}(\zeta,t)\,\mathrm{d}\zeta \le C \ell_\phi^2 \int_{-\infty}^{+\infty} \vert{}Q'(q_0(\eta))\vert{} \mathcal{A}(\ell_\phi \eta, t)\,\mathrm{d}\eta = C \ell_\phi^2 J(t). \end{aligned}
	\end{equation}
	We now split the integral using the uniform tubular-neighborhood radius $\delta>0$,
	\begin{equation}
		J(t) = \int_{\vert{}\eta\vert{} < \delta/\ell_\phi} \vert{}Q'(q_0(\eta))\vert{} \mathcal{A}(\ell_\phi \eta, t)\,\mathrm{d}\eta + \int_{\vert{}\eta\vert{} \ge \delta/\ell_\phi} \vert{}Q'(q_0(\eta))\vert{} \mathcal{A}(\ell_\phi \eta, t)\,\mathrm{d}\eta.
	\end{equation}
	In the near-interface region ($\vert{}\eta\vert{} < \delta/\ell_\phi$), since the parallel hypersurfaces within the tubular neighborhood $\vert{}\zeta\vert{} < \delta$ are smooth and satisfy $\sup_{\vert{}\zeta\vert{}<\delta} \mathcal{A}(\zeta,t) \le C_{\mathcal{A}} < \infty$, using $Q'(q_0(\eta)) = \frac{3}{2}\operatorname{sech}^2\eta$ yields
	\begin{equation}
		\int_{\vert{}\eta\vert{} < \delta/\ell_\phi} \left\vert{} Q'(q_0(\eta)) \right\vert{} \mathcal{A}(\ell_\phi \eta, t)\,\mathrm{d}\eta\le C_{\mathcal{A}} \int_{-\delta/\ell_\phi}^{\delta/\ell_\phi} \left\vert{} Q'(q_0(\eta)) \right\vert{}\,\mathrm{d}\eta \le C_{\mathcal{A}} \int_{-\infty}^{+\infty} \left\vert{} Q'(q_0(\eta)) \right\vert{}\,\mathrm{d}\eta= 3 C_{\mathcal{A}} = \mathcal{O}(1).
		\label{eq_nei}
	\end{equation}
	In the far-field region ($\vert{}\eta\vert{} \ge \delta/\ell_\phi$), the integral is evaluated by transforming back to the physical coordinate $\zeta = \ell_\phi \eta$. Utilizing the exponential decay $\vert{}Q'(q_0(\zeta/\ell_\phi))\vert{} \le 6e^{-2\delta/\ell_\phi}$ along with the domain-volume identity $\int_{-\infty}^{+\infty} \mathcal{A}(\zeta,t)\,\mathrm{d}\zeta = \int_\Omega \vert{}\nabla r\vert{}\,\mathrm{d}\Omega = \vert{}\Omega\vert{}$, we can obtain
	\begin{equation}
		\begin{aligned}
			\int_{\vert{}\eta\vert{} \ge \delta/\ell_\phi} \vert{}Q'(q_0(\eta))\vert{} \mathcal{A}(\ell_\phi \eta, t)\,\mathrm{d}\eta &= \frac{1}{\ell_\phi} \int_{\vert{}\zeta\vert{} \ge \delta} \left\vert{} Q'\left(q_0\left(\frac{\zeta}{\ell_\phi}\right)\right) \right\vert{} \mathcal{A}(\zeta,t)\,\mathrm{d}\zeta \\ &\le \frac{6e^{-2\delta/\ell_\phi}}{\ell_\phi} \int_{-\infty}^{+\infty} \mathcal{A}(\zeta,t)\,\mathrm{d}\zeta = \frac{6\vert{}\Omega\vert{}}{\ell_\phi} e^{-2\delta/\ell_\phi} = o(1).
		\end{aligned}
		\label{eq_wai}
	\end{equation}
	Combining Eqs. (\ref{eq_nei}) and (\ref{eq_wai}) confirms that $J(t) = \mathcal{O}(1)$, which directly gives
	\begin{equation}
		II = \mathcal{O}(\ell_\phi^2) = \mathcal{O}(\varepsilon^2).
		\label{eq_II}
	\end{equation}
	Additionally, combining Eqs. (\ref{eq_I}) and (\ref{eq_II}) with the domain volume identity $\vert{}\Omega_\varepsilon^+(t)\vert{} + \vert{}\Omega_\varepsilon^-(t)\vert{} = \vert{}\Omega\vert{}$ yields
	\begin{equation}
		\vert{}\Omega_\varepsilon^+(t)\vert{} = \frac{\vert{}\Omega\vert{} + \mathcal{C}_0}{2} + \mathcal{O}(\ell_\phi^2).
	\end{equation}
	Since $\vert{}\Omega\vert{}$ and $\mathcal{C}_0$ are time-invariant, subtracting the initial state at $t = 0$ and using $\ell_\phi = \mathcal{O}(\varepsilon)$, we have 
	\begin{equation}
		\vert{}\Omega_\varepsilon^+(t)\vert{} = \vert{}\Omega_\varepsilon^+(0)\vert{} + \mathcal{O}(\varepsilon^2),
		\label{second}
	\end{equation} 
	which shows the second-order accuracy in preserving the enclosed sharp-interface geometric volume throughout the evolution.

	\subsection{The energy dissipation law}
	In this part, we demonstrate that the proposed coupled phase-field–Navier–Stokes system for surfactant-laden two-phase flow is thermodynamically consistent and satisfies the second law of thermodynamics. For an isothermal multiphase system without external energy input, thermodynamic consistency requires the total energy, consisting of the free and kinetic energies, to be non-increasing in time. Here, the total energy of the system is defined as $\mathcal{E}_{total} = \mathcal{E}_{f}(\phi, \psi) + \mathcal{E}_{K}$, where $\mathcal{E}_{f}(\phi, \psi)$ is the total free energy functional defined by Eq. (\ref{eq1}), and $\mathcal{E}_{K} = \int_{\Omega} \frac{1}{2}\rho|\mathbf{u}|^2 d\Omega$ is the kinetic energy.
	\begin{theorem}
		Consider the coupled phase-field–Navier–Stokes system for surfactant laden two-phase flow in a closed domain $\Omega$, and assume that the generalized natural boundary conditions $\mathcal{B}_{\phi} = 0$, $\mathcal{B}_{\psi} = 0$ and the no-slip boundary condition $\mathbf{u} = \mathbf{0}$ on $\partial\Omega$ are imposed. Then, in the absence of external force, the coupled system satisfies the energy-dissipation law,
		\begin{equation}
			\frac{d\mathcal{E}_{total}}{dt} \le 0.
		\end{equation}
	\end{theorem}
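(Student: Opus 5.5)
The plan is to differentiate $\mathcal{E}_{total}=\mathcal{E}_f(\phi,\psi)+\mathcal{E}_K$ in time, handle the free-energy part with the variational formulas (\ref{free_energy}) and the kinetic part with the momentum equation (\ref{eqns}), and show that the capillary force $\mathbf{F}_s$ exactly cancels the advective contributions to the free-energy rate.

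\textbf{Free-energy rate.} Setting $\delta\phi=\partial_t\phi$ and $\delta\psi=\partial_t\psi$ in (\ref{free_energy}) gives
\begin{equation*}
\frac{d\mathcal{E}_f}{dt}=\int_\Omega\left(\mu_\phi\,\partial_t\phi+\mu_\psi\,\partial_t\psi\right)d\Omega+\int_{\partial\Omega}\left(\mathcal{B}_\phi\,\partial_t\phi+\mathcal{B}_\psi\,\partial_t\psi\right)dS .
\end{equation*}
The boundary integral vanishes by $\mathcal{B}_\phi=\mathcal{B}_\psi=0$. Next substitute $\partial_t c_i=-\nabla\cdot(c_i\mathbf{u})-M_i(\mu_i-\lambda_i h_i'(c_i))$ from (\ref{ac1}). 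The relaxation part yields $-M_i\int_\Omega\mu_i(\mu_i-\lambda_i h_i')\,d\Omega$.

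\textbf{Key step: the Lagrange-multiplier terms.} By (\ref{eq15}),
\begin{equation*}
\lambda_i\int_\Omega h_i'\mu_i\,d\Omega=\lambda_i^2\int_\Omega (h_i')^2\,d\Omega ,
\end{equation*}
so that
\begin{equation*}
-M_i\int_\Omega \mu_i(\mu_i-\lambda_i h_i')\,d\Omega=-M_i\int_\Omega(\mu_i-\lambda_i h_i')^2\,d\Omega\le 0 .
\end{equation*}
Equivalently, this is $-M_i\bigl[\int_\Omega\mu_i^2\,d\Omega-(\int_\Omega h_i'\mu_i\,d\Omega)^2/\int_\Omega(h_i')^2\,d\Omega\bigr]$, which is nonnegative inside the bracket by Cauchy--Schwarz. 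This works both for the linear choice $h_\psi$ and for the nonlinear weight $h_\phi$, provided $\int_\Omega (h_\phi')^2\,d\Omega>0$. That holds whenever $\phi$ is not identically equal to $\pm\phi_b$; otherwise we set $\lambda_\phi=0$.

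\textbf{Advective terms and kinetic energy.} Using $\nabla\cdot\mathbf{u}=0$, integration by parts, and $\mathbf{u}=\mathbf{0}$ on $\partial\Omega$,
\begin{equation*}
-\int_\Omega\mu_i\nabla\cdot(c_i\mathbf{u})\,d\Omega=\int_\Omega c_i\,\mathbf{u}\cdot\nabla\mu_i\,d\Omega .
\end{equation*}
Summing over $\phi$ and $\psi$ gives $-\int_\Omega\mathbf{u}\cdot\mathbf{F}_s\,d\Omega$ by (\ref{eq_Fs0}).

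For the kinetic energy, take the inner product of the momentum equation with $\mathbf{u}$. This requires a continuity equation for $\rho$. I would assume $\partial_t\rho+\nabla\cdot(\rho\mathbf{u})=0$, or, for variable density, the standard consistent-mass-flux modification; this is the main delicate point, since with $\rho=\rho(\phi)$ and Allen--Cahn relaxation the density is not exactly transported. I would state the needed assumption explicitly, or take the momentum equation in the skew-symmetric form that makes the inertial terms vanish. Under that assumption,
\begin{equation*}
\frac{d\mathcal{E}_K}{dt}=-\int_\Omega\frac{\mu}{2}\left|\nabla\mathbf{u}+(\nabla\mathbf{u})^{T}\right|^2d\Omega+\int_\Omega\mathbf{u}\cdot\mathbf{F}_s\,d\Omega ,
\end{equation*}
where the pressure and boundary terms vanish by incompressibility and no-slip, and $\mathbf{G}=\mathbf{0}$.

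\textbf{Conclusion.} Adding the two rates, the $\mathbf{F}_s$ terms cancel, leaving
\begin{equation*}
\frac{d\mathcal{E}_{total}}{dt}=-\sum_{i\in\{\phi,\psi\}}M_i\int_\Omega(\mu_i-\lambda_i h_i')^2\,d\Omega-\int_\Omega\frac{\mu}{2}\left|\nabla\mathbf{u}+(\nabla\mathbf{u})^{T}\right|^2d\Omega\le 0 .
\end{equation*}
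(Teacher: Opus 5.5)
Your proposal is correct and follows the paper's proof essentially step for step. The wall terms drop out by $\mathcal{B}_\phi=\mathcal{B}_\psi=0$, the multiplier formula for $\lambda_i$ turns each relaxation term into $-M_i\|\mu_i-\lambda_i h_i'\|_{L^2(\Omega)}^2$, and the advective terms integrate by parts into $-\int_\Omega \mathbf{u}\cdot\mathbf{F}_s\,d\Omega$, which cancels the capillary work in the kinetic-energy balance. Your two added conditions, mass continuity for variable density and $\int_\Omega (h_\phi')^2\,d\Omega>0$, are used only tacitly in the paper (the former when it writes $d\mathcal{E}_K/dt=\int_\Omega \mathbf{u}\cdot(\rho\,\partial_t\mathbf{u}+\rho\,\mathbf{u}\cdot\nabla\mathbf{u})\,d\Omega$), so making them explicit is a refinement, not a deviation.
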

	
	\noindent \textbf{Proof.} We show the energy dissipation law by analyzing the temporal evolution of the free energy $\mathcal{E}_f(\phi, \psi)$ and the kinetic energy $\mathcal{E}_K$ in the $L^2$ Hilbert space. We first consider the temporal evolution of the total free energy $\mathcal E_f(\phi,\psi)$, and take its time derivative,
	\begin{equation}
		\frac{d\mathcal{E}_{f}}{dt} = \int_{\Omega} \left( \frac{\delta \mathcal{E}_{f}}{\delta \phi}\frac{\partial \phi}{\partial t} + \frac{\delta \mathcal{E}_{f}}{\delta \psi}\frac{\partial \psi}{\partial t} \right) d\Omega + \int_{\partial\Omega} \left( \mathcal{B}_{\phi}\frac{\partial \phi}{\partial t} + \mathcal{B}_{\psi}\frac{\partial \psi}{\partial t} \right) dS.
	\end{equation}
	With the help of the natural boundary conditions $\mathcal{B}_{\phi} = 0$ and $\mathcal{B}_{\psi} = 0$, the boundary integral strictly vanishes. Based on the definitions of the chemical potentials $\mu_{\phi} = \delta\mathcal{E}_{f}/\delta\phi$ and $\mu_{\psi} = \delta\mathcal{E}_{f}/\delta\psi$, we can obtain
	\begin{equation}
		\frac{d\mathcal{E}_{f}}{dt} = (\mu_{\phi}, \partial_t\phi) + (\mu_{\psi}, \partial_t\psi).
		\label{Dfree0}
	\end{equation}
	Taking the $L^2$ inner product of the Allen-Cahn equation (\ref{AC_phi}) with $\mu_{\phi}$, we obtain
	\begin{equation}
		\begin{aligned}
			(\mu_\phi, \partial_t \phi)
			&= -(\mu_\phi, \nabla \cdot (\phi \mathbf{u})) - M_\phi (\mu_\phi, \mu_\phi - \tilde{\mu}_\phi) \\
			&= -\int_\Omega \mu_\phi \nabla \cdot (\phi \mathbf{u}) \, d\Omega - M_\phi \int_\Omega \mu_\phi (\mu_\phi - \tilde{\mu}_\phi) \, d\Omega \\
			&= -\int_{\partial \Omega} \mu_\phi \phi(\mathbf{u} \cdot \mathbf{n}) \, dS + \int_\Omega \phi \mathbf{u} \cdot \nabla \mu_\phi \, d\Omega - M_\phi \int_\Omega (\mu_\phi - \tilde{\mu}_\phi)^2 \, d\Omega \\
			&= \int_\Omega \phi \mathbf{u} \cdot \nabla \mu_\phi \, d\Omega - M_\phi \left\| \mu_\phi - \tilde{\mu}_\phi \right\|_{L^2(\Omega)}^2,
		\end{aligned}
		\label{Dfreea}
	\end{equation}
	where the divergence theorem and the no-slip boundary condition ($\mathbf{u} = \mathbf{0}$ on $\partial\Omega$) have been adopted to eliminate the boundary integral. Furthermore, according to the definition of $\tilde{\mu}_\phi = \lambda_\phi(t) h_\phi'(\phi)$ and the explicit form of the Lagrange multiplier $\lambda_\phi(t)$, the relaxation term satisfies the following orthogonality identity,
	\begin{equation}
		\begin{aligned}
			(\mu_\phi, \mu_\phi - \tilde{\mu}_\phi) 
			&= (\mu_\phi - \tilde{\mu}_\phi, \mu_\phi - \tilde{\mu}_\phi) + (\tilde{\mu}_\phi, \mu_\phi - \tilde{\mu}_\phi) \\
			&= \|\mu_\phi - \tilde{\mu}_\phi\|_{L^2(\Omega)}^2 + \int_\Omega \lambda_\phi(t) h_\phi'(\phi) (\mu_\phi - \tilde{\mu}_\phi) \, d\Omega \\
			&= \|\mu_\phi - \tilde{\mu}_\phi\|_{L^2(\Omega)}^2 + \lambda_\phi(t) \int_\Omega h_\phi'(\phi) \mu_\phi \, d\Omega - \lambda^2_\phi(t) \int_\Omega [h_\phi'(\phi)]^2 \, d\Omega \\
			&= \|\mu_\phi - \tilde{\mu}_\phi\|_{L^2(\Omega)}^2 + \lambda_\phi(t) \left( \lambda_\phi(t) \int_\Omega [h_\phi'(\phi)]^2 \, d\Omega \right) - \lambda^2_\phi(t) \int_\Omega [h_\phi'(\phi)]^2 \, d\Omega \\
			&= \|\mu_\phi - \tilde{\mu}_\phi\|_{L^2(\Omega)}^2.
		\end{aligned}
	\end{equation}
	Following the same procedure for the surfactant transport equation (\ref{AC_psi}) yields
	\begin{equation}
		(\mu_{\psi}, \partial_t\psi) = \int_{\Omega} \psi \mathbf{u} \cdot \nabla \mu_{\psi} d\Omega - M_{\psi} \|\mu_{\psi} - \bar{\mu}_{\psi}\|_{L^2(\Omega)}^2.
		\label{Dfreeb}
	\end{equation}
	Combining Eqs~(\ref{Dfree0}), (\ref{Dfreea})and (\ref{Dfreeb}), the time rate of total free energy can be rewritten as
	\begin{equation}
		\frac{d\mathcal{E}_{f}}{dt} = \int_{\Omega} \mathbf{u} \cdot \left( \phi \nabla \mu_{\phi} + \psi \nabla \mu_{\psi} \right) d\Omega - M_{\phi} \|\mu_{\phi} - \tilde{\mu}_{\phi}\|_{L^2(\Omega)}^2 - M_{\psi} \|\mu_{\psi} - \bar{\mu}_{\psi}\|_{L^2(\Omega)}^2.
		\label{Dfree}
	\end{equation}
	Taking the $L^2$ inner product of the fluid velocity $\mathbf{u}$ with the momentum equation \ref{eqns}(b), one can obtain
	\begin{equation}
		\frac{d\mathcal{E}_{K}}{dt} = \int_{\Omega} \mathbf{u} \cdot \left( \rho \frac{\partial \mathbf{u}}{\partial t} + \rho \mathbf{u} \cdot \nabla \mathbf{u} \right) d\Omega = \int_{\Omega} \mathbf{u} \cdot \left( -\nabla p + \nabla \cdot \left[ \mu (\nabla \mathbf{u} + \nabla \mathbf{u}^\mathrm{T}) \right] + \mathbf{F}_s \right) d\Omega.
	\end{equation}
	Using the integration by parts and the divergence-free condition $\nabla \cdot \mathbf{u} = 0$, we have
	\begin{equation}
		-\int_{\Omega} \mathbf{u} \cdot \nabla p d\Omega = -\int_{\partial\Omega} p(\mathbf{u} \cdot \mathbf{n}) dS + \int_{\Omega} p (\nabla \cdot \mathbf{u}) d\Omega = 0.
	\end{equation}
	For the viscous dissipation term, one can derive 
	\begin{equation}
		\begin{aligned} 	\int_{\Omega} \mathbf{u} \cdot \nabla \cdot \left[ \mu (\nabla \mathbf{u} + \nabla \mathbf{u}^\mathrm{T}) \right] \mathrm{d}\Omega 	&= \int_{\partial \Omega} \mathbf{u} \cdot \left[ \mu (\nabla \mathbf{u} + \nabla \mathbf{u}^\mathrm{T}) \mathbf{n} \right] \mathrm{d}S 	- \int_{\Omega} \nabla \mathbf{u} : \left[ \mu (\nabla \mathbf{u} + \nabla \mathbf{u}^\mathrm{T}) \right] \mathrm{d}\Omega \\ 	&= -\int_{\Omega} \nabla \mathbf{u} : \left[ \mu (\nabla \mathbf{u} + \nabla \mathbf{u}^\mathrm{T}) \right] \mathrm{d}\Omega \\ 	&= -\int_{\Omega} \frac{\mu}{2} \vert{}\nabla \mathbf{u} + \nabla \mathbf{u}^\mathrm{T}\vert{}^2 \, \mathrm{d}\Omega, \end{aligned}
	\end{equation}
	where the integration by parts and the no-slip condition ($\mathbf{u}\vert{}_{\partial\Omega} = \mathbf{0}$) have been used, and the last equality is derived from the pointwise tensor identity $\nabla \mathbf{u} : \left[ \mu (\nabla \mathbf{u} + \nabla \mathbf{u}^\mathrm{T}) \right] = \frac{\mu}{2} \vert{}\nabla \mathbf{u} + \nabla \mathbf{u}^\mathrm{T}\vert{}^2$, which stems from the symmetry of the strain-rate tensor $\mathbf{D} = \frac{1}{2}(\nabla \mathbf{u} + \nabla \mathbf{u}^\mathrm{T})$ and the orthogonality between symmetric and antisymmetric tensors. Thus, the time derivative of the kinetic energy can be reformulated by
	\begin{equation}
		\frac{d\mathcal{E}_{K}}{dt} = \int_{\Omega} \mathbf{u} \cdot \mathbf{F}_s d\Omega - \int_{\Omega} \frac{\mu}{2} |\nabla \mathbf{u} + \nabla \mathbf{u}^\mathrm{T}|^2 d\Omega.
		\label{Dkinetic}
	\end{equation}
	Summing the evolution equations for the free energy (\ref{Dfree}) and the kinetic energy (\ref{Dkinetic}), the total energy dissipation rate of the system can be given by
	\begin{equation}
		\frac{d\mathcal{E}_{total}}{dt} = \int_{\Omega} \mathbf{u} \cdot \left( \phi \nabla \mu_{\phi} + \psi \nabla \mu_{\psi} + \mathbf{F}_s \right) d\Omega - M_{\phi} \|\mu_{\phi} - \tilde{\mu}_{\phi}\|_{L^2(\Omega)}^2 - M_{\psi} \|\mu_{\psi} - \bar{\mu}_{\psi}\|_{L^2(\Omega)}^2 - \int_{\Omega} \frac{\mu}{2} |\nabla \mathbf{u} + \nabla \mathbf{u}^\mathrm{T}|^2 d\Omega.
	\end{equation}
	Substituting Eq. (\ref{eq_Fs0}) back into the total energy equation, we can obtain the energy dissipation law,
	\begin{equation}
		\frac{d\mathcal{E}_{total}}{dt} = -M_{\phi} \|\mu_{\phi} - \tilde{\mu}_{\phi}\|_{L^2(\Omega)}^2 - M_{\psi} \|\mu_{\psi} - \bar{\mu}_{\psi}\|_{L^2(\Omega)}^2 - \int_{\Omega} \frac{\mu}{2} |\nabla \mathbf{u} + \nabla \mathbf{u}^\mathrm{T}|^2 d\Omega \le 0.
		\label{law}
	\end{equation}
	This completes the proof. \quad $\square$

	\section{Numerical methods}
	\label{sec3}
	In this section, we will present a lattice Boltzmann framework for the coupled flow, phase-field, and surfactant transport equations. Actually, a detailed derivation and analysis of the lattice Boltzmann framework for the Navier–Stokes equations and the convection–diffusion equation can be found in Ref. \cite{ChaiPRE2020}. Following a similar approach \cite{ChaiPRE2020}, we develop a unified multiple-relaxation-time lattice Boltzmann framework, in which the evolution equation for the distribution function associated with the $k$-th physical field can be expressed as
	\begin{equation}
		f_{i,k}(\mathbf{x} + \mathbf{c}_i \Delta t, t + \Delta t) = f_{i,k}(\mathbf{x}, t) - (\mathbf{M}^{-1}\mathbf{S}^k\mathbf{M})_{ij} [f_{j,k}(\mathbf{x}, t) - f_{j,k}^{eq}(\mathbf{x}, t)] + \Delta t F_{i,k}(\mathbf{x}, t),
	\end{equation}
	where $f_{i,k}(\mathbf{x}, t)$ denotes the distribution function associated with the $k$-th physical field at position $\mathbf{x}$ and time $t$, with $k=1,2,$ and $3$ corresponding to the flow, phase, and surfactant concentration fields, respectively. Here, $\Delta t$ is the time step, $\mathbf{c}_i$ is the discrete velocity ($i = 0, 1, \dots, q - 1$ with $q$ being the number of discrete velocity directions), $\mathbf{M}$ is a $q \times q$ transformation matrix projecting the distribution functions onto moment space, $\mathbf{S}^k$ is the diagonal relaxation matrix associated with the \(k\)-th distribution function. To recover the macroscopic equations for different fields, the local equilibrium distribution functions for the respective fields are designed as
	\begin{subequations}
		\begin{equation}
			f_{i,1}^{eq}(\mathbf{x}, t) = 
			\begin{cases}
				\dfrac{p}{c_s^2}(\omega_i - 1) + \rho \omega_i \left[ \dfrac{\mathbf{c}_i \cdot \mathbf{u}}{c_s^2} + \dfrac{(\mathbf{c}_i \cdot \mathbf{u})^2}{2c_s^4} - \dfrac{\mathbf{u} \cdot \mathbf{u}}{2c_s^2} \right], & i = 0, \\
				\dfrac{p}{c_s^2}\omega_i + \rho \omega_i \left[ \dfrac{\mathbf{c}_i \cdot \mathbf{u}}{c_s^2} + \dfrac{(\mathbf{c}_i \cdot \mathbf{u})^2}{2c_s^4} - \dfrac{\mathbf{u} \cdot \mathbf{u}}{2c_s^2} \right], & i \neq 0,
			\end{cases}
		\end{equation}
		\begin{equation}
			\begin{aligned}
				f_{i,2}^{eq}(\mathbf{x}, t) = 
				\begin{cases} 
					\phi + (\omega_i - 1)\eta_\phi \phi, & i = 0, \\[6pt]
					\omega_i \eta_\phi \phi + \dfrac{\omega_i \mathbf{c}_i \cdot (\phi \mathbf{u})}{c_s^2} , & i \neq 0,
				\end{cases} \quad \quad
				f_{i,3}^{eq}(\mathbf{x}, t) = 
				\begin{cases} 
					\psi + (\omega_i - 1)\eta_\psi \psi, & i = 0, \\[6pt]
					\omega_i \eta_\psi \psi + \dfrac{\omega_i \mathbf{c}_i \cdot (\psi \mathbf{u})}{c_s^2}, & i \neq 0,
				\end{cases} 
			\end{aligned}
		\end{equation}
	\end{subequations}
	where $\omega_i$ is the lattice weight, $c_s = c /\sqrt 3$ is the lattice sound speed, $c=\Delta x/\Delta t$ is the lattice speed, $\eta_\phi$ and $\eta_\psi$ are adjustable parameters controlling the diffusion coefficients. In addition, the discrete source terms $F_{i,k}$ are defined as 
	\begin{subequations}
		\begin{equation}
			F_{i,1}(\mathbf{x}, t) = A_{ij}^1 \omega_j \left[ \mathbf{u} \cdot \nabla \rho + \frac{\mathbf{c}_j \cdot \mathbf{F}}{c_s^2} + \frac{\mathbf{u} \nabla \rho : (\mathbf{c}_j \mathbf{c}_j - c_s^2 \mathbf{I})}{c_s^2} \right],
		\end{equation}
		\begin{equation}
			F_{i,2}(\mathbf{x}, t) = A_{ij}^2 \frac{\omega_j \mathbf{c}_j \cdot \partial_t (\phi \mathbf{u})}{c_s^2} + \omega_i \left( R_\phi + \frac{\Delta t}{2} \partial_t R_\phi \right), \quad R_\phi = M_\phi \left[ -4\beta(1-\psi)(\phi^3 - \phi) - W\psi\phi + \tilde{\mu}_\phi \right],
		\end{equation}
		\begin{equation}
			F_{i,3}(\mathbf{x}, t) = A_{ij}^3 \omega_j \frac{\mathbf{c}_j \cdot \partial_t (\psi \mathbf{u})}{c_s^2} + \omega_i \left( R_\psi + \frac{\Delta t}{2} \partial_t R_\psi \right), 
		\end{equation}
		\begin{equation}
			R_\psi = M_\psi \left[ -\frac{M_0}{M_\psi} \nabla^2 \psi -k_B T \ln\left( \frac{\psi}{1-\psi} \right) + \beta(1-\phi^2)^2 - \frac{W}{2}\phi^2 + \bar{\mu}_\psi \right],
		\end{equation}
	\end{subequations}
	where $ \mathbf{F}=\mathbf{F}_{\mathrm{s}}+\mathbf{G}$ is the total force, $A_{ij}^k=[\mathbf{M}^{-1}(\mathbf{I} - \mathbf{S}^k/2 )\mathbf{M}]_{ij}$, and $\mathbf{I}$ is the identity matrix. Here, the relaxation matrices are specified as $\mathbf{S}^k = \mathbf{diag}(s_0^k, s_1^k, s_1^k, s_2^k, s_2^k, s_2^k, s_3^k, s_3^k, s_4^k)$, where \(s_\alpha^k\) denotes the relaxation rate associated with the corresponding moment of distribution function $f_{i,k}(\mathbf{x}, t)$. The relaxation rates governing the viscous and diffusive modes are related to the transport coefficients through $\nu = c_s^2 \Delta t (1/s_2^1 - 0.5)$, $M_\phi \kappa  = \eta_{\phi} c_s^2 \Delta t (1/s_1^2 - 0.5)$, and $M_\psi \lambda_s + M_0 = \eta_{\psi} c_s^2 \Delta t (1/s_1^3 - 0.5)$. Furthermore, the macroscopic velocity $\mathbf{u}$, pressure $p$, order parameter $\phi$ and concentration $\psi$ are calculated by
	\begin{equation}
		\rho \mathbf{u} = \sum_i \mathbf{c}_i f_{i,1} + \frac{\Delta t}{2} \mathbf{F}, \quad
		p = \frac{c_s^2}{1 - \omega_0} \left( \sum_{i \neq 0} f_{i,1} + \frac{\Delta t}{2} \mathbf{u} \cdot \nabla \rho - \rho \omega_0 \frac{\mathbf{u} \cdot \mathbf{u}}{2c_s^2} \right), \quad 
		\phi = \sum_{i} f_{i,2}(\mathbf{x}, t), \quad 
		\psi = \sum_{i} f_{i,3}(\mathbf{x}, t).
	\end{equation}
	
	Finally, the  standard lattice-based numerical scheme \cite{ZhangICHMT2025} is applied to discretize the wetting and solid-surface adsorption boundary conditions given in Eq. (\ref{boundary}), the second-order isotropic lattice stencils and first-order backward-difference formula are used to compute gradient, Laplacian and time-derivative terms,  
	\begin{equation}
		\begin{aligned}
			\nabla \zeta(\mathbf{x}, t)=\sum_{i \neq 0} \frac{\omega_i \mathbf{c}_i \zeta\left(\mathbf{x}+\mathbf{c}_i \Delta t, t\right)}{c_s^2 \Delta t},\: \nabla^2 \zeta(\mathbf{x}, t)=\sum_{i \neq 0} \frac{2 \omega_i\left[\zeta\left(\mathbf{x}+\mathbf{c}_i \Delta t, t\right)-\zeta(\mathbf{x}, t)\right]}{c_s^2 \Delta t^2},\: 
			\frac{\partial  \zeta(\mathbf{x}, t) }{\partial t}= \frac{\zeta(\mathbf{x}, t) - \zeta(\mathbf{x}, t - \Delta t)}{\Delta t}.
		\end{aligned}
	\end{equation}

	\section{Numerical results and discussion}
	\label{sec4}
	
	In this section, several numerical examples are performed to test the proposed phase-field model and investigate surfactant-laden contact line dynamics with solid adsorption. To facilitate parametric analysis, the characteristic scales, including the initial droplet radius $R_0$, reference velocity $U_0$, reference density $\rho_0$, dynamic viscosity $\mu_0$, and clean fluid--fluid interfacial tension $\sigma_{gl,0}$ are introduced, and some key dimensionless parameters governing the simulations can be given by
	\begin{equation}
		\mathrm{Re}=\frac{\rho_0U_0R_0}{\mu_0},
		\qquad
		\mathrm{Ca}=\frac{\mu_0U_0}{\sigma_{gl,0}},
		\qquad
		\mathrm{Cn}=\frac{\varepsilon}{R_0},
		\qquad
		\mathrm{Ex}=\frac{4\beta}{W},
		\qquad
		\mathrm{Pi}=\frac{k_BT}{4\beta},
	\end{equation}
	where $\mathrm{Re}$ and $\mathrm{Ca}$ denote the Reynolds and capillary numbers, which are defined as the ratios of inertial force to viscous forces and viscous to capillary force, respectively. The Cahn number $\mathrm{Cn}$ characterizes the relative diffuse-interface thickness. The solubility parameter $\mathrm{Ex}$ represents the energetic penalty that suppresses surfactant dissolution into the bulk phases, and $\mathrm{Pi}$ quantifies the thermal mixing entropy of the surfactant relative to the phase-field interfacial energy. For the present free-energy parametrization, $\mathrm{Pi}$ is related to the Langmuir adsorption constant $\psi_c$,
	\begin{equation}
		\mathrm{Pi}=-\frac{1+2/\mathrm{Ex}}{4\ln\psi_c}.
	\end{equation}

	\subsection{Accuracy and mass conservation test}
	To assess the accuracy of the coupled phase-field lattice Boltzmann formulation, we consider a simple problem in a doubly periodic domain $\Omega=[0,1]^2$, where a constant divergence-free velocity $\mathbf{u}=(0.05,-0.03)$ is prescribed, the analytical solutions of phase-field variable and the surfactant concentration are given by $\phi^e(x,y,t)=0.5e^{-t}\sin[2\pi(x-0.05t)]\cos[2\pi(y+0.03t)]$ and $\psi^e(x,y,t)=0.5+0.05e^{-2t}\cos[2\pi(x-0.05t)]\sin[2\pi(y+0.03t)]$, respectively. Substituting these manufactured solutions into the conservative Allen-Cahn equations for phase and concentration fields, one can determine  $S_\phi=-\phi^e+M(\mu_\phi^e-\bar{\mu}_\phi^e)$ and $S_\psi=-2(\psi^e-0.5)+M(\mu_\psi^e-\bar{\mu}_\psi^e)$. Here, the global Lagrange multipliers $\bar{\mu}_\phi^e$ and $\bar{\mu}_\psi^e$, representing the spatial averages of the analytical chemical potentials evaluated from the manufactured solutions, are given by $\bar{\mu}_\phi^e = 0$, $ \bar{\mu}_\psi^e = -\beta\left(1- a_\phi^2/2 + 9a_\phi^4/64 \right)+W_{\mathrm{ads}}a_\phi^2/8$,
	with $a_\phi = 0.5e^{-t}$. The model parameters are set to $M=0.01$, $\beta=0.03$, $\kappa=9.375\times10^{-6}$, $k_BT=0.162$, $W=0.24$, and $\lambda_s=0$. In our simulations, the grids with resolutions ranging from $32^2$ to $512^2$ are used, and the time step is determined by $\Delta t=\Delta x^2/0.15$. To evaluate the numerical accuracy, we consider the relative $L_2$ errors of both phase and concentration fields,
	\begin{equation}
		E_2^{\phi} = \sqrt{\frac{\sum_{i,j}[\phi_{i,j}(t) - \phi^e(x_i, y_j, t)]^2}{\sum_{i,j}[\phi^e(x_i, y_j, t)]^2}}, \quad E_2^{\psi} = \sqrt{\frac{\sum_{i,j}[\psi_{i,j}(t) - \psi^e(x_i, y_j, t)]^2}{\sum_{i,j}[\psi^e(x_i, y_j, t)]^2}}.
	\end{equation}
	
	We first conduct a numerical test on the convergence rate of the lattice Boltzmann method, and present the results in Table \ref{table1}. From this table, one can observe that under diffusive scaling, the numerical method has a second-order convergence rate in space. Meanwhile, it is also found that the relative deviations of the globally conserved quantities associated with $\phi$ and $\psi$ remain below $5.0\times10^{-17}$ and $1.9\times10^{-14}$, respectively, confirming global conservation to within round-off error.
	\begin{table}[H]
		\centering
		\caption{Relative $L_2$ errors and convergence orders of lattice Boltzmann method for the phase field variable $\phi$ and surfactant concentration $\psi$ at $t=5$.}
		\begin{tabular}{ccccc}
			\toprule
			$\Delta x$ & $E_2^{\phi}$             & Order   & $E_2^{\psi}$             & Order   \\ \midrule
			1/32       & $1.49467 \times 10^{-3}$ & -       & $1.22546 \times 10^{-4}$ & -       \\
			1/64       & $3.75136 \times 10^{-4}$ & 1.99434 & $3.06679 \times 10^{-5}$ & 1.99853 \\
			1/128      & $9.38754 \times 10^{-5}$ & 1.99860 & $7.66864 \times 10^{-6}$ & 1.99969 \\
			1/256      & $2.34745 \times 10^{-5}$ & 1.99965 & $1.91726 \times 10^{-6}$ & 1.99992 \\
			1/512      & $5.86899 \times 10^{-6}$ & 1.99991 & $4.79321 \times 10^{-7}$ & 1.99999 \\ \bottomrule
		\end{tabular}
		\label{table1}
	\end{table}

	Furthermore, to numerically evaluate the theoretically predicted second-order accuracy of zero-level-set geometric-volume preservation, we simulate the capillary-driven relaxation of an initially stationary elliptical droplet in a doubly periodic square domain $\Omega=[0,1]\times[0,1]$. The initial zero-level set is centered at $(0.5, 0.5)$ with semi-axes $a = 0.28$ and $b = 0.18$, corresponding to an analytical sharp-interface area of $\mathcal{A}_\Gamma(0; \varepsilon)= \pi a b$. The phase field is initialized by the equilibrium profile $\phi(\boldsymbol{x}, 0) = \phi_b \tanh[2\phi_b d(\boldsymbol{x})/\varepsilon]$, where $d(\boldsymbol{x})$ denotes the signed distance to the elliptical interface, $\varepsilon$ is the parameter related to interfacial thickness, and $\phi_b = 1$. To isolate the phase-field conservation properties, surfactant transport is disabled and matched fluid properties are assigned: $\rho_l = \rho_g = 1$, $\mu_l = \mu_g = 0.01$, surface tension $\sigma = 0.01$, and mobility $M_\phi = 0.01$. To minimize spatial-discretization effects, a large grid size $1024 \times 1024$ is employed, so that the error related to the parameter $\varepsilon$ dominates within the fitted asymptotic interval. All simulations are terminated at $t = 100$, this is because at this moment, the maximum deviation of the droplet axis ratio from unity falls below $1.5 \times 10^{-4}$ and the maximum velocity magnitude remains below $2.3 \times 10^{-5}$, indicating that the droplet has reached a quiescent, nearly circular equilibrium state. The relative geometric-area drift is evaluated directly from the reconstructed zero-level-set area $\mathcal A_\Gamma$ as $E(\varepsilon) = \vert{}\mathcal{A}_\Gamma(t; \varepsilon) - \mathcal{A}_\Gamma(0; \varepsilon)\vert{} / \mathcal{A}_\Gamma(0; \varepsilon)$. Here the area $\mathcal{A}_\Gamma(t; \varepsilon)$ enclosed by the zero-level set is measured by dividing each Cartesian cell into two triangles, linearly reconstructing the \(\phi=0\) contour within each triangle, and summing the area fraction $\phi>0$ over the domain.

	We plot the relative geometric-volume drift $E(\varepsilon)$ at $t=100$ as a function of the interfacial thickness parameter $\varepsilon$ in Fig.~\ref{fig1}. From this figure, one can observe that the linear-constraint formulation produces larger geometric-volume errors and does not exhibit quadratic convergence over the tested range. In contrast, the proposed nonlinear formulation follows the $\mathcal{O}(\varepsilon^2)$ scaling within the asymptotic interval $\varepsilon \in [0.0175, 0.0300]$, yielding a second-order convergence rate. Furthermore, the nonlinear formulation yields errors that are one order of magnitude smaller than those of the linear formulation at $\varepsilon = 0.01$. These results provide numerical evidence of second-order geometric-area preservation over the stated asymptotic interval and are consistent with the theoretical prediction in Eq.~(\ref{second}).

	\begin{figure}[H]
		\centering
		\includegraphics[scale=0.5]{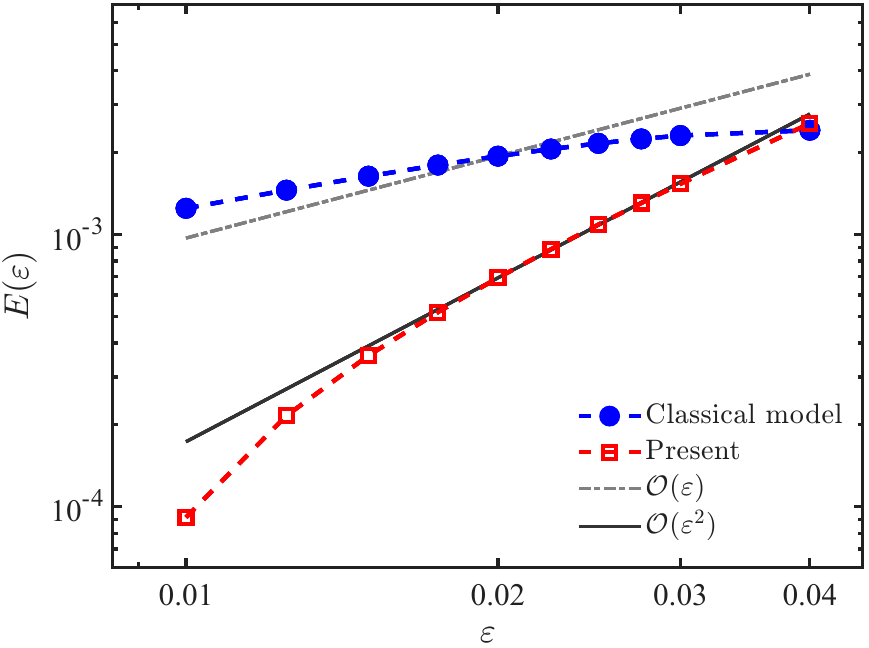} 
		\caption{ The relative geometric-area drift $E(\varepsilon)$ with respect to the interface-thickness parameter $\varepsilon$. }
		\label{fig1}
	\end{figure}

	\subsection{Static surfactant-covered droplet}
	
	To show the capability of the proposed method in maintaining coupled thermodynamic equilibrium, preserving the prescribed global mass/volume constraints, and suppressing spurious velocities, we perform numerical simulations of a stationary surfactant-laden droplet in the physical domain $\Omega=[0,1]\times[0,1]$ with periodic boundary conditions imposed in both directions. Initially, a circular droplet with the radius $R=0.25$ is placed at the center of the domain, and the physical and numerical parameters are specified as: $\rho_l/\rho_g=1000$, $\mu_l/\mu_g=100$, $\rho_g=1$, $\mu_g=0.1$, $\sigma_{gl,0}=0.001$, $\varepsilon=0.025$, $M_\phi=M_\psi=0.01$, $\lambda_s=0$, $k_BT=0.162$, and $W=0.24$.

	\begin{figure}[H]
		\centering
		\includegraphics[scale=0.5]{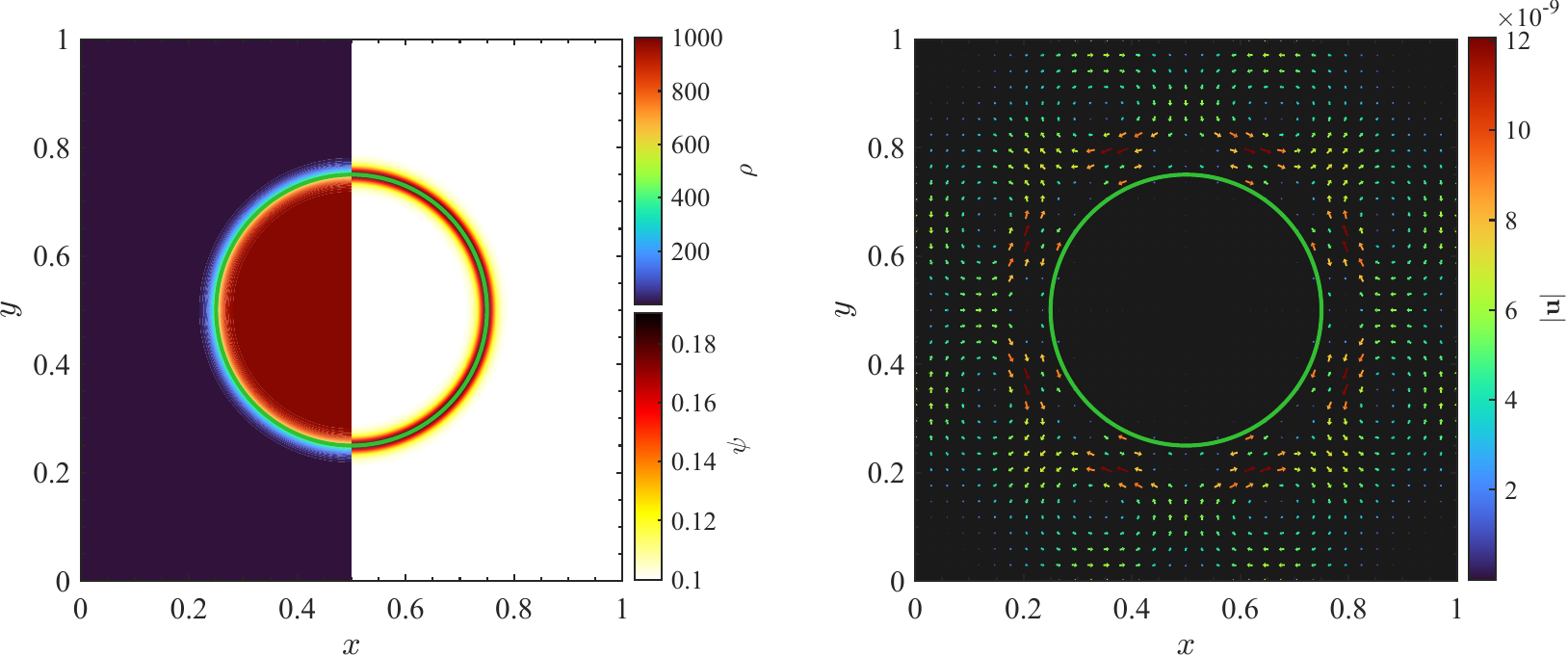} 
		\put(-395,150){(\textit{a})}
		\put(-190,150){(\textit{b})}
		\caption{The equilibrium distributions of the density, surfactant concentration, and corresponding velocity field at $\psi_b=0.1$ and $\Delta x=\Delta y=1/200$. (a) Distributions of the density $\rho$ (left half) and  concentration $\psi$ (right half). (b) Velocity field, with vectors colored by the velocity magnitude \(|\mathbf{u}|\) and the green line indicating the \(\phi=0\) contour. }
		\label{fig2}
	\end{figure}

	We present the equilibrium distributions of the order parameter and the surfactant concentration, together with the corresponding velocity field in Fig. \ref{fig2} where $\psi_b=0.1$ and $\Delta x=\Delta y=1/200$. As seen from this figure, both the density and surfactant concentration vary smoothly across the diffuse interface, with pronounced surfactant enrichment localized within the interfacial region. The velocity field exhibits weak, approximately symmetric recirculating vortices near the interface that decay rapidly into the bulk phases. Such a velocity pattern is characteristic of spurious currents in the diffuse-interface simulations of static equilibria \cite{YueJCP2010}. It should be noted, however, that the maximum velocity magnitude is of the order $10^{-8}$, indicating that the balance of discrete force in the present formulation can be well preserved.

	To quantitatively evaluate the equilibrium solutions, Figs. \ref{fig3}(a) and \ref{fig3}(b) show the numerical and analytical profiles of the order parameter $\phi$ and the surfactant concentration $\psi$ along the droplet centerline at $t=40$. It is found from this figure that for all considered values of bulk surfactant concentration $\psi_b$, there is a good agreement between them. The surfactant concentration reaches a maximum within the diffuse-interface region and approaches the specified bulk value $\psi_b$ away from the interface, which is consistent with preferential surfactant adsorption at the fluid--fluid interface. As $\psi_b$ increases, the bulk value of the order parameter $\phi_b$ decreases slightly, while the interfacial surfactant concentration increases markedly. Since the values of $\psi_b$ considered here are relatively close, the profiles of $\phi$ differ only slightly and therefore nearly overlap, as shown in Fig. \ref{fig3}(a). For the present nonlinear conservative formulation, the globally conserved quantities $m_\phi$ and $m_\psi$ defined in Eq. (\ref{conservation}) are considered, and the temporal evolutions of them at different values of $\psi_b$ are plotted in Figs. \ref{fig3}(c) and \ref{fig3}(d), respectively. 
	The results illustrate that both quantities remain constants throughout the simulations for all considered bulk surfactant concentrations, confirming that the prescribed global constraints are preserved.
	
	\begin{figure}[H]
		\centering
		\includegraphics[scale=0.45]{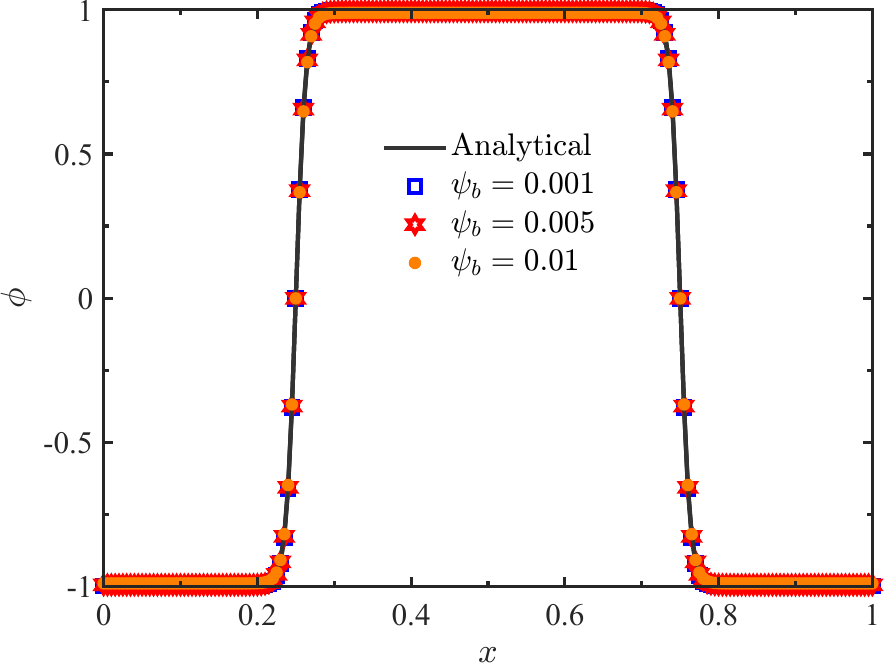} 
		\put(-200,140){(\textit{a})}
		\hspace{5mm}  
		\includegraphics[scale=0.45]{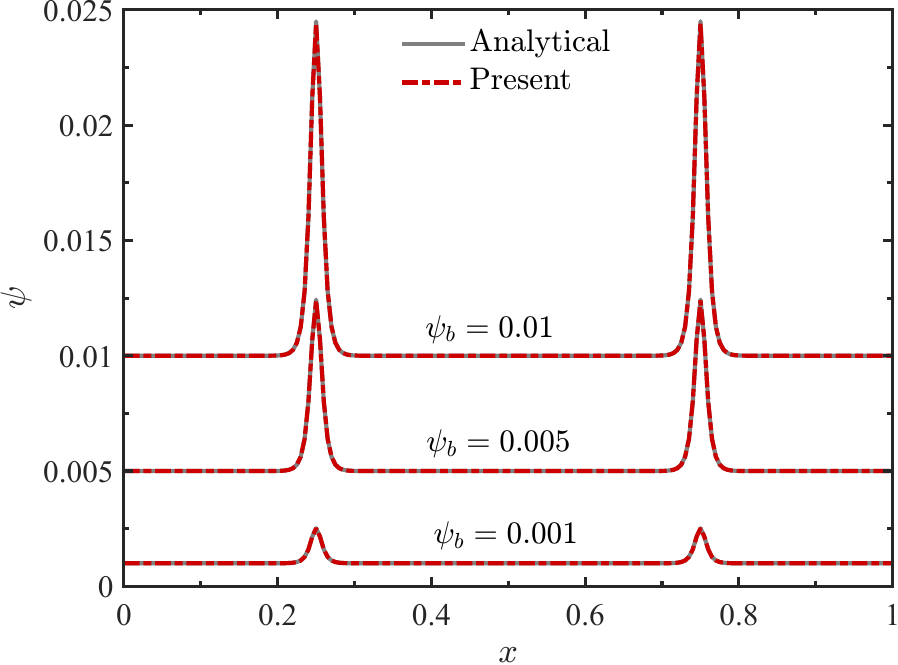} 
		\put(-200,140){(\textit{b})}
		\vspace{1mm}  
		\includegraphics[scale=0.45]{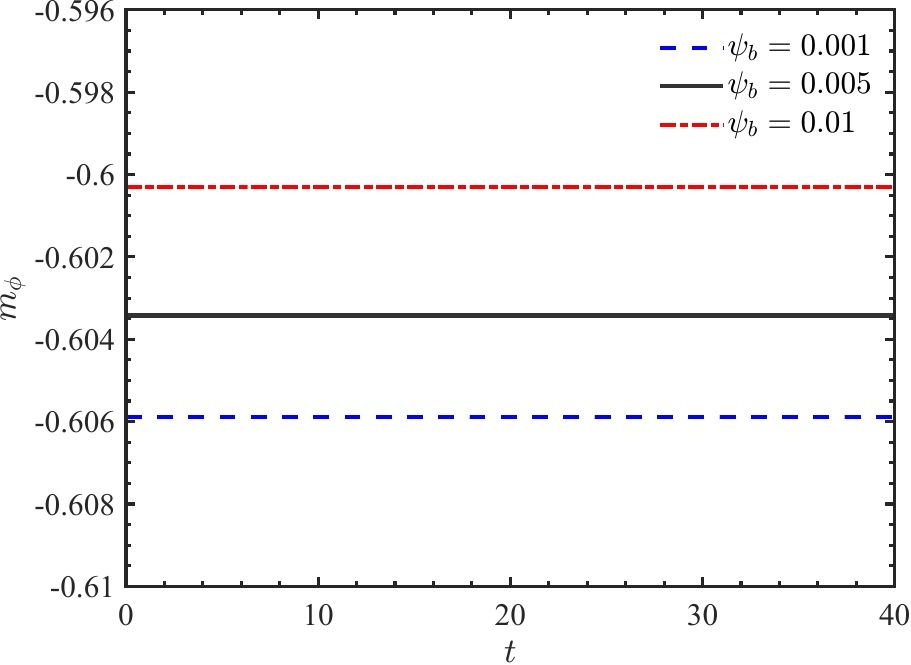} 
		\put(-203,140){(\textit{c})}
		\hspace{5mm}
		\includegraphics[scale=0.45]{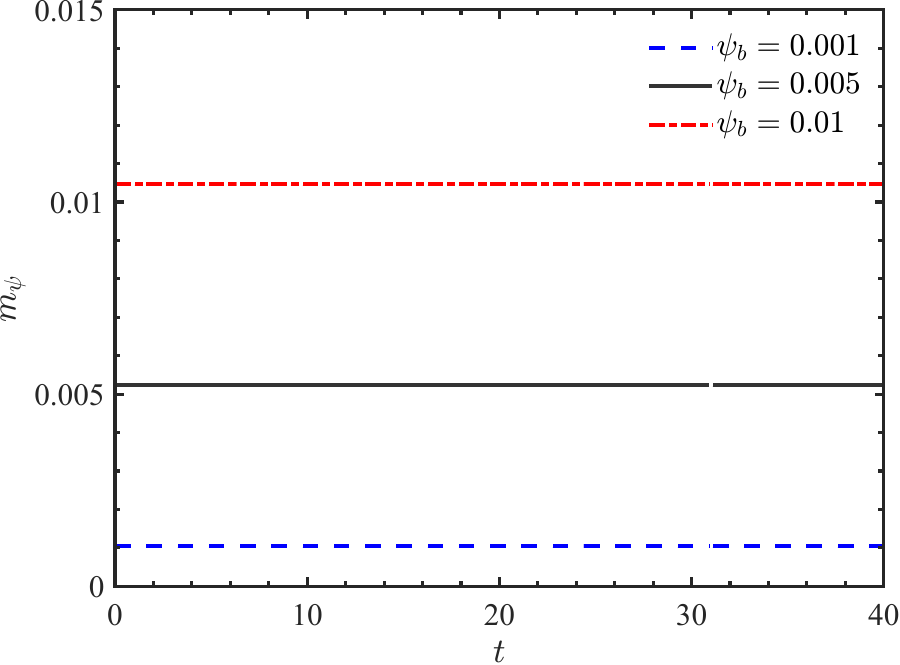} 
		\put(-203,140){(\textit{d})}
		\caption{ The equilibrium solutions and conservation properties at different values of bulk surfactant concentration $\psi_b$. (a) Numerical and analytical profiles of order parameter $\phi$. (b) Numerical and analytical profiles of surfactant concentration $\psi$. (c) Temporal evolution of the nonlinear phase measure $m_\phi$. (d) Temporal evolution of the total surfactant mass $m_\psi$. }
		\label{fig3}
	\end{figure}

	Spurious velocities, also referred to as parasitic currents, are a well-known numerical artifact in diffuse-interface simulations of multiphase flows. Table \ref{tab2} summarizes the maximum values of spurious velocity magnitude $\vert{}\mathbf{u}\vert{}_{\max}$ at different grid resolutions and bulk surfactant concentrations. As the grid is refined from $100\times100$ to $800\times800$, $\vert{}\mathbf{u}\vert{}_{\max}$ decreases from $\mathcal{O}(10^{-6})$ to below $\mathcal{O}(10^{-11})$, while showing only a weak dependence on $\psi_b$. These results demonstrate that the proposed method can accurately reproduce the coupled thermodynamic equilibrium in the presence of interfacial surfactant adsorption, preserve the imposed global constraints, and effectively suppress the spurious currents.

	\begin{table}[H]
		\centering
		\caption{Maximum value of spurious velocity $\vert{}\mathbf{u}\vert{}_{\max}$ at different bulk surfactant concentrations $\psi_b$ and grid resolutions at $t=40$.}
		\begin{tabular}{ccccc}
			\toprule
			$\psi_b$
			& $100 \times 100$
			& $200 \times 200$
			& $400 \times 400$
			& $800 \times 800$ \\ \midrule
			0.001
			& $2.73065 \times 10^{-6}$
			& $2.01373 \times 10^{-8}$
			& $5.16011 \times 10^{-10}$
			& $2.55204 \times 10^{-11}$ \\
			0.005
			& $2.49576 \times 10^{-6}$
			& $1.92287 \times 10^{-8}$
			& $5.37177 \times 10^{-10}$
			& $2.91039 \times 10^{-11}$ \\
			0.01
			& $2.25421 \times 10^{-6}$
			& $1.81499 \times 10^{-8}$
			& $5.76585 \times 10^{-10}$
			& $3.39571 \times 10^{-11}$ \\
			0.1
			& $3.19836 \times 10^{-7}$
			& $1.26912 \times 10^{-8}$
			& $7.62299 \times 10^{-10}$
			& $4.98816 \times 10^{-11}$ \\ \bottomrule
		\end{tabular}
		\label{tab2}
	\end{table}

	\subsection{Droplet spreading on a wetting solid}
	To evaluate the effect of the constraint function $h_\phi(\phi)$ used in the volume constraint on the equilibrium droplet geometry, the relaxation of a two-dimensional surfactant-free droplet on a homogeneous solid wall is investigated. The computational domain is $\Omega=[0,2]\times[0,1]$, and is discretized with a uniform $200\times100$ grid. Initially, a semicircular droplet of radius $R_0=0.35$ is placed at $(x_0,y_0)=(1,0)$, and the order-parameter field is given by
	\begin{equation}
		\phi(x,y,0) = \phi_b\tanh \left[ \frac{2\phi_b(R_0-\sqrt{(x-x_0)^2+(y-y_0)^2})}{\varepsilon} \right],
	\end{equation}
	where $\phi_b=1$. The density and viscosity ratios are fixed as $\rho_l/\rho_g=1000$ and $\mu_l/\mu_g=100$, and other simulation parameters are set as $\Delta x=0.01$, $\sigma_{gl,0}=0.1$, $M_\phi=0.1$, $\rho_g=1$, $\nu_g=0.1$, and $\varepsilon=6\Delta x$. The periodic boundary conditions are imposed in the horizontal direction, while the no-slip boundary conditions are applied at the top and bottom boundaries. The wetting boundary condition given in Eq. (\ref{boundary}) is imposed at the bottom wall. According to the previous work \cite{ZhuJCP2020}, the analytical equilibrium shape is a two-dimensional circular cap uniquely determined by the initial geometric area $\mathcal{A}_0=\pi R_0^2/2$ and the specified contact angle $\theta_0$. The corresponding analytical expressions of the droplet radius $R$, spreading length $L$, height $H$, and interface contour are given by
	\begin{equation}
		R=R_0\sqrt{\frac{\pi/2}{\theta_0-\sin\theta_0\cos\theta_0}},\ L=2R\sin\theta_0, \ H=R(1-\cos\theta_0), \ (x-x_0)^2+(y+R\cos\theta_0)^2=R^2, y\ge0.
		\label{a}
	\end{equation}
	Figs. \ref{fig4} and \ref{fig5} present the analytical and numerical results of steady-state spreading length $L$, droplet height $H$ and droplet profile. From these figures, one can find that although the classical linear model and the present nonlinear model can accurately predict the spreading length, the former brings larger errors in predicting the droplet height, and the shrinkage of the droplet profile becomes more pronounced with the increase of contact angle $\theta_0$. On the contrary, the present nonlinear model with improved volume conservation maintains high accuracy across all contact angles considered, indicating that the nonlinear volume constraint substantially reduces curvature-induced geometric volume loss.

	\begin{figure}[H]
		\centering
		\includegraphics[scale=0.45]{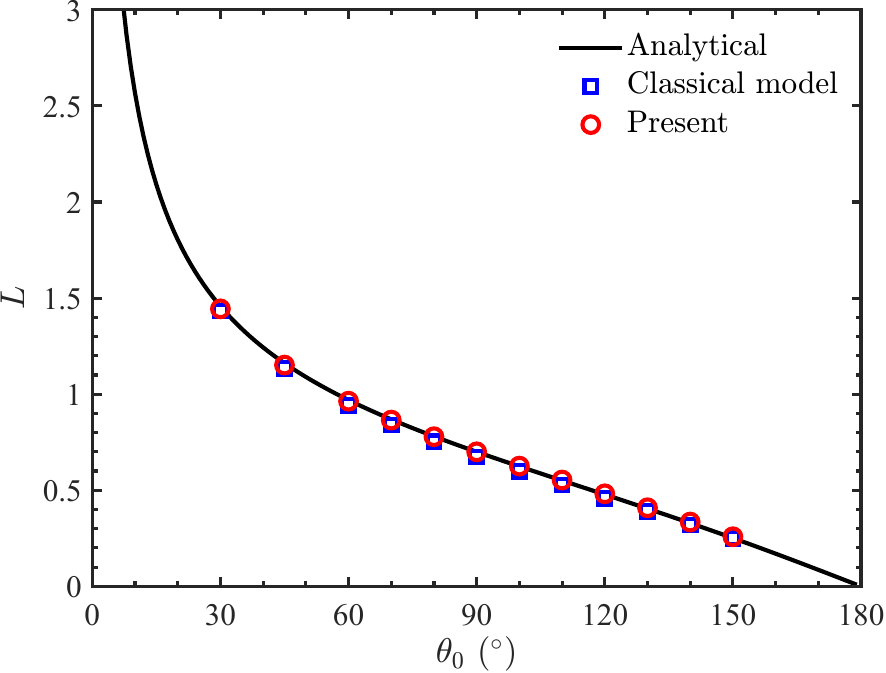} 
		\put(-195,135){(\textit{a})}
		\hspace{5mm}  
		\includegraphics[scale=0.45]{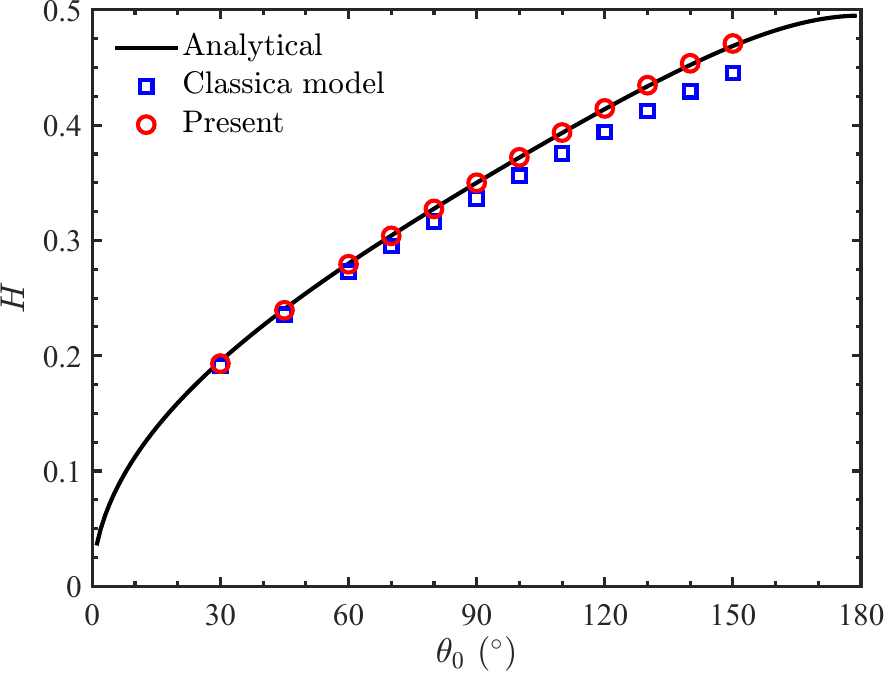} 
		\put(-195,135){(\textit{b})}
		\caption{ Comparisons of the analytical predictions with the numerical results based on the present nonlinear model and the classical linear-constraint model: (a) spreading length $L$ and (b) droplet height $H$. }
		\label{fig4}
	\end{figure}
	
	To quantitatively explain the observed droplet shrinkage, we derive leading-order asymptotic estimates for the bulk-phase shift $\delta_\phi$ and geometric-area loss $\Delta \mathcal{A}/\mathcal{A}_0$ for the classical linear model. At the static equilibrium state, the chemical potential in the linear model is spatially uniform, i.e., $\mu_\phi = \bar{\mu}_\phi$. In the homogeneous bulk regions away from the diffuse interface, the gradient terms would vanish, and the chemical potential derived from the double-well potential $f(\phi) = \beta(\phi^2-1)^2$ can be given by $\mu_\phi = f'(\phi) = 4\beta\phi(\phi^2-1)$. Assuming a small uniform shift of the bulk phase values $\phi_\pm = \pm 1 + \delta_\phi$ ($\vert{}\delta_\phi\vert{} \ll 1$), a Taylor expansion about $\phi=\pm 1$ yields
	\begin{equation}
		\mu_\phi(\pm 1 + \delta_\phi) = 8\beta\delta_\phi+\mathcal{O}(\delta_\phi^2).
		\label{a1}
	\end{equation}
	For a two-dimensional droplet with the radius $R$, the equilibrium chemical potential satisfies the leading-order Gibbs--Thomson relation,
	\begin{equation}
		\bar{\mu}_\phi = \frac{\sigma}{2 R}+\mathcal{O}\left[\left(\frac{\varepsilon}{R}\right)^2\right]. 
		\label{a2}
	\end{equation}
	Combining Eqs. (\ref{a}), (\ref{a1}) and \ref{a2} gives rise to the bulk phase shift $\delta_\phi$,
	\begin{equation}
		\delta_\phi = \frac{\varepsilon}{12R} + \mathcal{O}\!\left[\left(\frac{\varepsilon}{R}\right)^2\right] = \frac{\varepsilon}{12}\sqrt{\frac{\theta_0-\sin\theta_0\cos\theta_0}{\mathcal{A}_0}} + \mathcal{O}\!\left[\left(\frac{\varepsilon}{R}\right)^2\right].
		\label{ca1}
	\end{equation}
	Assuming that this uniform bulk shift $\delta_\phi$ develops at equilibrium  state, the phase-field variable in the liquid and gas bulk regions can be approximated as $\phi = +1 + \delta_\phi$ in the liquid region of area $\mathcal{A}$, and $\phi = -1 + \delta_\phi$ in the gas region of area $\vert{}\Omega\vert{} - \mathcal{A}$. Neglecting the higher-order contribution from the thin diffuse-interface region, the global integral of $\phi$ can be approximated by
	\begin{equation}
		\int_\Omega \phi \, \mathrm{d}\Omega = (+1 + \delta_\phi)\mathcal{A} + (-1 + 	\delta_\phi)(\vert{}\Omega\vert{} - \mathcal{A}) = 2\mathcal{A} - \vert{}\Omega\vert{} + \delta_\phi \vert{}\Omega\vert{}.
	\end{equation}	
	In the initial state, the bulk phases take the ideal equilibrium values ($\delta_\phi = 0$), and the same approximation gives $\int_\Omega \phi \, \mathrm{d}\Omega = 2\mathcal{A}_0 - \vert{}\Omega\vert{}$, with $\mathcal{A}_0$ being the initial geometric area. Based on the conservation of $\int_\Omega \phi \, \mathrm{d}\Omega$ and Eq. (\ref{ca1}), the relative geometric area loss can be obtained,
	\begin{equation}
		\frac{\Delta \mathcal{A}}{\mathcal{A}_0} = \frac{\vert{}\Omega\vert{}}{2\mathcal{A}_0}\delta_\phi = \frac{\vert{}\Omega\vert{}}{\mathcal{A}_0} \frac{\varepsilon}{24R},
		\label{ca2}
	\end{equation}
	where $\Delta \mathcal{A} = \mathcal{A}_0 - \mathcal{A}$ is the geometric-area loss. Eq. (\ref{ca2}) indicates that the classical linear constraint produces a leading-order geometric-area loss through the curvature-induced shift of the bulk phase values.

	\begin{figure}[htbp]
		\centering
		\includegraphics[scale=0.45]{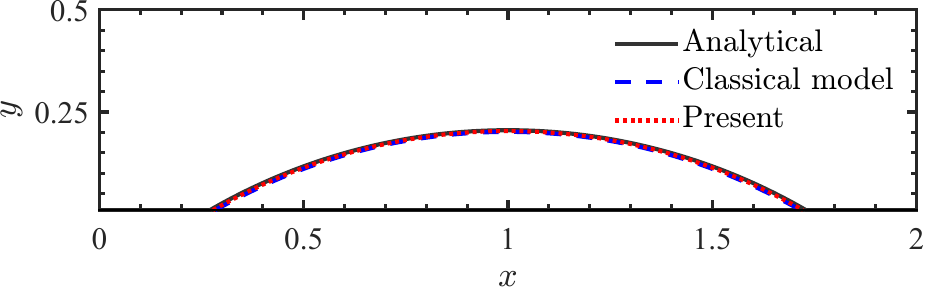} 
		\put(-205,55){(\textit{a})}
		\hspace{5mm}  
		\includegraphics[scale=0.45]{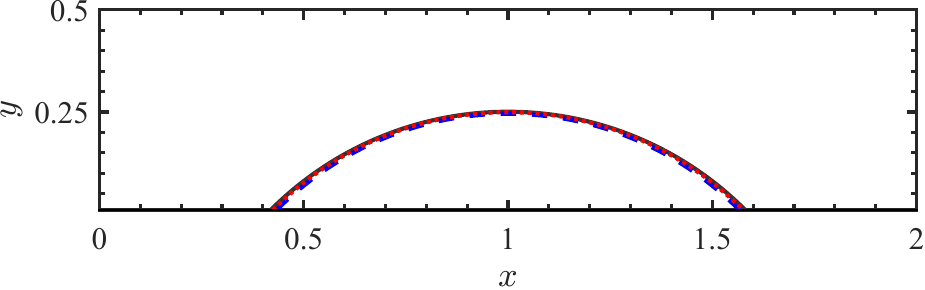} 
		\put(-205,55){(\textit{b})}
		\vspace{1mm}  
		\includegraphics[scale=0.45]{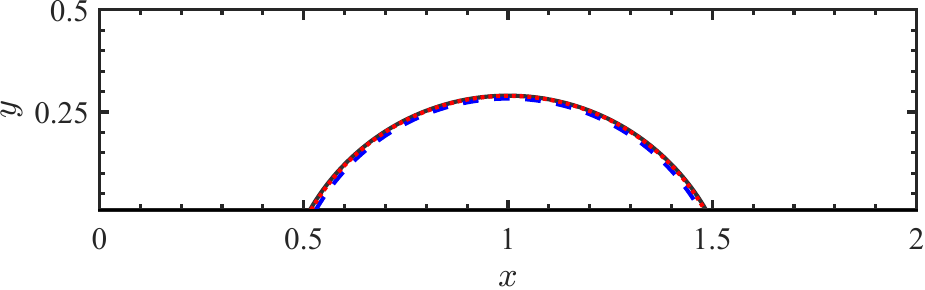} 
		\put(-205,55){(\textit{c})}
		\hspace{5mm}
		\includegraphics[scale=0.45]{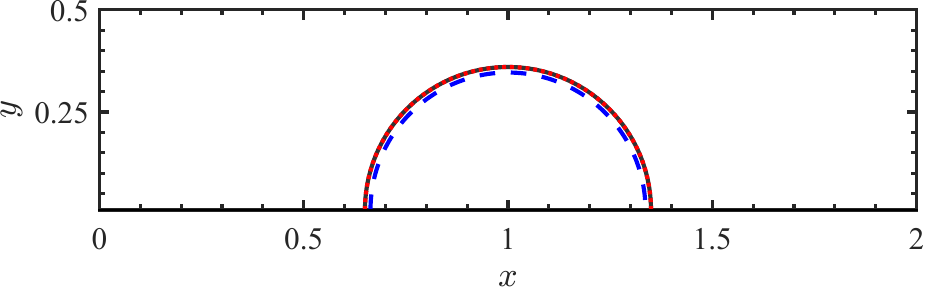} 
		\put(-205,55){(\textit{d})}
		\vspace{1mm}  
		\includegraphics[scale=0.45]{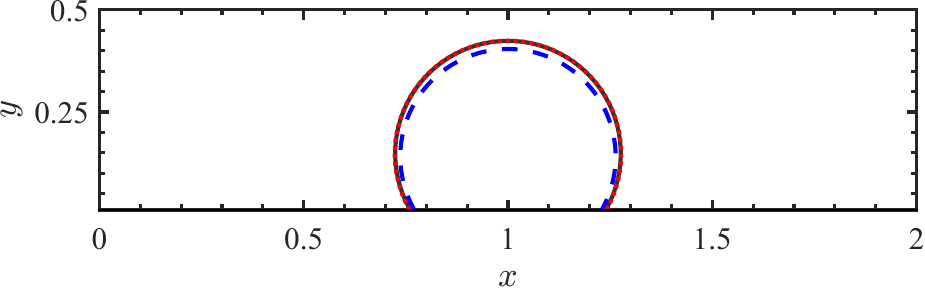} 
		\put(-205,55){(\textit{e})}
		\hspace{5mm}
		\includegraphics[scale=0.45]{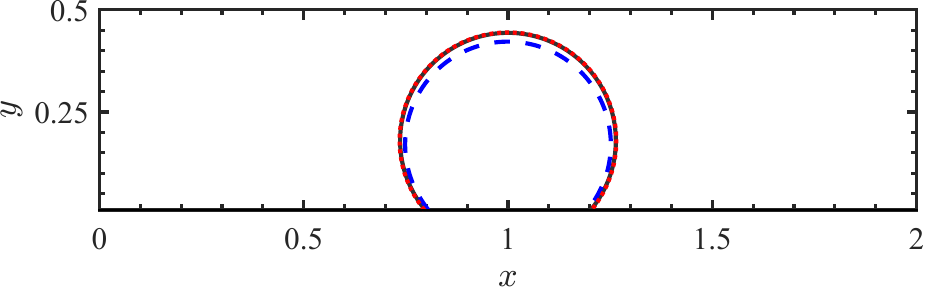} 
		\put(-205,55){(\textit{f})}
		\vspace{1mm}  
		\includegraphics[scale=0.45]{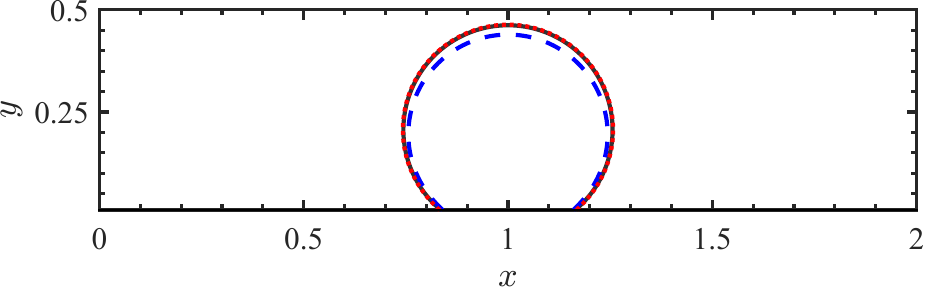} 
		\put(-205,55){(\textit{g})}
		\hspace{5mm}
		\includegraphics[scale=0.45]{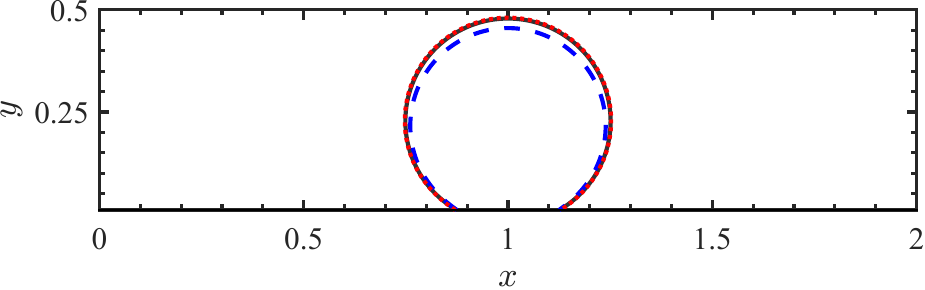} 
		\put(-205,55){(\textit{h})}
		\caption{ Comparison of the analytical solutions with the steady-state \(\phi=0\) contours based on the present nonlinear model and the classical linear-constraint model at (a) $\theta_0\textbf{e}=30^\circ$, (b) $45^\circ$, (c) $60^\circ$, (d) $90^\circ$, (e) $120^\circ$, (f) $130^\circ$, (g) $140^\circ$ and (h) $150^\circ$. }
		\label{fig5}
	\end{figure}

	In contrast, the static equilibrium in the present nonlinear model with improved volume conservation requires $\mu_\phi = \lambda_\phi(t) h_\phi'(\phi)$. In the homogeneous bulk regions far from the interface where $\phi \to \pm 1$, $h_\phi'(\pm 1) = 0$, the equilibrium condition would lead to $\mu_\phi = 0$. For the double-well free-energy density, the stable bulk equilibria corresponding to $\mu_\phi=0$ are $\phi=\pm1$. Hence, the nonlinear constraint does not induce the curvature-dependent bulk-phase shift that occurs in the linear model. Moreover, because the Lagrange-multiplier correction $\lambda_\phi h_\phi'(\phi)$ vanishes at the bulk equilibrium states, it only acts within the diffuse-interface region. This eliminates the leading-order $\mathcal{O}(\varepsilon/R)$ bulk shift and the associated additional geometric-area shrinkage, leaving only higher-order diffuse-interface contributions, expected to scale as $\mathcal{O}((\varepsilon/R)^2)$, together with spatial discretization errors.

	We first present a comparison of the numerical bulk-phase shift $\delta_\phi$ with the theoretical prediction given by Eq.~\eqref{ca1} in Fig. \ref{fig6}(a). It can be found that for the classical linear model, the shift $\delta_\phi$ increases with $1/R$ and agrees well with Eq.~\eqref{ca1}. A linear regression constrained to zero intercept yields a slope within $2.5\%$ of the theoretical value $\varepsilon/12$. This confirms that the bulk-phase shift in the linear model is primarily induced by the interfacial curvature. In contrast, the value of $\delta_\phi$ in the present model remains below $9.2 \times 10^{-4}$ for all cases considered, and exhibits no 
	apparent increase with $1/R$, which is consistent with the fact that $h_\phi'(\pm1) = 0$ eliminates the leading-order curvature-induced bulk shift. Fig. \ref{fig6}(b) compares the additional geometric area shrinkage of the classical linear model relative to the present nonlinear model, $(\mathcal{A}_N-\mathcal{A}_L)/\mathcal{A}_0$, with the theoretical prediction given by Eq.~\eqref{ca2}. The additional shrinkage increases monotonically with curvature and shows excellent agreement with the theoretical prediction. This result confirms that the curvature-induced bulk shift is the primary mechanism responsible for the additional geometric shrinkage in the linear model. Although the nonlinear model still suffers from the variation of residual area due to the effect of finite-interface-thickness and the error of spatial discretization, it eliminates the leading-order curvature-induced shrinkage inherent in the linear constraint.

	\begin{figure}[H]
		\centering
		\includegraphics[scale=0.45]{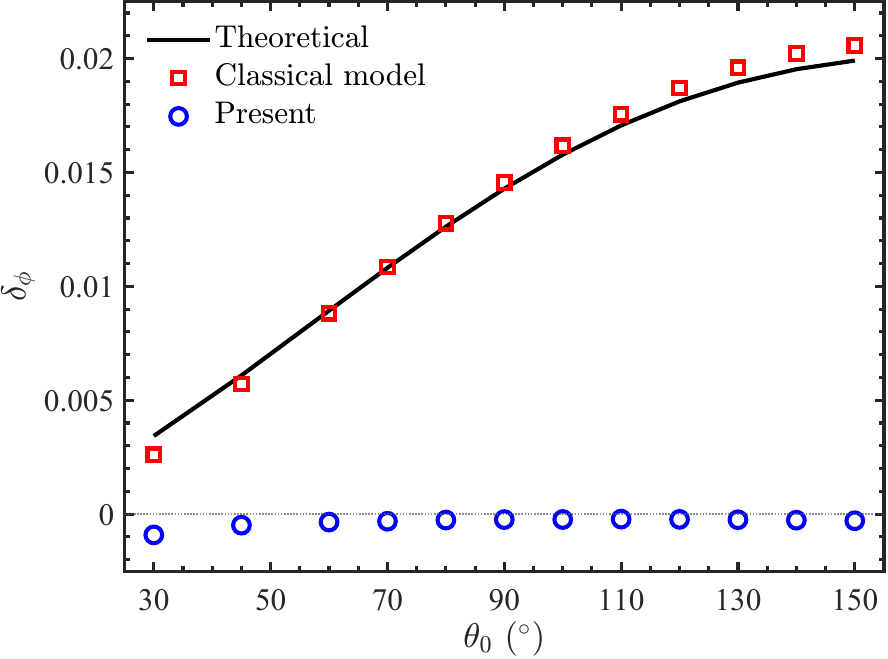} 
		\put(-195,135){(\textit{a})}
		\hspace{5mm}  
		\includegraphics[scale=0.45]{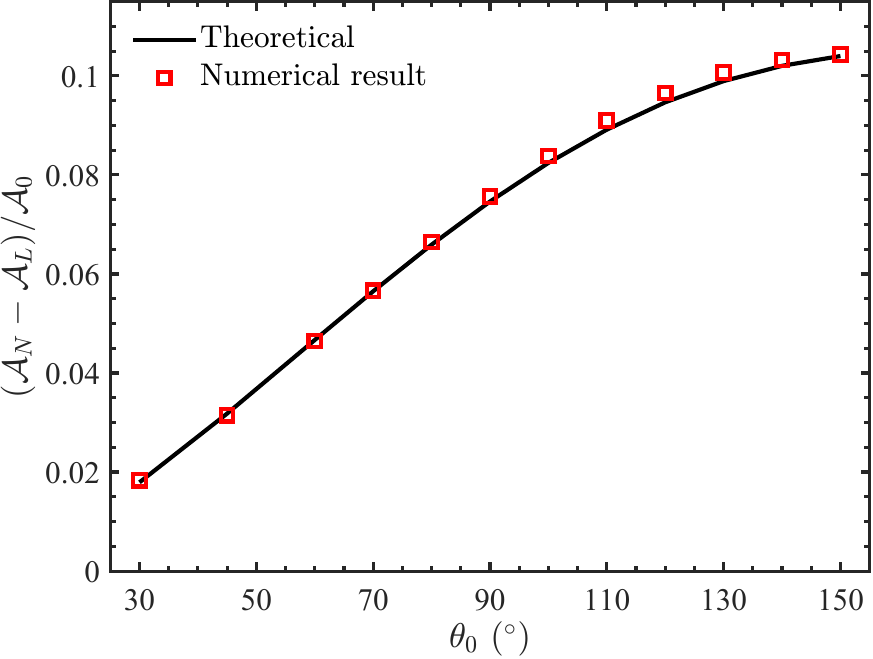} 
		\put(-195,135){(\textit{b})}
		\caption{ Curvature-induced bulk-phase shift and associated geometric-area loss. (a) Bulk-phase shift $\delta=(\phi_l+\phi_g)/2$ as a function of the specified equilibrium contact angle $\theta_0$. The solid line denotes the leading-order prediction \(\delta_{\rm th}=\varepsilon/(12R)\). The linear constraint exhibits a curvature-dependent bulk shift, whereas the nonlinear constraint maintains the shift near zero. (b) Additional geometric shrinkage of the classical model relative to the present model. Symbols denote $S_{\rm extra,num}=(\mathcal{A}_N-\mathcal{A}_L)/\mathcal{A}_0$, and the solid line denotes $S_{\rm extra,th}=|\Omega|\varepsilon/(24\mathcal{A}_0R)$. }
		\label{fig6}
	\end{figure}

	\subsection{Surfactant-covered droplet spreading on a wetting solid}

	To demonstrate the capacity of the present model in predicting the wetting response induced by solid adsorption, we conduct a comparison of the equilibrium contact angle $\theta$ with the numerical results reported by Kannan et al. \cite{KannanJFM2026}. To this end, three representative surface-adsorption conditions are considered: no solid adsorption $(\beta_{sl}=\beta_{sg}=0)$, preferential adsorption on the solid–liquid interface $(\beta_{sl}=0.9>\beta_{sg}=0.01)$, and preferential adsorption on the solid–gas interface $(\beta_{sl}=0.01<\beta_{sg}=0.9)$. For this problem, the computational domain is $\Omega=[0,3]\times[0,1]$ and is discretized by a $600\times200$ uniform lattice with $\Delta x=\Delta y=0.005$. As in the previous example, the periodic boundary conditions are imposed in the horizontal direction, the no-slip, wetting, and surfactant-adsorption boundary conditions are applied on the bottom wall, while the no-flux conditions are imposed to the phase-field variable and the surfactant concentration on the top boundary. A semicircular droplet with the radius $R_0=0.35$ is initially placed at the center $(1.5,0)$, and gravitational effect is neglected. Following the previous work \cite{KannanJFM2026}, the dimensionless parameters are set to be $\mathrm{Re}=10$, $\mathrm{Ca}=0.1$, and the clean fluid--fluid interfacial tension is $\sigma_{gl,0}=0.0107$. The bulk surfactant concentration and thermodynamic parameters are fixed as $\psi_b=0.01$, $\psi_c=0.017$, $\mathrm{Ex}=1$, $\mathrm{Pi}=0.18407$. Preferential solid--gas adsorption is realized through setting $\beta_{sg}=0.9, \beta_{sl}=0.01$, the clean solid--fluid interfacial tensions are determined from $\sigma_{sg,0}-\sigma_{sl,0}=\sigma_{gl,0}\cos\theta_0$, $\sigma_{sg,0}+\sigma_{sl,0}=16\sigma_{gl,0}/15$, giving $\sigma_{sg,0}=0.00709135$ and
	$\sigma_{sl,0}=0.00432198$. 
	
	\begin{figure}[H]
		\centering
		\includegraphics[scale=0.5]{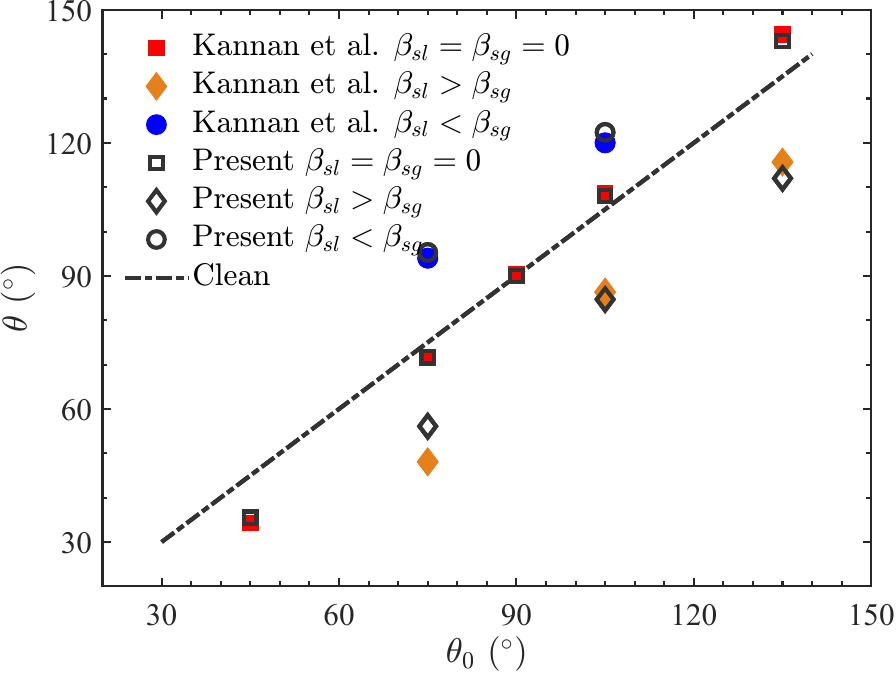} 
		\caption{ Comparison of the equilibrium contact angles $\theta$ predicted by the present method and those reported by Kannan et al. \cite{KannanJFM2026} for surfactant-laden droplets under three solid-adsorption conditions: without solid adsorption ($\beta_{sl}=\beta_{sg}=0$), preferential solid-liquid adsorption ($\beta_{sl}>\beta_{sg}$), and preferential solid-gas adsorption ($\beta_{sg}>\beta_{sl}$). }
		\label{fig7}
	\end{figure}

	We first present a comparison of the equilibrium contact angles $\theta$ predicted by the present model with those reported by Kannan et al.~\cite{KannanJFM2026} at three solid-adsorption conditions in Fig.~\ref{fig7}, in which the dashed line, $\theta=\theta_0$, represents the equilibrium contact angle of a surfactant-free droplet. Overall, the present results agree well with the reference data and reproduce the characteristic wetting behavior of all three cases. Without solid adsorption, the addition of surfactant amplifies the original wettability: the equilibrium contact angle decreases for hydrophilic case with $\theta_0<90^\circ$, remains approximately unchanged at $\theta_0=90^\circ$, and increases for hydrophobic case with $\theta_0>90^\circ$. This behavior is consistent with previous models neglecting solid adsorption \cite{ZhuJFM2019,ZhuJCP2020,WangJCP2024}. When the preferential solid--liquid adsorption is considered, $\beta_{sl}>\beta_{sg}$, the equilibrium contact angles are less than those of the clean droplet for all cases examined, indicating enhanced wetting. In contrast, the preferential solid--gas adsorption, $\beta_{sl}<\beta_{sg}$, shifts the equilibrium contact angles above those of the clean droplet, and produces an autophobing response. In general, the present model captures the three wetting regimes reported by Kannan et al. \cite{KannanJFM2026}, with only small differences.

	Fig.~\ref{fig8} illustrates the equilibrium distributions of surfactant concentration for the three solid-adsorption conditions ($\theta_0=105^\circ$, $\psi_b=0.01$), while Fig.~\ref{fig9} shows the corresponding equilibrium profiles of droplet over a range of intrinsic contact angles. According to the modified Young relation,
	\begin{equation}
		\sigma_{gl}(\psi)\cos\theta = \sigma_{sg}(\psi) - \sigma_{sl}(\psi) = \sigma_{gl,0}\cos\theta_0 + \sigma_{sg,0}\beta_{sg}\ln(1-\psi) - \sigma_{sl,0}\beta_{sl}\ln(1-\psi),
		\label{modified_young}
	\end{equation}
	the wetting response is governed by the bulk surfactant concentration $\psi$, fluid--fluid adsorption and preferential adsorption on the two solid--fluid interfaces. In the absence of solid adsorption ($\beta_{sl}=\beta_{sg}=0$), the surfactant accumulates primarily within the diffuse fluid–fluid interface, with no distinct adsorption layer forming along the solid wall, and then the modified Young’s relation reduces to $\sigma_{gl}(\psi_\Gamma)\cos\theta = \sigma_{gl,0}\cos\theta_0$. Because the surfactant adsorption would reduce $\sigma_{gl}$, the magnitude of $\cos\theta$ increases relative to $\cos\theta_0$. Consequently, the hydrophilic droplet ($\theta_0<90^\circ$) becomes more wetting, whereas the hydrophobic droplet ($\theta_0>90^\circ$) becomes less wetting; the neutral case ($\theta_0=90^\circ$) remains unchanged. These phenomena are reproduced by the green dashed profiles in Fig.~\ref{fig9}, and are consistent with the previous models that neglect solid adsorption \cite{ZhuJFM2019,ZhuJCP2020,WangJCP2022,WangJCP2024,WuAML2024,ZhangICHMT2025,YaoIJMF2026,WuCMAME2026}. Under preferential solid--liquid adsorption ($\beta_{sl}>\beta_{sg}$), the adsorption layer beneath the droplet, as shown in Fig.~\ref{fig8}(b), would reduce $\sigma_{sl}$ and thereby increase $\sigma_{sg}-\sigma_{sl}$. As a result, the blue dash-dotted profiles in Fig.~\ref{fig9} become wider and flatter, indicating enhanced wetting. The predicted surfactant distribution and wetting response are consistent with previous experimental and numerical results \cite{KannanJFM2026,JohnsonCS1986}. Conversely, when the adsorption at the solid--gas interface dominates ($\beta_{sg}>\beta_{sl}$), the surfactant accumulates along the solid wall outside the droplet, as shown in Fig.~\ref{fig8}(c). The resulting reduction in $\sigma_{sg}$ decreases $\sigma_{sg}-\sigma_{sl}$ and therefore increases the equilibrium contact angle. Consequently, the droplet becomes narrower and taller, corresponding to a reduction in wettability. This surfactant-induced dewetting behavior is commonly referred to as autophobing \cite{KannanJFM2026,BeraLangmuir2016}, and is clearly observed in Fig.~\ref{fig9}.

	\begin{figure}[H]
		\centering
		\includegraphics[scale=0.5]{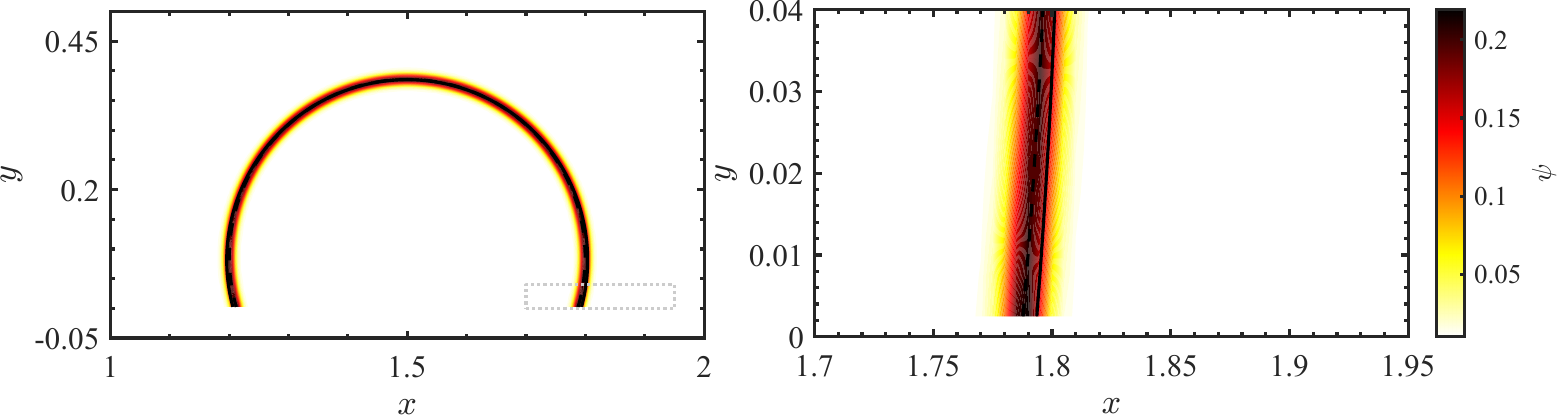} 
		\put(-380,90){(\textit{a})}
		\vspace{1mm}  
		\includegraphics[scale=0.5]{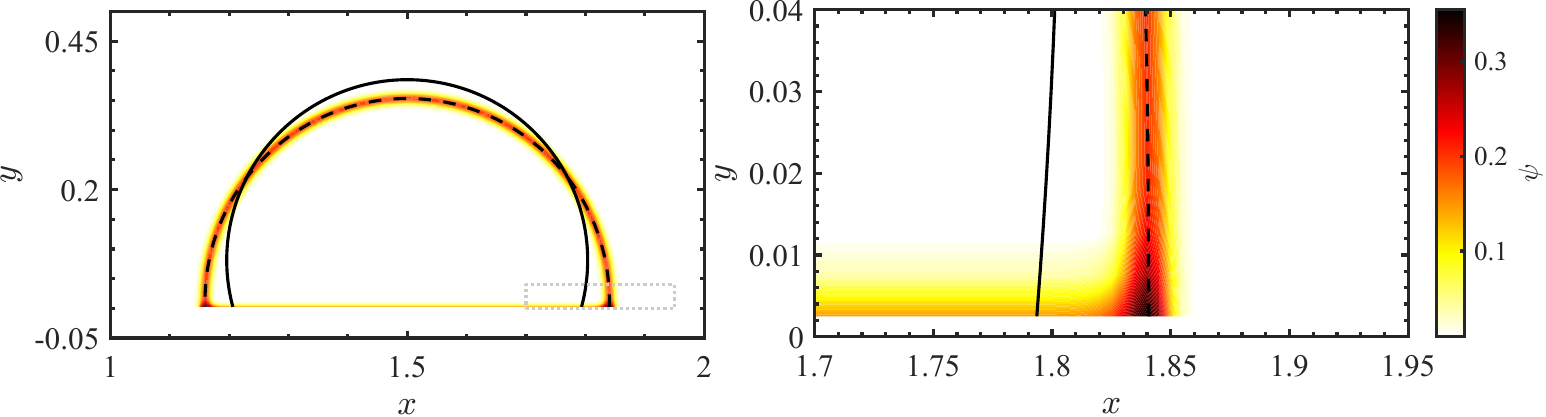} 
		\put(-380,90){(\textit{b})}
		\vspace{1mm}  
		\includegraphics[scale=0.5]{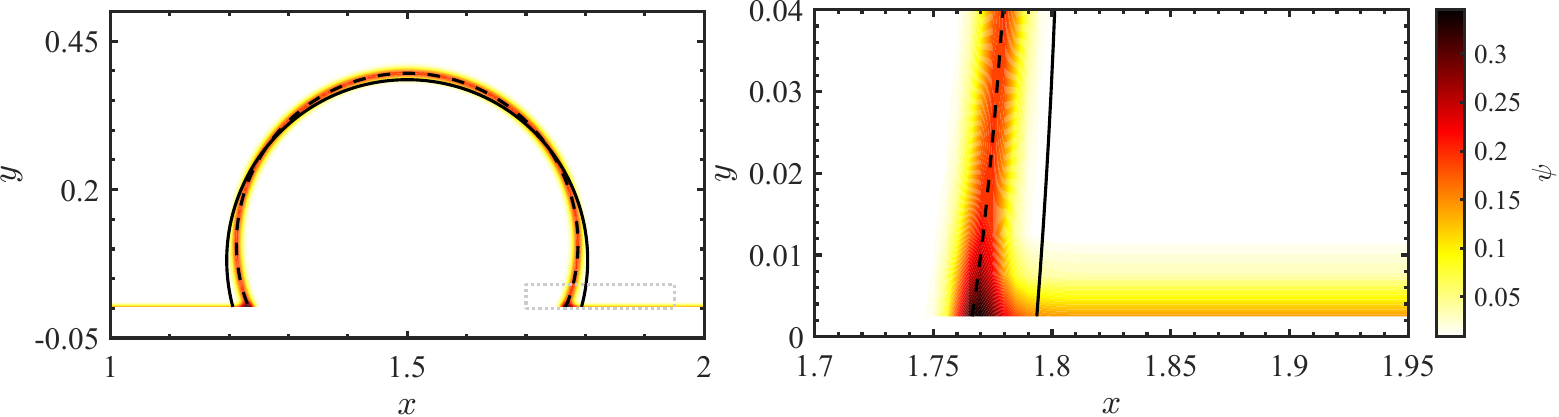} 
		\put(-380,90){(\textit{c})}		
		\caption{ Equilibrium distributions of surfactant concentration for a droplet with $\theta_0=105^\circ$ and $\psi_b=0.01$: (a) without solid adsorption ($\beta_{sl}=\beta_{sg}=0$), (b) preferential solid--liquid adsorption, $\beta_{sl}=0.9$ and $\beta_{sg}=0.01$ and (c) preferential solid--gas adsorption, $\beta_{sg}=0.9$ and $\beta_{sl}=0.01$. The solid black line indicates the equilibrium profile of a clean droplet, and the dashed black line denotes the $\phi=0$ contour representing the fluid–fluid interfaces. The right panels show enlarged views of the dotted boxes near the three-phase contact line in the left panels. }
		\label{fig8}
	\end{figure}

	In addition, the results in Fig.~\ref{fig10} demonstrate that increasing the bulk surfactant concentration enhances the influence of the adsorption preference determined by $\beta_{sl}$ and $\beta_{sg}$. When $\beta_{sl}>\beta_{sg}$, the increase of $\psi_b$ would reduce $\sigma_{sl}$, thus decreasing the equilibrium contact angle and promoting droplet spreading, whereas the opposite trend is observed for $\beta_{sl}<\beta_{sg}$, which follows directly from the surfactant-modified Young’s relation, $\sigma_{gl}\cos\theta=\sigma_{sg}-\sigma_{sl}$. Actually, within the parameter range considered here, a higher bulk surfactant concentration amplifies the adsorption-induced change in wettability, while the direction of this change is determined by the relative adsorption strengths at the solid--liquid and solid--gas interfaces. For the case of $\psi_b=0.05$, a localized deformation of the fluid--fluid interface is observed near the three-phase contact line, and the distribution of surfactant concentration is more nonuniform in the region where the solid--fluid adsorption layer intersects the diffuse fluid--fluid interface. This localized surfactant enrichment modifies the interfacial tensions and the associated force balance near the contact line. Consequently, the fluid--fluid interface undergoes a local curvature adjustment to re-establish the local capillary and wetting force balance, giving rise to the interface bending. This localized deformation is also reported in some previous works \cite{KannanJFM2026,GokhaleLangmuir2005}, indicating that phase-dependent solid-surface adsorption can affect not only the macroscopic equilibrium contact angle but also the local interface profile near the contact line.

	\begin{figure}[H]
		\centering
		\includegraphics[scale=0.45]{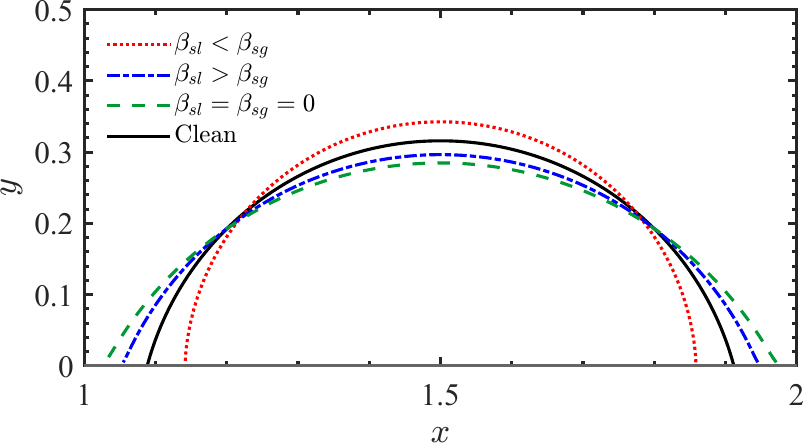} 
		\put(-185,90){(\textit{a})}
		\hspace{5mm}  
		\includegraphics[scale=0.45]{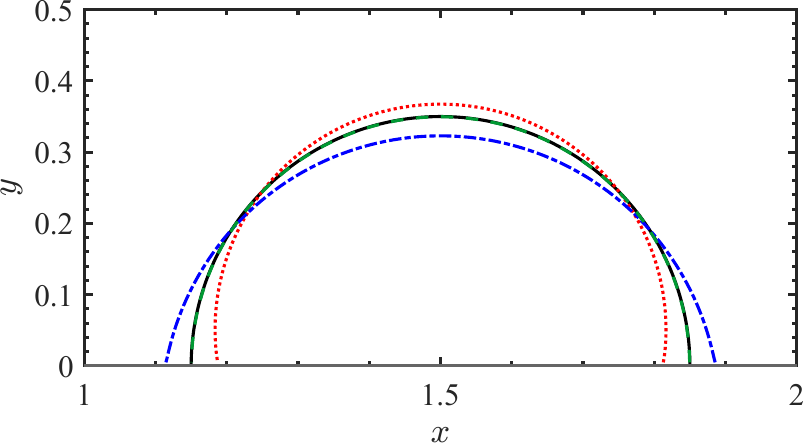} 
		\put(-185,90){(\textit{b})}
		\vspace{1mm}  
		\includegraphics[scale=0.45]{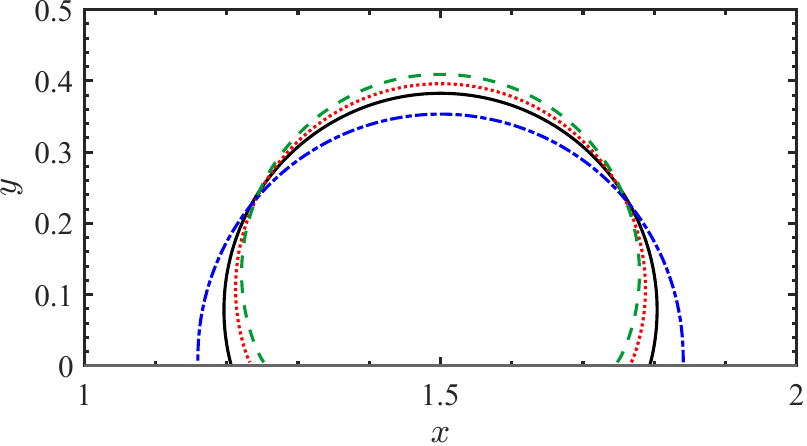} 
		\put(-185,90){(\textit{c})}
		\hspace{5mm}
		\includegraphics[scale=0.45]{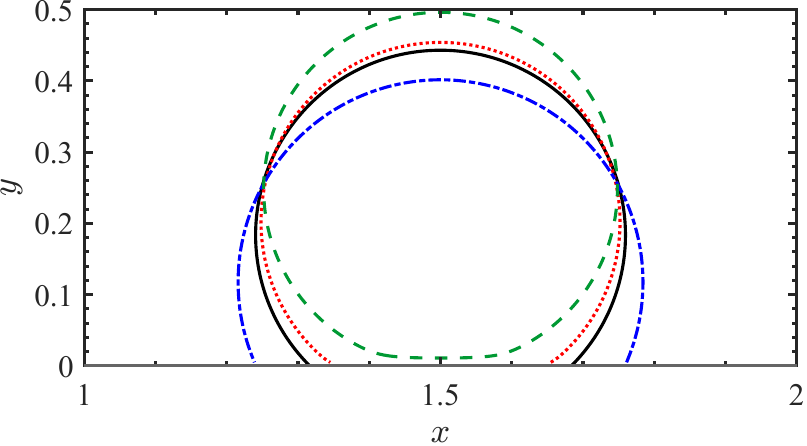} 
		\put(-185,90){(\textit{d})}
		\caption{Comparison of equilibrium profiles of droplet under different solid-adsorption conditions for intrinsic contact angles (a) $\theta_0=75^\circ$, (b) $90^\circ$, (c) $105^\circ$, and (d) $135^\circ$. The red dotted, blue dash-dotted, and green dashed curves represent preferential solid--gas adsorption ($\beta_{sg}=0.9$, $\beta_{sl}=0.01$), preferential solid--liquid adsorption ($\beta_{sl}=0.9$, $\beta_{sg}=0.01$), and no solid adsorption ($\beta_{sl}=\beta_{sg}=0$), respectively. The black solid curves denote the analytical equilibrium profiles of the corresponding surfactant-free droplets, while the numerical results are represented by the $\phi=0$ contours.}
		\label{fig9}
	\end{figure}

	\begin{figure}[H]
		\centering
		\includegraphics[scale=0.5]{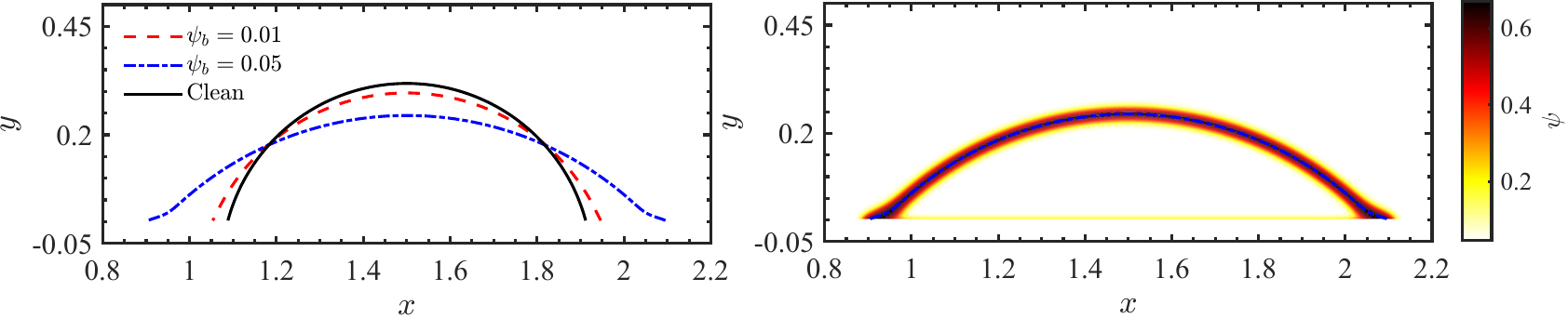} 
		\put(-410,75){(\textit{a})}
		\vspace{1mm}  
		\includegraphics[scale=0.5]{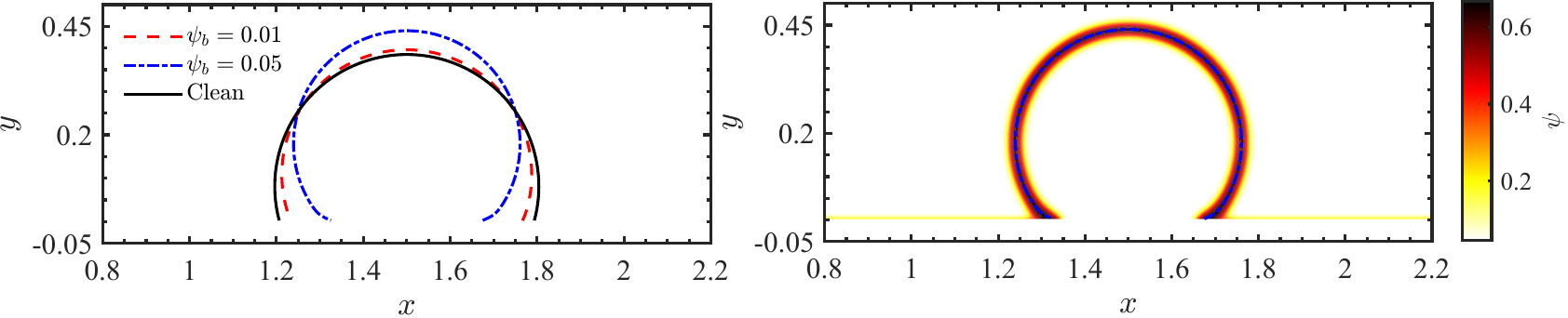} 
		\put(-410,75){(\textit{b})}
		\caption{ Effects of the bulk surfactant concentration $\psi_b$ on the equilibrium morphology and surfactant distribution of a sessile droplet with solid adsorption. (a) Preferential solid-liquid adsorption, $\beta_{sl}>\beta_{sg}$, at $\theta_0=75^\circ$; (b) preferential solid-gas adsorption, $\beta_{sl}<\beta_{sg}$, at $\theta_0=105^\circ$. In each panel, the left plot shows the equilibrium profiles at different values of $\psi_b$, while the right plot presents the distribution of surfactant concentration at $\psi_b=0.05$. }
		\label{fig10}
	\end{figure}

	\subsection{Surfactant-laden droplet on a wetting substrate under shear flow}
	To investigate the influence of preferential solid adsorption on the shear-induced deformation and detachment of a surfactant-laden droplet, a two-dimensional Couette-flow configuration is considered. The computational domain is $\Omega=[0,5]\times[0,1]$ and is discretized with $\Delta x=\Delta y=0.005$. Initially, a semicircular droplet with the radius $R_0=0.35$ is centered at $(x_c,y_c)=(1,0)$, and the two fluids have matched densities and viscosities. The periodic boundary conditions are imposed in the streamwise direction. The bottom wall is stationary and is subject to the wetting and surfactant-adsorption boundary conditions, whereas the top wall moves in the positive $x$ direction with the velocity $U_0=0.5$. The clean fluid--fluid interfacial tension, interface thickness, and bulk surfactant concentration are set to be $\sigma_{gl,0}=0.1$, $\varepsilon=0.025=5\Delta x$, and $\psi_b=0.01$. The remaining parameters are $M_\phi=M_\psi=0.1$, $\mathrm{Pi}=1.35$, $\mathrm{Ex}=0.5$, and $\theta_0=120^\circ$. These parameters correspond to $\mathrm{Ca}=0.5$ and $\mathrm{Re}=1.75$.

		\begin{figure}[H]
		\centering
		\includegraphics[scale=0.45]{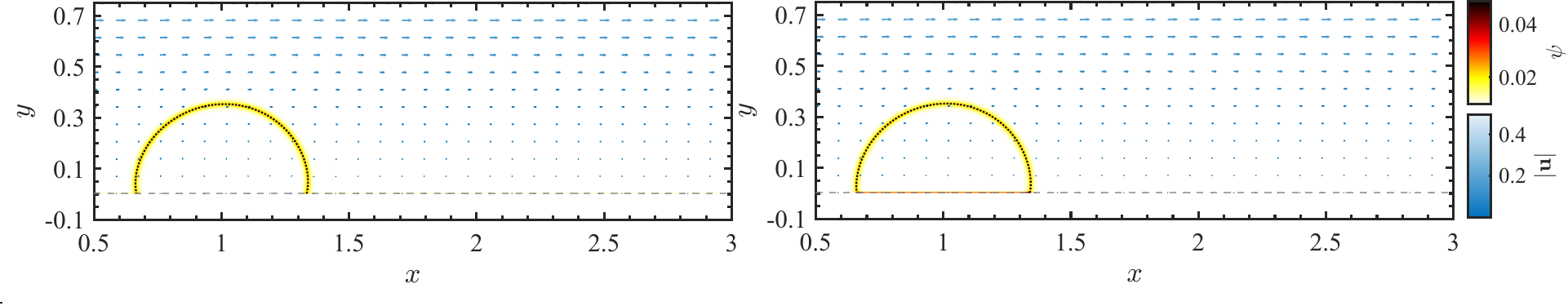} 
		\put(-415,74){(\textit{a})}
		\vspace{1mm}  
		\includegraphics[scale=0.45]{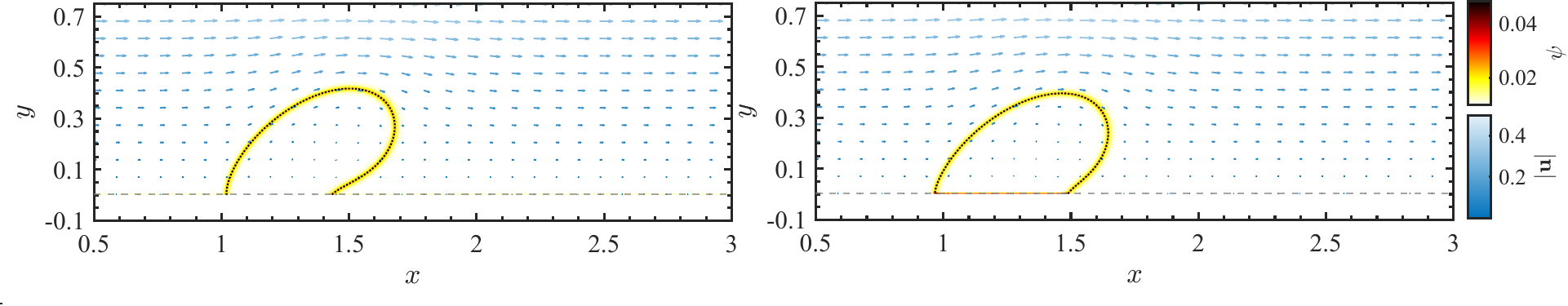} 
		\put(-415,74){(\textit{b})}
		\vspace{1mm}  
		\includegraphics[scale=0.45]{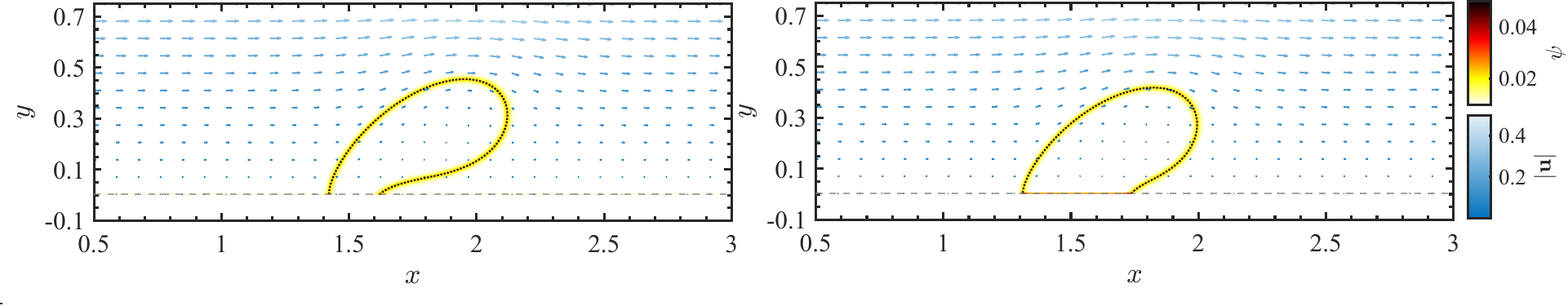} 
		\put(-415,74){(\textit{c})}
		\vspace{1mm}  
		\includegraphics[scale=0.45]{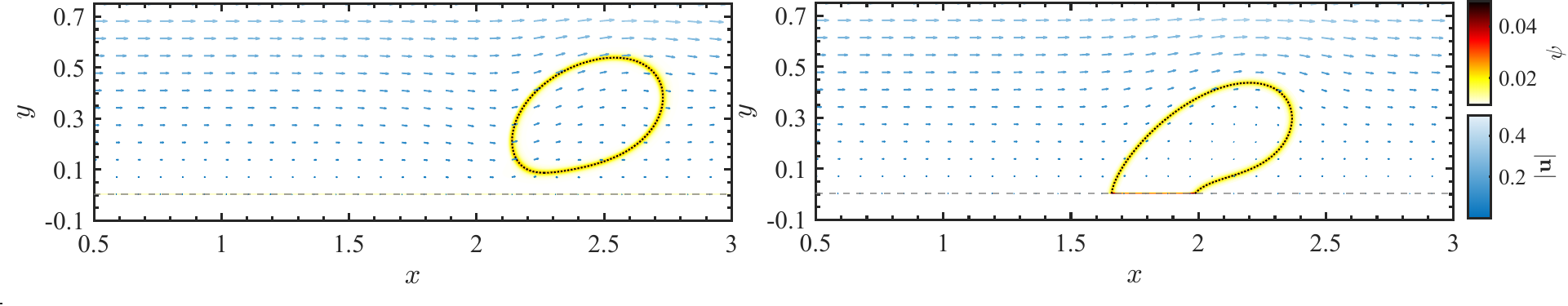} 
		\put(-415,74){(\textit{d})}
		\caption{ Evolution of a surfactant-laden droplet subjected to shear flow with $\mathrm{Ca}=0.5$. From top to bottom, the rows correspond to $t^*=U_0t/(2R_0)=0.595$, $8.929$, $17.857$ and $26.786$. The left column denotes preferential solid--gas adsorption ($\beta_{sg}=0.9$, $\beta_{sl}=0.01$), while the right column represents preferential solid--liquid adsorption ($\beta_{sl}=0.9$, $\beta_{sg}=0.01$). The colour shows the surfactant concentration $\psi$, the black dotted lines denote the fluid--fluid interface with $\phi=0$, and the arrows represent the velocity field, with the colour indicating the magnitude $|\mathbf{u}|$. }
		\label{fig11}
	\end{figure}

	Fig.~\ref{fig11} shows the dynamics of droplet under two preferential solid-adsorption conditions: preferential solid--gas adsorption ($\beta_{sg}=0.9, \beta_{sl}=0.01$) and preferential solid--liquid adsorption ($\beta_{sl}=0.9, \beta_{sg}=0.01$). At early moments, the droplets under both conditions are advected downstream and deform under the shear flow. Under the effect of shear flow, however, the contact-line dynamics of the two cases are significantly different. For preferential solid--gas adsorption, the solid--liquid contact width decreases from about $0.664$ at $t^*=0.595$ to $0.411$ and $0.195$ at $t^*=8.929$ and $17.857$, respectively. The droplet subsequently detaches from the substrate at $t^*=24.524$ and becomes fully suspended above the wall by $t^*=26.786$. In this case, the preferential adsorption at the solid--gas interface reduces $\sigma_{sg}$, thereby decreasing $\sigma_{sg}-\sigma_{sl}$ and favoring contact line recession. The resulting autophobic tendency further facilitates shear-induced detachment. In contrast, preferential solid--liquid adsorption produces a surfactant-rich layer beneath the droplet, with the wall surfactant concentration reaching $\psi_w\simeq0.426$. The associated reduction in $\sigma_{sl}$ increases $\sigma_{sg}-\sigma_{sl}$ and favors wetting and substrate attachment. Consequently, although the droplet is progressively elongated and advected downstream, it remains attached to the substrate, with contact widths being about $0.678$, $0.521$, $0.425$, and $0.326$ at four selected times. These results demonstrate that preferential surfactant adsorption at the solid--gas and solid--liquid interfaces can produce qualitatively different droplet responses under the same shear flow: preferential solid--gas adsorption promotes dewetting and detachment, whereas preferential solid--liquid adsorption stabilizes the wetted state.

	\section{Conclusions}
	\label{sec5}

	In this work, we developed a thermodynamically consistent, conservative Allen–Cahn–Navier–Stokes model for surfactant-laden contact line dynamics with solid adsorption, which is solved efficiently by a multiple-relaxation-time lattice Boltzmann method. In this model, the second-order Allen–Cahn equations are employed for both the phase field and surfactant transport, thereby avoiding the higher-order spatial derivatives associated with Cahn–Hilliard formulations. Surfactant adsorption at fluid–fluid, solid–liquid, and solid–gas interfaces is incorporated through a unified free-energy functional, from which the wetting boundary conditions consistent with the surfactant-modified Young’s relation are derived.

	To enforce the conservation properties, a linear constraint is introduced to ensure conservation of total surfactant mass, while a nonlinear interface-localized constraint is constructed to preserve a nonlinear diffuse phase-volume measure. This conserved mapping provides a superior diffuse approximation to the Heaviside step function while satisfying the energy-dissipation law. It should be noted that the resulting weighted correction vanishes in the bulk phases and only acts within the diffuse interface, eliminating the leading-order curvature-induced bulk-phase shift associated with classical linear models. Through the matched asymptotic analysis, we prove the formal second-order geometric-volume conservation of the formulation. In addition, the numerical tests further show that the model achieves second-order geometric-volume accuracy, exhibits low spurious velocities, and preserves global conservation. For a droplet spreading on a wetting substrate, the nonlinear volume constraint can not only suppress the artificial shrinkage observed with classical linear models, but also retain high accuracy in geometric volume preservation across a wide range of contact angles and interfacial curvatures. Finally, it is also found that the present model can reproduce three characteristic surfactant-mediated wetting behaviors: amplification of intrinsic wettability without solid adsorption, hydrophilization under preferential solid–liquid adsorption, and autophobing under preferential solid–gas adsorption. For a surfactant-laden droplet on the wetting substrate under shear flow, preferential solid–liquid adsorption stabilizes droplet attachment, whereas preferential solid–gas adsorption promotes contact line recession and shear-induced detachment. In future work, we will extend the framework to multiphase flows with solid adsorption in complex geometries.

	\section*{Acknowledgments}
	This work was financially supported by the National Natural Science Foundation of China (Grant Nos. 12672330 and 12501599), the Open Research Fund of State Key Laboratory of Mesoscience and Process Engineering (Grant No. MESO-25-D04) and the Interdisciplinary Research Program of Hust (Grant No. 2024JCYJ001). The computations were performed on the HPC platform of Huazhong University of Science and Technology \& IIRMAS Cluster.

	\appendix
	\section{Matched-asymptotic expansion}
	\label{APA}
	We construct the matched asymptotic expansions by using the intrinsic transition length $\ell_\phi$ as the small parameter. In the homogeneous bulk regions far from $\Gamma_\varepsilon(t)$, the outer fields and the Lagrange multiplier can be expanded in powers of $\ell_\phi$
	\begin{equation}
		u_{\text{out}}^\pm = u_0^\pm + \ell_\phi u_1^\pm + \ell_\phi^2 u_2^\pm + \mathcal{O}(\ell_\phi^3),\quad \mu_{\phi,\text{out}}^\pm = \ell_\phi^{-1}\mu_{-1}^\pm + \mu_0^\pm + \ell_\phi \mu_1^\pm + \mathcal{O}(\ell_\phi^2),\quad 
		\lambda_\phi(t) = \lambda_0(t) + \ell_\phi \lambda_1(t) + \mathcal{O}(\ell_\phi^2).
	\end{equation}
	Substituting the outer expansions into the definitions of the chemical potential and weighting function and collecting terms at successive orders of \(\ell_\phi\) yields
	\begin{equation}
		\begin{aligned}
			&\mu_{-1}^\pm
			=2\gamma_\phi[(u_0^\pm)^3-u_0^\pm],
			\quad
			\mu_0^\pm
			=2\gamma_\phi[3(u_0^\pm)^2-1]u_1^\pm, \quad \mu_1^\pm = 2\gamma_\phi \left\{ [3(u_0^\pm)^2 - 1]u_2^\pm + 3u_0^\pm(u_1^\pm)^2 \right\} - \gamma_\phi \nabla^2 u_0^\pm,\\
			& w(u_{\text{out}}^\pm) = w_0^\pm + \ell_\phi w_1^\pm + \mathcal{O}(\ell_\phi^2),\quad w_0^\pm = a_\phi [1 - (u_0^\pm)^2], \quad w_1^\pm = -2a_\phi u_0^\pm u_1^\pm,
		\end{aligned}
		\label{eq:app_gv_outer_mu_coefficients}
	\end{equation}
	where $a_\phi = 3/(4\phi_b)$. Substituting these expansions into Eq. (\ref{eq_AC}) gives rise to the outer equations at different orders,
	\begin{equation}
		\begin{aligned}
			\mathcal{O}(\ell_\phi^{-1})&: \quad 0=-M_\phi\mu_{-1}^\pm, \\
			\mathcal{O}(1)&: \quad \phi_bD_tu_0^\pm =-M_\phi(\mu_0^\pm-\lambda_0w_0^\pm),\\
			\mathcal{O}(\ell_\phi)&:
			\quad
			\phi_bD_tu_1^\pm
			=-M_\phi(\mu_1^\pm-\lambda_0w_1^\pm-\lambda_1w_0^\pm).
		\end{aligned}
	\end{equation}
	Here, $D_t =\partial_t+\mathbf u\cdot\nabla$. At the order  $\mathcal{O}(\ell_\phi^{-1})$, the outer equation gives $(u_0^\pm)^3 - u_0^\pm = 0$, which leads to the stable bulk equilibrium values $u_0^\pm = \pm1$. At the order $\mathcal{O}(1)$, since $u_0^\pm$ is constant ($D_t u_0^\pm = 0$) and the constraint kernel vanishes at the bulk states ($w_0^\pm = w(\pm 1) = 0$), the equation gives $\mu_0^\pm = 4\gamma_\phi u_1^\pm = 0$, directly yielding $u_1^\pm = 0$. At the order $\mathcal{O}(\ell_\phi)$, using $u_1^\pm = 0$ and $\nabla^2 u_0^\pm = 0$, the equation similarly yields $u_2^\pm = 0$. Consequently, the vanishing of the weighting function at the bulk equilibrium states prevents the nonlocal Lagrange multiplier from shifting the bulk equilibrium values through the first two correction orders. In particular, the bulk plateau corrections vanish at the orders $\mathcal{O}(\ell_\phi)$ and $\mathcal{O}(\ell_\phi^2)$, formally yielding
	\begin{equation}
		u_{\rm out}^\pm = \pm 1 + \mathcal{O}(\ell_\phi^3).
	\end{equation}

	In the tubular neighborhood $\mathcal{U}_\delta = \{\mathbf{x}\in\Omega : \vert{}r(\mathbf{x},t)\vert{} < \delta\}$ of $\Gamma_\varepsilon(t)$, any point $\mathbf x\in\mathcal U_\delta(t)$ can be uniquely represented in the local coordinates $(s,z)$ as $\mathbf{x} = \mathbf{X}(s,t) + \ell_\phi z\,\mathbf{n}(s,t)$ with $z = r(\mathbf{x},t)/\ell_\phi$, where $\mathbf{X}(s,t)$ is the parametrization of $\Gamma_\varepsilon(t)$, $s$ denotes the local tangential coordinates on the interface, $\mathbf{n} = \nabla r$ is the unit normal vector pointing into the positive phase, and $z$ is the stretched normal coordinate. In this coordinate system, the differential operators and the volume element can be expanded as
	\begin{equation}
		\begin{aligned}
			& \nabla = \ell_\phi^{-1} \mathbf{n} \partial_z + \nabla_s + \mathcal{O}(\ell_\phi), \quad \nabla^2 = \ell_\phi^{-2}\partial_{zz} + \ell_\phi^{-1}\mathcal{K}\partial_z + \mathcal{O}(1), \\ 
			& D_t =\ell_\phi^{-1}(v_n - V_n) \partial_z + D_t^\Gamma + \mathcal{O}(\ell_\phi),\quad \mathrm{d}\Omega = \ell_\phi \left[ 1 + \ell_\phi z \mathcal{K} + \mathcal{O}(\ell_\phi^2 z^2) \right] \mathrm{d}z\,\mathrm{d}S,
		\end{aligned}
	\end{equation}
	where $V_n = -\partial_t r\vert{}_{\Gamma_\varepsilon}$ is the interface normal velocity, $v_n = \mathbf{u}\cdot\mathbf{n}$ is the fluid normal velocity, $\mathcal{K} = \nabla \cdot \mathbf{n}$ is the total mean curvature, $D_t^\Gamma$ represents the tangential material derivative along $\Gamma_\varepsilon(t)$, and $\mathrm{d}S$ denotes the hypersurface measure element on $\Gamma_\varepsilon(t)$. Assuming that the velocity field is smooth across the tubular neighborhood, the normal fluid velocity admits the expansion $v_n = v_n^\Gamma + \ell_\phi z \partial_n v_n\vert{}_\Gamma + \mathcal{O}(\ell_\phi^2)$. In the inner region, we can expand the phase field, chemical potential, and normal velocity as
	\begin{equation}
		U(z,s,t) = U_0 + \ell_\phi U_1 + \ell_\phi^2 U_2 + \mathcal{O}(\ell_\phi^3), \quad \overline{\mu}_\phi = \ell_\phi^{-1}\overline{\mu}_{-1} + \overline{\mu}_0 + \ell_\phi \overline{\mu}_1 + \mathcal{O}(\ell_\phi^2), \quad V_n = V_n^{(0)} + \ell_\phi \mathcal{V}_1 + \mathcal{O}(\ell_\phi^2),
	\end{equation}
	which leads to the following expressions of inner chemical potential
	\begin{equation}
		\overline{\mu}_{-1} = \gamma_\phi \left[ 2(U_0^3 - U_0) - \partial_{zz} U_0 \right], \quad \overline{\mu}_0 = \gamma_\phi \left[ 2(3U_0^2 - 1)U_1 - \partial_{zz} U_1 - \mathcal{K}\partial_z U_0 \right].
	\end{equation}
	According to the asymptotic matching principle, the inner solution $U(z,s,t)$ must match the Taylor expansion of the outer solution $u_{\text{out}}^\pm(\mathbf{x},t)$ as $z \to \pm\infty$
	\begin{equation}
		\begin{aligned}
			\lim_{z \to \pm\infty} U(z,s,t) \sim u_{\text{out}}^\pm(\mathbf{X} + \ell_\phi z\mathbf{n}, t) = u_0^\pm(\mathbf{X},t) + &\ell_\phi \left[ u_1^\pm(\mathbf{X},t) + z \partial_n u_0^\pm(\mathbf{X},t) \right] + \\ &\ell_\phi^2 \left[ u_2^\pm(\mathbf{X},t) + z \partial_n u_1^\pm(\mathbf{X},t) + \frac{z^2}{2} \partial_{nn} u_0^\pm(\mathbf{X},t) \right] + \mathcal{O}(\ell_\phi^3).
		\end{aligned}
	\end{equation}
	According to $u_0^\pm = \pm 1$, $u_1^\pm = 0$  and $u_2^\pm = 0$, the matching conditions provide the far-field boundary conditions for the inner equations,
	\begin{equation}
		\lim_{z \to \pm\infty} U_0(z) = \pm 1, \quad \lim_{z \to \pm\infty} U_1(z) = 0,\quad \lim_{z \to \pm\infty} U_2(z) = 0, \quad \lim_{z \to \pm\infty} \partial_z U_1(z) = 0.
		\label{matching_1}
	\end{equation}
	along with $\lim_{z \to \pm\infty} \overline{\mu}_{-1}(z) = 0$ and $\lim_{z \to \pm\infty} \overline{\mu}_0(z) = 0$. Since $r=0$ is defined as the zero-level set of the phase field, the order-by-order centering conditions are given by
	\begin{equation}
		U_0(0) = 0, \quad U_1(0) = 0, \quad U_2(0) = 0.
		\label{matching_2}
	\end{equation}
	The leading-order inner equation is 
	\begin{equation}
		\mathcal{O}(\ell_\phi^{-1}):\phi_b (v_n^\Gamma - V_n^{(0)})\partial_z U_0 = -M_\phi \overline{\mu}_{-1} = -M_\phi \gamma_\phi \left[ 2(U_0^3 - U_0) - \partial_{zz} U_0 \right].
	\end{equation} 
	Multiplying $\partial_z U_0$ and integrating over $z \in \mathbb{R}$, the right-hand integral vanishes because it corresponds to the boundary difference of the symmetric double-well energy,
	\begin{equation}
		\int_{-\infty}^{+\infty} \left[ 2(U_0^3 - U_0)\partial_z U_0 - (\partial_{zz}U_0)(\partial_z U_0) \right] \mathrm{d}z = \left[ \frac{1}{2}(U_0^2 - 1)^2 - \frac{1}{2}(\partial_z U_0)^2 \right]_{-\infty}^{+\infty} = 0.
	\end{equation}
	Due to the fact that $\int_{-\infty}^{+\infty}(\partial_z U_0)^2\,\mathrm{d}z > 0$, we can find $V_n^{(0)} = v_n^\Gamma$ and $\overline{\mu}_{-1} = 0$, and the above equation reduces to $-\partial_{zz} U_0 + 2(U_0^3 - U_0) = 0$, subject to $U_0(\pm\infty) = \pm 1$ and $U_0(0) = 0$. In this case, one can obtain 
	\begin{equation}
		U_0(z) = q_0(z) = \tanh z, \quad q_0'(z) = 1 - q_0^2(z) = \operatorname{sech}^2 z.
	\end{equation}
	We now define the self-adjoint linear operator $\mathcal{L}f = \gamma_\phi [2(3q_0^2-1)f - f'']$, whose nullspace is spanned by the translational zero mode $q_0'$, which satisfies $\mathcal Lq_0'=0$. The inner equation at order $\mathcal O(1)$ then takes the following form
	\begin{equation}
		 \mathcal{O}(1): \quad -\phi_b (\mathcal{V}_1 - z\partial_n v_n\vert{}_\Gamma)\partial_z U_0 = -M_\phi (\overline{\mu}_0 - \lambda_0 w_0)= -M_\phi \left[ \mathcal{L}U_1 - \gamma_\phi \mathcal{K} q_0' - \lambda_0 w(q_0) \right],
		 \label{inner_1}
	\end{equation}
	where $w(q_0) = a_\phi q_0'$. Multiplying this equation by the zero mode $q_0'$ and integrating over $z \in \mathbb{R}$, we have
	\begin{equation}
		-\phi_b \int_{-\infty}^{+\infty} (\mathcal{V}_1 - z\partial_n v_n\vert{}_\Gamma) (q_0')^2 \,\mathrm{d}z = -M_\phi \int_{-\infty}^{+\infty} \left[ \mathcal{L}U_1 - \gamma_\phi \mathcal{K} q_0' - \lambda_0 w(q_0) \right] q_0' \,\mathrm{d}z,
	\end{equation}
	Since $z (q_0')^2$ on the left-hand side of the above equation is an odd function, the second term vanishes by symmetry,
	\begin{equation}
		\text{LHS} = -\phi_b \mathcal{V}_1 \int_{-\infty}^{+\infty} (q_0')^2 \,\mathrm{d}z + \phi_b \partial_n v_n\vert{}_\Gamma \int_{-\infty}^{+\infty} z (q_0')^2 \,\mathrm{d}z = -\phi_b \mathcal{V}_1 c_0, \quad c_0 = \int_{-\infty}^{+\infty} \operatorname{sech}^4 z \,\mathrm{d}z = \frac{4}{3}.
		\label{left0}
	\end{equation}
	Using the self-adjointness of $\mathcal{L}$ ($\int_{-\infty}^{+\infty} q_0' \mathcal{L}U_1\,\mathrm{d}z = \int_{-\infty}^{+\infty} U_1 \mathcal{L}q_0'\,\mathrm{d}z = 0$) and the substitution $u = q_0(z)$, the term on the right-hand side can be written as
	\begin{equation}
		\text{RHS} = M_\phi \gamma_\phi \mathcal{K} \int_{-\infty}^{+\infty} (q_0')^2 \,\mathrm{d}z + M_\phi \lambda_0 \int_{-1}^{1} \frac{3}{4\phi_b}(1 - u^2) \,\mathrm{d}u = M_\phi \gamma_\phi \mathcal{K} c_0 + M_\phi \frac{\lambda_0}{\phi_b}.
		\label{right0}
	\end{equation}
	Combining Eqs. (\ref{left0}) and (\ref{right0}), one can obtain the Fredholm solvability condition,
	\begin{equation}
		\left( \frac{\phi_b \mathcal{V}_1}{M_\phi} + \gamma_\phi \mathcal{K} \right) c_0 + \frac{\lambda_0}{\phi_b} = 0.
		\label{FSC}
	\end{equation}
	To determine the parameter $\lambda_0$, we evaluate the continuous multiplier identity asymptotically. Noting that outer bulk contributions are exponentially small and using the self-adjoint orthogonality $\int_{\mathbb{R}} q_0'\mathcal{L}U_1\,\mathrm{d}z = 0$, the domain integrals can be evaluated at leading order,
	\begin{equation}
		\int_\Omega [w(u)]^2\,\mathrm{d}\Omega = \ell_\phi a_\phi^2 c_0 \vert{}\Gamma_\varepsilon\vert{} + \mathcal{O}(\ell_\phi^2), \qquad \int_\Omega w(u)\mu_\phi\,\mathrm{d}\Omega = -\ell_\phi a_\phi \gamma_\phi c_0 \int_{\Gamma_\varepsilon} \mathcal{K}\,\mathrm{d}S + \mathcal{O}(\ell_\phi^2).
	\end{equation}
	Dividing the two asymptotic expressions and using $a_\phi^{-1} = \phi_b c_0$ yields $\lambda_0 = -\gamma_\phi \phi_b c_0 \overline{\mathcal{K}}$ with $\overline{\mathcal{K}} = \frac{1}{\vert{}\Gamma_\varepsilon\vert{}} \int_{\Gamma_\varepsilon} \mathcal{K}\,\mathrm{d}S$. This self-consistently confirms that $\lambda_\phi = \mathcal{O}(1)$. Substituting $\lambda_0$ back into Eq. (\ref{FSC}) determines the first-order correction to normal velocity $\mathcal{V}_1 = -M_\phi \gamma_\phi \phi_b^{-1} (\mathcal{K} - \overline{\mathcal{K}})$. Thus, the inner equation for $U_1$ in Eq. (\ref{inner_1}) can be simplified by
	\begin{equation}
		\mathcal{L}U_1 = -\frac{\phi_b}{M_\phi} z \partial_n v_n\vert{}_\Gamma q_0'.
		\label{eq_L}
	\end{equation}
	To solve this linear inhomogeneous equation (\ref{eq_L}) under the conditions (\ref{matching_1}) and (\ref{matching_2}), we can apply the reduction-of-order substitution $U_1(z) = v(z)q_0'(z)$ to derive
	\begin{equation}
		\frac{\mathrm{d}}{\mathrm{d}z}\left[ (q_0'(z))^2 v'(z) \right] = \frac{\phi_b}{\gamma_\phi M_\phi} z \partial_n v_n\vert{}_\Gamma (q_0'(z))^2.
	\end{equation}
	Integrating this relation twice with $U_1(0) = 0$ yields the following explicit solution,
	\begin{equation}
		U_1(z) = \frac{\phi_b}{\gamma_\phi M_\phi} \partial_n v_n\vert{}_\Gamma q_0'(z) \int_0^z \frac{1}{(q_0'(\eta))^2} \left[ \int_{-\infty}^\eta \zeta (q_0'(\zeta))^2 \,\mathrm{d}\zeta \right] \mathrm{d}\eta.
	\end{equation}
	By expressing $U_1(z)$ as a quotient and applying L'Hôpital's rule twice, we obtain $\lim_{z\to\pm\infty}U_1(z)=0$. Since $\partial_n v_n\vert{}_\Gamma$ is uniformly bounded on $\Gamma_\varepsilon(t) $, the correction is uniformly bounded, with $\Vert{}U_1\Vert{}_{L^\infty} \le C$.

	To conclude, the outer solution satisfies $u_{\text{out}}^\pm = \pm 1 + \mathcal{O}(\ell_\phi^3)$ and the inner solution is $U(z,s,t) = q_0(z) + \ell_\phi U_1(z,s,t) + \mathcal{O}(\ell_\phi^2)$. Combining the inner and outer solutions through matched asymptotic assembly, we can obtain the second-order uniform asymptotic approximation in the entire domain $\Omega$,
	\begin{equation}
		u_\varepsilon(\mathbf{x},t) = u_{\text{unif},0}(\mathbf{x},t) + \ell_\phi u_{\text{unif},1}(\mathbf{x},t) + \mathcal{O}(\ell_\phi^2),
	\end{equation}
	where the leading-order profile is given by the hyperbolic-tangent transition,
	\begin{equation}
		u_{\text{unif},0}(\mathbf{x},t) = q_0\left(\frac{r(\mathbf{x},t)}{\ell_\phi}\right) = \tanh\left(\frac{r(\mathbf{x},t)}{\ell_\phi}\right),
	\end{equation}
	and there exists a positive constant $C$, independent of $\ell_\phi$, such that the first-order composite correction is uniformly bounded,
	\begin{equation}
		\Vert{}u_{\text{unif},1}\Vert{}_{L^\infty(\Omega)} \le C.
	\end{equation}




\begin{thebibliography}{99}
		
		
		\bibitem{TakagiARFM2011}
		Takagi S, Matsumoto Y. Surfactant effects on bubble motion and bubbly flows. Annual Review of Fluid Mechanics, 2011, 43(1): 615-636.
		
		\bibitem{MyersJWS2020}
		Myers D. Surfactant science and technology. John Wiley \& Sons, 2020.
		
		\bibitem{SeethepalliSPEJ2004}
		Seethepalli A, Adibhatla B, Mohanty K K. Physicochemical interactions during surfactant flooding of fractured carbonate reservoirs. SPE Journal, 2004, 9(04): 411-418.
		
		\bibitem{BaretLC2012}
		Baret J C. Surfactants in droplet-based microfluidics. Lab on a Chip, 2012, 12(3): 422-433.
		
		
		\bibitem{ZhangJCP2006}
		Zhang J, Eckmann D M, Ayyaswamy P S. A front tracking method for a deformable intravascular bubble in a tube with soluble surfactant transport. Journal of Computational Physics, 2006, 214(1): 366-396.
		
		
		\bibitem{MuradogluJCP2008}
		Muradoglu M, Tryggvason G. A front-tracking method for computation of interfacial flows with soluble surfactants. Journal of Computational Physics, 2008, 227(4): 2238-2262.
		
		
		\bibitem{LaiJCP2008}
		Lai M C, Tseng Y H, Huang H. An immersed boundary method for interfacial flows with insoluble surfactant. Journal of Computational Physics, 2008, 227(15): 7279-7293.
		
		
		\bibitem{XuJCP2006}
		Xu J J, Li Z, Lowengrub J, et al. A level-set method for interfacial flows with surfactant. Journal of Computational Physics, 2006, 212(2): 590-616.
		
		
		\bibitem{XuJCP2018}
		Xu J J, Shi W, Lai M C. A level-set method for two-phase flows with soluble surfactant. Journal of Computational Physics, 2018, 353: 336-355.
		
		
		\bibitem{JamesJCP2004}
		James A J, Lowengrub J. A surfactant-conserving volume-of-fluid method for interfacial flows with insoluble surfactant. Journal of Computational Physics, 2004, 201(2): 685-722.
		
	
		\bibitem{XueJCP2026}
		Xue Z H, Magnaudet J, Zhang J. A sharp and conservative method for modeling interfacial flows with insoluble surfactants in the framework of a geometric volume-of-fluid approach. Journal of Computational Physics, 2026, 559, 114936.
		
		
		\bibitem{TeigenCMS2009}
		Teigen K E, Li X, Lowengrub J, et al. A diffuse-interface approach for modeling transport, diffusion and adsorption/desorption of material quantities on a deformable interface. Communications in Mathematical Sciences, 2009, 7(4): 1009–1037.
		
		\bibitem{TeigenJCP2011}
		Teigen K E, Song P, Lowengrub J, et al. A diffuse-interface method for two-phase flows with soluble surfactants. Journal of Computational Physics, 2011, 230(2): 375-393.
		
		
		\bibitem{LiuJFM2018}
		Liu H, Ba Y, Wu L, et al. A hybrid lattice Boltzmann and finite difference method for droplet dynamics with insoluble surfactants. Journal of Fluid Mechanics, 2018, 837: 381-412.
		
		
		\bibitem{HuAML2021}
		Hu Y. A diffuse interface–lattice Boltzmann model for surfactant transport on an interface. Applied Mathematics Letters, 2021, 111: 106614.
		

		\bibitem{YamashitaJCP2024}
		Yamashita S, Matsushita S, Suekane T. Conservative transport model for surfactant on the interface based on the phase-field method. Journal of Computational Physics, 2024, 516: 113292.
		
		
		\bibitem{JainJCP2024}
		Jain S S. A model for transport of interface-confined scalars and insoluble surfactants in two-phase flows. Journal of Computational Physics, 2024, 515: 113277.
		

		
		\bibitem{ZhanPRE2025}
		Zhan C, Liang H, Chai Z, et al. Phase-field-based lattice Boltzmann method for the transport of insoluble surfactant in two-phase flows. Physical Review E, 2025, 112(3): 035303.
		
		
		\bibitem{HaoJCP2025}
		Hao H, Li X, Xu L, et al. Profile-preserving phase-field model for surfactant transport and adsorption-desorption in two-phase flow systems. Journal of Computational Physics, 2026, 546: 114510.
		
		
		\bibitem{HaoJCP2025a}
		Hao H, Li X, Liu T, et al. Enhanced profile-preserving phase-field model of two-phase flow with surfactant interfacial transport and Marangoni effects. Journal of Computational Physics, 2025, 536: 114058.
		
		
		\bibitem{VanRA2006}
		Van der Sman R G M, Van der Graaf S. Diffuse interface model of surfactant adsorption onto flat and droplet interfaces. Rheologica Acta, 2006, 46(1): 3-11.
		
		\bibitem{LiuJCP2010}
		H. Liu, Y. Zhang, Phase-field modeling droplet dynamics with soluble surfactants, Journal of Computational Physics 229 (2010) 9166-9187.
		
		
		\bibitem{YunAMC2014}
		Yun A, Li Y, Kim J. A new phase-field model for a water–oil-surfactant system. Applied Mathematics and Computation, 2014, 229: 422-432.
	
	
		\bibitem{ZongPOF2020}
		Zong Y, Zhang C, Liang H, et al. Modeling surfactant-laden droplet dynamics by lattice Boltzmann method. Physics of Fluids, 2020, 32(12).
		
	
		\bibitem{YangCMAME2021}
		Yang X. A novel fully-decoupled, second-order and energy stable numerical scheme of the conserved Allen–Cahn type flow-coupled binary surfactant model. Computer Methods in Applied Mechanics and Engineering, 2021, 373: 113502.


		\bibitem{LiangJCP2026}
		Liang H, Tong S, Wang L, et al. A thermodynamically consistent and conservative diffuse-interface model for surfactant-laden two-phase dynamics. Journal of Computational Physics, 2026, 548: 114584.		
		
		
		\bibitem{YangJCP2022}
		Yang J, Tan Z, Kim J. Linear and fully decoupled scheme for a hydrodynamics coupled phase-field surfactant system based on a multiple auxiliary variables approach. Journal of Computational Physics, 2022, 452: 110909.		
		
		\bibitem{CahnJCP1977}
		Cahn J W. Critical point wetting. The Journal of Chemical Physics, 1977, 66(8): 3667-3672.
		
		\bibitem{YueJCP2010}
		Yue P, Zhou C, Feng J J. Sharp-interface limit of the Cahn–Hilliard model for moving contact lines. Journal of Fluid Mechanics, 2010, 645: 279-294.
		
		
		
		\bibitem{ZhuJFM2019}
		Zhu G, Kou J, Yao B, et al. Thermodynamically consistent modelling of two-phase flows with moving contact line and soluble surfactants. Journal of Fluid Mechanics, 2019, 879: 327-359.
		
		
		\bibitem{ZhuJCP2020}
		Zhu G, Kou J, Yao J, et al. A phase-field moving contact line model with soluble surfactants. Journal of Computational Physics, 2020, 405: 109170.
		
		
		\bibitem{WangJCP2022}
		Wang C, Guo Y, Zhang Z. Unconditionally energy stable and bound-preserving schemes for phase-field surfactant model with moving contact lines. Journal of Scientific Computing, 2022, 92(1): 20.
		
		
		\bibitem{WangJCP2024}
		Wang C, Lai M C, Zhang Z. An improved phase-field algorithm for simulating the impact of a drop on a substrate in the presence of surfactants. Journal of Computational Physics, 2024, 499: 112722.
		
		
		\bibitem{WuAML2024}
		Wu Y, Tan Z. An immersed boundary-phase field fluid-surfactant model with moving contact lines on curved substrates. Applied Mathematics Letters, 2024, 153: 109072.
		
		
		\bibitem{ZhangICHMT2025}
		Zhang S, Wang X, Zhu Y, et al. A second-order phase field-lattice Boltzmann model for two-phase flows with moving contact line and soluble surfactants. International Communications in Heat and Mass Transfer, 2025, 165: 109104.
		
		
		\bibitem{YaoIJMF2026}
		Yao L, Wang P, Zhang L, et al. A novel phase-field lattice Boltzmann method moving contact line model with soluble surfactants. International Journal of Multiphase Flow, 2026, 197: 105608.
		
		
		\bibitem{WuCMAME2026}
		Wu W, Zhang Z, Wei C. Primal-dual splitting methods for phase-field surfactant model with moving contact lines. Computer Methods in Applied Mechanics and Engineering, 2026, 450: 118670.
		
		
		
		
		\bibitem{StoebeLangmuir1996}
		Stoebe T, Lin Z, Hill R M, et al. Surfactant-enhanced spreading. Langmuir, 1996, 12(2): 337-344.
		
		
		\bibitem{KarapetsasJFM2011}
		Karapetsas G, Craster R V, Matar O K. On surfactant-enhanced spreading and superspreading of liquid drops on solid surfaces. Journal of Fluid Mechanics, 2011, 670: 5-37.
		
		
		\bibitem{BeraLangmuir2016}
		Bera B, Duits M H G, Cohen Stuart M A, et al. Surfactant induced autophobing. Soft Matter, 2016, 12(20): 4562-4571.
		
		
		\bibitem{TadmorLangmuir2019}
		Tadmor R, Baksi A, Gulec S, et al. Drops that change their mind: spontaneous reversal from spreading to retraction. Langmuir, 2019, 35(48): 15734-15738.
		
		
	
		\bibitem{KannanJFM2026}
		Kannan P K, Iqbal K T, Díaz D, et al. Solid adsorption: the missing mechanism for surfactant contact lines–a phase-field approach. Journal of Fluid Mechanics, 2026, 1038: A49.
		
		
		
		
		\bibitem{CahnJCP1958}
		Cahn J W, Hilliard J E. Free energy of a nonuniform system. I. Interfacial free energy. The Journal of Chemical Physics, 1958, 28(2): 258-267.
		
		
		\bibitem{AllenAM1976}
		Allen S M, Cahn J W. Mechanisms of phase transformations within the miscibility gap of Fe-rich Fe-Al alloys. Acta Metallurgica, 1976, 24(5): 425-437.
		
		
		\bibitem{ChaiIJHMT2018}
		Chai Z, Sun D, Wang H, et al. A comparative study of local and nonlocal Allen-Cahn equations with mass conservation. International Journal of Heat and Mass Transfer, 2018, 122: 631-642.
		

		\bibitem{ChiuJCP2011}
		Chiu P H, Lin Y T. A conservative phase field method for solving incompressible two-phase flows. Journal of Computational Physics, 2011, 230(1): 185-204.


		\bibitem{RubinsteinIMAJAM1992}
		Rubinstein J, Sternberg P. Nonlocal reaction—diffusion equations and nucleation. IMA Journal of Applied Mathematics, 1992, 48(3): 249-264.
		
		
		\bibitem{BrasselMMAS2011}
		M. Brassel, E. Bretin, A modified phase field approximation for mean curvature flow with conservation of the volume, Mathematical Methods in the Applied Sciences 34 (2011) 1157-1180.
		
		
		\bibitem{KimIJES2014}
		Kim J, Lee S, Choi Y. A conservative Allen–Cahn equation with a space–time dependent Lagrange multiplier. International Journal of Engineering Science, 2014, 84: 11-17.
		
		

		
		
		
		
		
		
		
		
		\bibitem{KwakAML2022}
		S. Kwak, J. Yang, J. Kim, A conservative Allen-Cahn equation with a curvature-dependent Lagrange multiplier, Applied Mathematics Letters 126 (2022) 107838.
		
		\bibitem{ChoiEABE2023}
		Y. Choi, J. Kim, Maximum principle preserving and unconditionally stable scheme for a conservative Allen-Cahn equation, Engineering Analysis with Boundary Elements 150 (2023) 111-119.
		
		
		\bibitem{YangCNSNS2024}
		J. Yang, J. Kim, Unconditionally maximum principle-preserving linear method for a mass-conserved Allen-Cahn model with local Lagrange multiplier, Communications in Nonlinear Science and Numerical Simulation 139 (2024) 108327.
		
		
		\bibitem{HwangEABE2026}
		Hwang Y, Ma J, Nam Y, et al. Review of the conservative Allen–Cahn equations. Engineering Analysis with Boundary Elements, 2026, 187: 106713.
		
		
		
		\bibitem{ZhouPRSA2025}
		Zhou Z, Jiang W, Qian T, et al. A new phase-field model for anisotropic surface diffusion: anisotropic Cahn–Hilliard equation with improved conservation. Proceedings of the Royal Society A, 2025, 481(2326): 20250649.
		
		\bibitem{MusilArxiv2026}
		Musil J. Third-Order Geometric-Volume Conservation in Cahn--Hilliard Models. arXiv preprint arXiv:2602.01497, 2026.
		
		
		
		\bibitem{ChangCSA1995}
		Chang C H, Franses E I. Adsorption dynamics of surfactants at the air/water interface: a critical review of mathematical models, data, and mechanisms. Colloids and Surfaces A: Physicochemical and Engineering Aspects, 1995, 100: 1-45.
		
		\bibitem{JuJML2017}
		Ju H, Jiang Y, Geng T, et al. Equilibrium and dynamic surface tension of quaternary ammonium salts with different hydrocarbon chain length of counterions. Journal of Molecular Liquids, 2017, 225: 606-612.
		
		\bibitem{SoligoJCP2019}
		Soligo G, Roccon A, Soldati A. Coalescence of surfactant-laden drops by phase field method. Journal of Computational Physics, 2019, 376: 1292-1311.
		
		\bibitem{EngblomCICP2013}
		Engblom S, Do-Quang M, Amberg G, et al. On diffuse interface modeling and simulation of surfactants in two-phase fluid flow. Communications in Computational Physics, 2013, 14(4): 879-915.
		
		
		\bibitem{LiuMMS2022}
		Liu X, Chai Z, Zhan C, et al. A diffuse-domain phase-field lattice Boltzmann method for two-phase flows in complex geometries. Multiscale Modeling \& Simulation, 2022, 20(4): 1411-1436.
		
		
		
		
		
		
		
		
		\bibitem{ChaiPRE2020}
		Chai Z, Shi B. Multiple-relaxation-time lattice Boltzmann method for the Navier-Stokes and nonlinear convection-diffusion equations: Modeling, analysis, and elements. Physical Review E, 2020, 102(2): 023306.
		
		
		\bibitem{JohnsonCS1986}
		Johnson B A, Kreuter J, Zografi G. Effects of surfactants and polymers on advancing and receding contact angles. Colloids and Surfaces, 1986, 17(4): 325-342.
		
		
		\bibitem{GokhaleLangmuir2005}
		Gokhale S J, Plawsky J L, Wayner P C. Spreading, evaporation, and contact line dynamics of surfactant-laden microdrops. Langmuir, 2005, 21(18): 8188-8197.
		
		
		
		
	\end{thebibliography}
\end{document}